\newcommand{\ggg}{\left(2-2\alpha\right)}
\newcommand{\besterNenner}{{-\ln (1+ \delta)+    \delta}}
\newcommand{\Bem}[1]{}
\newtheorem{theorem}{Theorem}
\newtheorem{lemma}{Lemma}
\def\numx#1e#2{{#1}\mathrm{e}{#2}} % numbers for tables 
\begin{document}
\newcommand{\MovjTytulv}{Machine Learning Friendly Set Version of Johnson-Lindenstrauss Lemma}
\newcommand{\MaInstytucja}{Institute of Computer Science of the Polish Academy of Sciences\\ul. Jana Kazimierza 5, 01-248 Warszawa
Poland}
\title{
\MovjTytulv  }
%\titlerunning{\MovjTytulv}
\author{
Mieczys{\l}aw A. K{\l}opotek (\url{klopotek@ipipan.waw.pl})  
\\ \MaInstytucja
}

\newenvironment{keywords}{%
\noindent{\bf Keywords:}
}
  
\maketitle

\begin{abstract}
In this paper we make a novel use of the Johnson-Lindenstrauss Lemma.
The Lemma has an existential form saying that there exists 
a JL transformation $f$ of the data points into  lower dimensional space such that all of them fall into predefined error range $\delta$. 

We formulate in this paper a theorem stating that we can choose the target dimensionality in a random projection type JL linear transformation in such a way that with probability $1-\epsilon$  all of them fall into predefined error range $\delta$ for any user-predefined failure probability $\epsilon$. 
 
This result is important for applications such a data clustering where we want to have a priori dimensionality reducing transformation instead of trying out a (large) number of them, as with traditional Johnson-Lindenstrauss Lemma.
In particular, we take a closer look at the $k$-means algorithm and prove that 
a good solution in the projected space is also a good solution in the original space. 
Furthermore, under proper assumptions local optima in the original space 
are also ones in the projected space.
We define also conditions for which clusterability property of the original space is transmitted to the projected space, so that special case algorithms for the original space are also applicable in the projected space.        
 \end{abstract}
\begin{keywords}
  Johnson-Lindenstrauss Lemma,  random projection, sample distortion, dimensionality reduction, linear JL transform, $k$-means algorithm, clusterability retention, 
\end{keywords}
\noindent
\pagebreak 
%-----------------------------------

\section{Introduction}

Dimensionality reduction plays an important role in many areas of data processing, and especially in machine learning (cluster analysis, classifier learning, model validation, data visualisation etc.). 

Usually it is associated with manifold learning, that is a belief that the data lie in fact in a low dimensional subspace that needs to be identified and the data projected onto it so that the number of degrees of freedom is reduced and as a consequence also sample sizes can be smaller without loss of reliability. Techniques like reduced $k$-means \cite{Terada:2014}, 
 PCA (Principal Component Analysis),
 Kernel PCA,
LLE (Locally Linear Embedding), 
  LEM (Laplacian Eigenmaps),
 MDS (Metric Multidimensional Scaling), 
 Isomap,
SDE (Semidefinite Embedding), 
just to mention a few. 
\Bem{
http://www.stat.washington.edu/courses/stat539/spring14/Resources/tutorial_nonlin-dim-red.pdf
}%Bem

But there exists still another possibility of approaching the dimensionality reduction problems, in particular when such intrinsic subspace where data is located cannot be identified. 
The problem of choice of the subspace has been surpassed by several authors by so-called random projection, applicable in particularly highly dimensional spaces (tens of thousands of dimensions) and correspondingly large data sets (of at least hundreds of points).   

The starting point here is the Johnson-Lindenstrauss Lemma~\cite{Johnson:1982}. 
Roughly speaking it states that there exists a linear\footnote{JL Lemma speaks about a general transformation, but many researchers look just for linear ones.} mapping from a higher dimensional space   into a sufficiently high dimensional subspace that will preserve approximately the distances between points, as needed e.g. by $k$-means algorithm \cite{CLU:AV07}. 

To be more formal 
consider a set $\mathfrak{Q}$ of $m$ objects $\mathfrak{Q}=\{1,\dots,m\}$. 
An object $i\in \mathfrak{Q}$   may have a  representation $\mathbf{x_i}\in \mathbb{R}^n$. Then the set of these representations will be denoted by $Q$. 
An object $i\in \mathfrak{Q}$   may have a  representation  $\mathbf{x'_i}\in \mathbb{R}^{n'}$, in a different space.  Then the set of these representations will be denoted by $Q'$. 

 With this notation let us state:

\begin{theorem}{(Johnson-Lindenstrauss)} \label{thm:JL_lemma}
Let $\delta \in (0,\frac12)$.
Let $\mathfrak{Q}$ be a set of  $m$ objects
and $Q$ 
 - a set of points  representing them in  $\mathbb{R}^n$, 
and let $n'\ge\frac{C \ln m}{\delta^2}$, where $C$ is a sufficiently large constant (e.g.20). 
  There
exists a Lipschitz mapping
$f: \mathbb{R}^n\rightarrow \mathbb{R}^{n'}$ such that for all $\mathbf{u},\mathbf{v} \in Q$
\begin{equation}\label{eq:JL}
(1-\delta)\| \mathbf{u}-\mathbf{v}\|^2 \le \|f(\mathbf{u})-f(\mathbf{v})\|^2  \le  (1+\delta)\| \mathbf{u}-\mathbf{v}\|^2 
\end{equation}
\end{theorem}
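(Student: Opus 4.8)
\medskip
\noindent\textbf{Proof plan.}
The plan is to follow the classical random-projection argument and establish the existence of $f$ probabilistically: I will show that a randomly drawn linear map satisfies \eqref{eq:JL} with positive probability, so at least one such map must exist. Concretely, let $R$ be an $n'\times n$ matrix with independent $\mathcal{N}(0,1)$ entries (independent Rademacher $\pm1$ entries would serve equally well) and set $f(\mathbf{x})=\frac{1}{\sqrt{n'}}\,R\mathbf{x}$. Any such $f$ is linear, hence Lipschitz, so the only thing that needs to be verified is the two-sided distortion bound \eqref{eq:JL} for all pairs $\mathbf{u},\mathbf{v}\in Q$ simultaneously.

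First I would reduce the pairwise statement to a single concentration estimate. Because $f$ is linear, $f(\mathbf{u})-f(\mathbf{v})=f(\mathbf{u}-\mathbf{v})$, and since both sides of \eqref{eq:JL} are positively homogeneous of degree $2$ in $\mathbf{u}-\mathbf{v}$, it suffices to bound the deviation of $\|f(\mathbf{y})\|^2$ from $1$ for an arbitrary fixed unit vector $\mathbf{y}\in\mathbb{R}^n$ (the case $\mathbf{u}=\mathbf{v}$ being trivial). For Gaussian $R$ each coordinate of $R\mathbf{y}$ is $\mathcal{N}(0,1)$ and the coordinates are independent, so $n'\|f(\mathbf{y})\|^2=\|R\mathbf{y}\|^2$ is a $\chi^2_{n'}$ variable of mean $n'$. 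The core of the proof is then a Chernoff-type tail bound obtained from the moment generating function of the $\chi^2_{n'}$ law:
\begin{equation}\label{eq:JLconcentration}
\Pr\left[\,\bigl|\,\|f(\mathbf{y})\|^2-1\,\bigr|>\delta\,\right]\;\le\;2\exp\!\left(-\frac{n'}{2}\bigl(\besterNenner\bigr)\right)\;\le\;2\exp\!\left(-\frac{n'\delta^2}{C_0}\right)\!,
\end{equation}
where the last inequality uses the elementary fact that $\besterNenner\ge c\,\delta^2$ for $\delta\in(0,\tfrac12)$ with an absolute constant $c>0$, and $C_0=2/c$.

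The final step is a union bound over the at most $\binom{m}{2}<m^2/2$ difference vectors $\mathbf{u}-\mathbf{v}$: the probability that \eqref{eq:JL} fails for some pair is at most $m^2\exp(-n'\delta^2/C_0)$. Substituting $n'\ge C\ln m/\delta^2$ bounds this by $m^{\,2-C/C_0}$, which is $<1$ as soon as $C>2C_0$, and with the standard values of the constants $C=20$ is admissible. Hence the random map $f$ satisfies \eqref{eq:JL} for all pairs with positive probability, which proves the existence of a map with the asserted property. The main obstacle I anticipate is purely the quantitative concentration step \eqref{eq:JLconcentration}: pinning down the $\chi^2$ tail exponent and checking the inequality $\besterNenner\ge c\,\delta^2$ on $(0,\tfrac12)$ carefully enough that the announced constant $C=20$ really works. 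The remaining ingredients --- linearity giving the Lipschitz bound, the homogeneity reduction, and the union bound --- are routine.
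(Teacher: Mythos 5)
Your proposal is correct and follows essentially the same route the paper takes: the paper never proves Theorem~\ref{thm:JL_lemma} directly (it is quoted as the classical result), but its Section~\ref{sec:derivation} derivation of Theorem~\ref{thm:Klvopotek_Lemma} is exactly your argument --- a tail bound on the squared norm of a randomly projected vector with Chernoff exponent $\frac{n'}{2}\left(\besterNenner\right)$, followed by a union bound over the $\binom{m}{2}$ difference vectors --- the only cosmetic difference being that the paper starts from Dasgupta--Gupta's projection-onto-a-random-subspace bounds (which depend on $n$) and then simplifies via $\ln(1+x)<x$ to the same $n$-free exponent you obtain directly from the $\chi^2_{n'}$ moment generating function. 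Your quantitative bookkeeping also checks out: $\besterNenner\ge\delta^2/3$ on $(0,\tfrac12)$ gives $C_0=6$, so any $C>12$, in particular $C=20$, makes the failure probability $m^{2-C/C_0}<1$.
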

% http://ttic.uchicago.edu/~gregory/courses/LargeScaleLearning/lectures/jl.pdf

% same supported by http://cs.brown.edu/~matteo/augustseminar/papers/Matousek-VariantsJohnsonLindenstrauss.pdf

%http://math.mit.edu/~bandeira/2015_18.S096_5_Johnson_Lindenstrauss.pdf this is a different formulation, nearly like our, with error reduction

% randomized polynomial time
% 

% another formulation http://www.cs.utah.edu/~avishek/pubs/jl-survey.pdf literature overview

% JL paper
% http://tigerprints.clemson.edu/cgi/viewcontent.cgi?article=1121&context=grads_symposium

% https://cs.au.dk/~larsen/papers/jl_transform.pdf optimality linear

% https://www.math.ucdavis.edu/~strohmer/courses/180BigData/180lecture_jl.pdf a stanger formulation  

A number of proofs and applications of this theorem have been proposed which in fact do not prove the theorem as such but rather create a probabilistic version of it, like e.g. \cite{Dasgupta:2003,Achlioptas:2003,Ailon:2006,Indyk:2007,Larsen:2016,Chiong:2016}.
For an overview of Johnson-Lindenstrauss Lemma variants see 
e.g. \cite{Matousek:2008}. 
%
% http://www-cs.engr.ccny.cuny.edu/~peter/fwcg13/abstracts/d_letscher.pdf formulation supporting our line

Essentially the idea behind these probabilistic proofs is as follows:
It is proven that 
the probability of reconstructing the length of a random vector
from a projection onto a subspace 
within a reasonable error boundaries is high. 

One then inverts the thinking and states that 
the probability of reconstructing the length of a given vector
from a projection onto a (uniformly selected) random subspace 
within a reasonable error boundaries is high. 

But uniform sampling of high dimensional subspaces is a hard task. So instead $n'$ vectors with random coordinates are sampled from the original $n$-dimensional space and one uses them as a coordinate system in the $n'$-dimensional subspace which is a much simpler process. One hopes that the sampled vectors will be orthogonal (and hence the coordinate system will be orthogonal) which in case of vectors with thousands of coordinates is reasonable. 
That means we create a matrix $M$ of $n'$ rows and $n$ columns
as follows: for each row $i$ we sample $n$ numbers from $\mathcal{N}(0,1)$ forming a row vector $\mathbf{a}^T_i$. We normalize it obtaining the row vector 
$\mathbf{b}^T_i=\mathbf{a}^T_i \cdot (\mathbf{a}^T_i\mathbf{a}_i)^{-1/2}$. This becomes the $i$th row of the matrix $M$. 
Then for any data point $\mathbf{x}$ in the original space  
its random projection is obtained as $ \mathbf{x'}=M\mathbf{x}$. 

Then the mapping we seek is the projection multiplied by a suitable factor. 

It is claimed afterwards that this mapping is distance-preserving not only for a single vector, but also for large sets of points with some, usually very small probability, as Dasgupta and Gupta  \cite{Dasgupta:2003} maintain. Via  applying the above process many times one can finally get the mapping $f$ that is needed.
That is each time we sample a subspace from the space of subspaces and check if condition expressed by equation 
(\ref{eq:JL})  holds for all the points, and if not,  we sample again, while we have the reasonable hope that we will get the subspace of interest after a finite number of steps with probability that we assume.

In this paper we explore the following flaw of the mentioned approach: 
If we want to apply for example a $k$-means clustering algorithm, we are in fact not interested in resampling the subspaces in order to find a convenient one so that the distances are sufficiently preserved.
Computation over and over again of $m^2/2$ distances between the points in the projected space may turn out to be much more expensive than computing $O(mk)$ distances during $k$-means clustering (if $m \gg k$) in the original space. In fact we are primarily interested in clustering data.  
But we do not have any criterion  for the $k$-means algorithm that would say that this particular subspace  is the right one via e.g. minimization of $k$-means criterion (and in fact for any other clustering algorithm). 

Therefore, we rather seek a scheme that will allow us to say that by a certain random sampling we have already found the subspace that we sought with a sufficiently high probability. 
As far as we know, this is the first time such a problem has been posed. 

To formulate claims concerning $k$-means, we need to introduce additional notation. 
Let us denote with $\mathfrak{C}$ a partition of $\mathfrak{Q}$ into $k$ clusters $\{C_1,\dots,C_k\}$.  
For any $i\in \mathfrak{Q}$ let $\mathfrak{C}(i)$ denote the cluster $C_j$ to which $i$ belongs. 
For any set of objects $C_j$ let 
$\boldsymbol\mu(C_j)=\frac 1 {|C_j|} \sum_{i\in C_j} \mathbf{x_i}$ and
$\boldsymbol\mu'(C_j)=\frac 1 {|C_j|} \sum_{i\in C_j} \mathbf{x'_i}$.

Under this notation the $k$-means  cost function may be written as 
\begin{align}
 \mathfrak{J}(Q, \mathfrak{C})=\sum_{ i\in \mathfrak{Q} }\|\mathbf{x }_i-\boldsymbol\mu(\mathfrak{C}(i)) \|^2
\label{eq:defJ}
\\  %
 \mathfrak{J}(Q',\mathfrak{C})=\sum_{ i\in \mathfrak{Q} }\|\mathbf{x'}_i-\boldsymbol\mu'(\mathfrak{C}(i) \|^2
\label{eq:defJprim}
\end{align}
for the sets $Q,Q'$.

Our contribution is as follows:
\begin{itemize}
\item We formulate and prove a set version of JL Lemma - see Theorem \ref{thm:Klvopotek_Lemma}.
\item Based on it we demonstrate that a good solution to $k$-means problem in the projected space is also a good one in the original space 
- see Theorem \ref{thm:Klvopotek_k_means_projected_cost}.
\item We show that local  $k$-means  minima in the original and the projected spaces match under proper conditions
- see Theorems \ref{thm:Klvopotek_k_means_projected_local_minimum},
\ref{thm:Klvopotek_k_means_inverted_local_minimum}.
\item We demonstrate that 
a perfect  $k$-means algorithm in the projected space is a constant factor approximation of the global optimum in the original space 
- see Theorem \ref{thm:Klvopotek_k_means_projected_global_minimum}
\item We prove that the projection preserves  several clusterability properties 
- see Theorems \ref{thm:clusterability-multPertRobust},
\ref{thm:clusterability-sigmaSeparatedness}, 
\ref{thm:clusterability-cSigmaApproximation}.
\ref{thm:clusterability-betaCentreStability}
and
\ref{thm:clusterability-weakDeletionStability}. 
\end{itemize} 
 
For $k$-means in particular we make the following claim:
\begin{theorem}\label{thm:Klvopotek_k_means_projected_cost}
Let $Q$ be a set of $m$ representatives of objects from $\mathfrak{Q}$  in an   $n$-dimensional orthogonal coordinate system $\mathcal{C}_n$.  
{ } Let $\delta \in (0,\frac12)$, $\epsilon \in (0,1)$.
and let 
\begin{equation}\label{eq:nprime_computation}
n'\ge 2\frac{-\ln \epsilon +   2 \ln( m)  } 
{\besterNenner}
\end{equation}
Let $\mathcal{C}_{n'}$ be a randomly selected (via sampling from a normal distribution)
$n'$-dimensional orthogonal coordinate system. 
Let the set $Q'$ consist of $m$ objects such that for each $i\in\mathfrak{Q}$,   
$\mathbf{x'}_i\in Q'$  is a projection of  $\mathbf{x}_i\in Q$   onto $\mathcal{C}_{n'}$.
If $\mathfrak{C}$ is a partition of $\mathfrak{Q}$,
then  
\begin{equation}\label{eq:Klvopotek_k_means_projected_cost}
(1-\delta) \mathfrak{J}(Q,\mathfrak{C}) \le \frac{n}{n'}
\mathfrak{J}(Q',\mathfrak{C})  \le (1+\delta)\mathfrak{J}(Q,\mathfrak{C})
\end{equation}
holds with probability of at least $1-\epsilon$.
\\ 
\end{theorem}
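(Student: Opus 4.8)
The plan is to reduce the statement about the $k$-means cost to a statement about all pairwise squared distances and then invoke the set version of the Johnson--Lindenstrauss Lemma.

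First I would record the elementary identity, valid for any cluster $C_j$,
\[
\sum_{i\in C_j}\|\mathbf{x}_i-\boldsymbol\mu(C_j)\|^2=\frac{1}{2|C_j|}\sum_{i,i'\in C_j}\|\mathbf{x}_i-\mathbf{x}_{i'}\|^2 ,
\]
which follows by expanding both sides and using $\sum_{i\in C_j}\mathbf{x}_i=|C_j|\,\boldsymbol\mu(C_j)$; the same identity holds verbatim for the projected points, with $\mathbf{x}'_i$ and $\boldsymbol\mu'(C_j)$ in place of $\mathbf{x}_i$ and $\boldsymbol\mu(C_j)$. Summing over the $k$ clusters gives
\[
\mathfrak{J}(Q,\mathfrak{C})=\sum_{j=1}^{k}\frac{1}{2|C_j|}\sum_{i,i'\in C_j}\|\mathbf{x}_i-\mathbf{x}_{i'}\|^2 ,
\qquad
\mathfrak{J}(Q',\mathfrak{C})=\sum_{j=1}^{k}\frac{1}{2|C_j|}\sum_{i,i'\in C_j}\|\mathbf{x}'_i-\mathbf{x}'_{i'}\|^2 .
\]
Hence, if the event
\[
(1-\delta)\|\mathbf{x}_i-\mathbf{x}_{i'}\|^2\le\frac{n}{n'}\|\mathbf{x}'_i-\mathbf{x}'_{i'}\|^2\le(1+\delta)\|\mathbf{x}_i-\mathbf{x}_{i'}\|^2
\]
holds simultaneously for all pairs $i,i'\in\mathfrak{Q}$, then multiplying each such inequality by the nonnegative weight $1/(2|C_j|)$ and summing yields (\ref{eq:Klvopotek_k_means_projected_cost}) deterministically. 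So it suffices to show that this pairwise event has probability at least $1-\epsilon$ as soon as $n'$ satisfies (\ref{eq:nprime_computation}) --- and that is exactly the set version of the JL Lemma, Theorem~\ref{thm:Klvopotek_Lemma}, applied to the $m$ points of $Q$.

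For completeness I would also indicate the probabilistic core. Fixing a pair and setting $\mathbf{w}=\mathbf{x}_i-\mathbf{x}_{i'}$, for a Gaussian random projection the quantity $\frac{n}{n'}\|\mathbf{w}'\|^2/\|\mathbf{w}\|^2$ has mean $1$ and is, up to the normalization of the sampled directions, distributed as $\chi^2_{n'}/n'$. The usual Chernoff/moment-generating-function bound for the chi-square distribution gives the upper tail
\[
\Pr\!\left[\tfrac{n}{n'}\|\mathbf{w}'\|^2>(1+\delta)\|\mathbf{w}\|^2\right]\le e^{-\frac{n'}{2}\left(\besterNenner\right)},
\]
while the lower-tail probability is bounded by $e^{-\frac{n'}{2}(-\delta-\ln(1-\delta))}$, which is no larger because $-\delta-\ln(1-\delta)\ge\delta-\ln(1+\delta)=\besterNenner$. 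A union bound over the fewer than $m^2/2$ unordered pairs and over the two tails bounds the overall failure probability by $m^2e^{-\frac{n'}{2}\left(\besterNenner\right)}$; requiring this to be at most $\epsilon$ and solving for $n'$ gives precisely (\ref{eq:nprime_computation}).

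The main obstacle is the bookkeeping forced by the normalization of the sampled directions: since the projection matrix has rows that are normalized Gaussian vectors, $\|\mathbf{w}'\|^2$ is not exactly a rescaled $\chi^2_{n'}$ variable, so the Chernoff step has to be replaced by the slightly more delicate concentration estimate underlying Theorem~\ref{thm:Klvopotek_Lemma} (equivalently, one works with the squared length of the coordinate vector of a fixed unit vector in a uniformly random $n'$-frame and controls the departure of the sampled directions from exact orthogonality). Everything else --- the identity, the monotonicity of nonnegative linear combinations of the inequalities, and the algebra converting the union bound into the lower bound on $n'$ --- is routine.
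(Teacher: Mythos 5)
Your argument is correct, and its backbone is the same as the paper's: express the within-cluster dispersion through pairwise squared distances, invoke the set version of the JL Lemma (Theorem~\ref{thm:Klvopotek_Lemma}) to control all pairs simultaneously with probability $1-\epsilon$, and observe that the cost inequality then follows deterministically as a nonnegative linear combination. The probabilistic sketch (Dasgupta--Gupta tails, the observation that $-\ln(1-\delta)-\delta\ge -\ln(1+\delta)+\delta$, the union bound over fewer than $m^2$ events, and the resulting bound on $n'$) reproduces exactly the derivation the paper uses to establish Theorem~\ref{thm:Klvopotek_Lemma}, so you are not re-proving anything that isn't already packaged there.

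The one genuine difference is the decomposition. The paper routes Theorem~\ref{thm:Klvopotek_k_means_projected_cost} through Lemma~\ref{lem:Klvopotek_Lemma_point_cluster_centre}, a \emph{per-point} statement, whose proof rests on the identity $\|\mathbf{x}_i-\boldsymbol\mu(\mathfrak{C}(i))\|^2=\frac{1}{|\mathfrak{C}(i)|}\sum_{j\in\mathfrak{C}(i)}\|\mathbf{x}_i-\mathbf{x}_j\|^2$. That identity is false as written: the correct version is $\frac{1}{|C|}\sum_{j\in C}\|\mathbf{x}_i-\mathbf{x}_j\|^2=\|\mathbf{x}_i-\boldsymbol\mu(C)\|^2+VAR(C)$, so the pointwise claim is off by the cluster variance and only becomes exact after summing over $i\in C$. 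Your aggregate identity $\sum_{i\in C_j}\|\mathbf{x}_i-\boldsymbol\mu(C_j)\|^2=\frac{1}{2|C_j|}\sum_{i,i'\in C_j}\|\mathbf{x}_i-\mathbf{x}_{i'}\|^2$ is precisely that summed form and is correct, so your route proves Theorem~\ref{thm:Klvopotek_k_means_projected_cost} on firmer ground while bypassing the problematic intermediate lemma. What you give up is the per-point conclusion itself, which the paper reuses later (Lemmas~\ref{lem:Klvopotek_Lemma_two_cluster_centres} and \ref{lem:Klvopotek_Lemma_no__cluster_change} and the local-minimum theorems depend on distance-to-centre preservation for individual points); for the theorem actually at hand, nothing is lost.
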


Note that the inequality  (\ref{eq:Klvopotek_k_means_projected_cost})
can be rewitten as 
\begin{equation*}
\left(1-\frac\delta{1+\delta}\right) \mathfrak{J}(Q',\mathfrak{C}) \le \frac{n'}{n}
\mathfrak{J}(Q,\mathfrak{C})  \le \left(1+\frac\delta{1-\delta}\right)\mathfrak{J}(Q',\mathfrak{C})
\end{equation*}

Furthermore 
\begin{theorem}\label{thm:Klvopotek_k_means_projected_local_minimum}
Under the assumptions and notation of Theorem \ref{thm:Klvopotek_k_means_projected_cost},
  if the partition $\mathfrak{C}^* $ constitutes a local minimum of $\mathfrak{J}(Q,\mathfrak{C})$ 
over $\mathfrak{C}$ (in the original space)
and if 
for any two clusters  
  $g=2(1-\alpha)$ times half of the distance between their centres is the gap between these clusters, where   $\alpha\in [0,1)$,
and 
\begin{equation}
\delta \le  \frac{ 1-\left(1-\frac g2\right)^2}{ \left(1-\frac g2\right)^2+(1+2p) }
\end{equation}
($p$ to be defined later by inequality (\ref{eq:pdefinition}))
then this same partition 
 is (in the projected space) also a local minimum of 
$\mathfrak{J}(Q',\mathfrak{C})$ 
over $\mathfrak{C}$, 
with probability of at least $1-\epsilon$.
\\ 
\end{theorem}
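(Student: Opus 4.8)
The plan is to deduce the statement from the quantitative cost-distortion bound of Theorem~\ref{thm:Klvopotek_k_means_projected_cost}, using the gap hypothesis to produce a margin that survives the distortion. First I would pin down what a local minimum is: $\mathfrak{C}^*$ is a local minimum of $\mathfrak{J}(Q,\cdot)$ exactly when no reassignment of a single object decreases the cost, which by the standard formula for the change of the $k$-means cost under a single move amounts, for every object $i$ with $\mathfrak{C}^*(i)=C_j$ and every other cluster $C_l$, to
\begin{equation*}
\frac{|C_j|}{|C_j|-1}\,\|\mathbf{x}_i-\boldsymbol\mu(C_j)\|^2\ \le\ \frac{|C_l|}{|C_l|+1}\,\|\mathbf{x}_i-\boldsymbol\mu(C_l)\|^2 ;
\end{equation*}
in the regime of interest this is essentially the nearest-centroid condition $\|\mathbf{x}_i-\boldsymbol\mu(C_j)\|^2\le\|\mathbf{x}_i-\boldsymbol\mu(C_l)\|^2$, the cluster-size prefactors being close to $1$ and their residual effect folded into the constant $p$ of~(\ref{eq:pdefinition}). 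The goal is to establish the same family of inequalities in the projected space, with $\mathbf{x}'_i,\boldsymbol\mu'$ in place of $\mathbf{x}_i,\boldsymbol\mu$, as that is precisely what makes $\mathfrak{C}^*$ a local minimum of $\mathfrak{J}(Q',\cdot)$.

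Second, I would transfer the distortion control from pairwise distances to point-to-centroid distances. Underneath Theorem~\ref{thm:Klvopotek_k_means_projected_cost} lies the event that all pairwise squared distances among the points of $Q$ are simultaneously reproduced within a factor $\tfrac{n'}{n}(1\pm\delta)$ in $Q'$; by the choice of $n'$ this event has probability at least $1-\epsilon$ and no further union bound is needed. Conditioning on it and combining the parallel-axis identity
\begin{equation*}
\|\mathbf{x}-\boldsymbol\mu(C)\|^2=\frac1{|C|}\sum_{r\in C}\|\mathbf{x}-\mathbf{x}_r\|^2-\frac1{|C|}\sum_{r\in C}\|\mathbf{x}_r-\boldsymbol\mu(C)\|^2
\end{equation*}
with linearity of the random projection (so that the projected centroid is the centroid of the projected points), one obtains point-to-centroid bounds of the form
\begin{equation*}
\tfrac{n'}{n}\bigl(1-\delta(1+2p)\bigr)\,\|\mathbf{x}_i-\boldsymbol\mu(C_l)\|^2\ \le\ \|\mathbf{x}'_i-\boldsymbol\mu'(C_l)\|^2 ,\qquad
\|\mathbf{x}'_i-\boldsymbol\mu'(C_j)\|^2\ \le\ \tfrac{n'}{n}(1+\delta)\,\|\mathbf{x}_i-\boldsymbol\mu(C_j)\|^2 ,
\end{equation*}
where $p$, fixed by~(\ref{eq:pdefinition}), bounds the mean squared radius of a cluster against the point-to-centroid distances at play --- this being exactly the slack created by passing from distances inside $Q$ (which the set JL guarantee controls) to distances to centroids (which it does not).

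Third, I would bring in the geometry. The gap hypothesis --- for any two clusters, $g$ times half the distance between their centres equals the gap between them --- forces each cluster to lie in a ball of radius at most $(1-\tfrac g2)$ times half the distance between its centre and any other centre, so the triangle inequality gives, for $\mathbf{x}_i\in C_j$ and every $l\neq j$,
\begin{equation*}
\|\mathbf{x}_i-\boldsymbol\mu(C_l)\|^2\ \ge\ \frac{1}{\bigl(1-\tfrac g2\bigr)^2}\,\|\mathbf{x}_i-\boldsymbol\mu(C_j)\|^2 .
\end{equation*}
Feeding the two displays of the second step into the primed local-minimum inequality, cancelling the common factor $\tfrac{n'}{n}$ and the cluster-size prefactors, and applying this ratio bound, the required inequality follows once
\begin{equation*}
(1+\delta)\bigl(1-\tfrac g2\bigr)^2\ \le\ 1-\delta(1+2p),
\end{equation*}
which rearranges to exactly $\delta\le\dfrac{1-(1-g/2)^2}{(1-g/2)^2+(1+2p)}$; since the good event has probability at least $1-\epsilon$, $\mathfrak{C}^*$ is then a local minimum of $\mathfrak{J}(Q',\cdot)$ with that probability.

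I expect the second step to be the main obstacle. The set version of Johnson--Lindenstrauss constrains only distances among members of $Q$, whereas local minimality is governed by distances to centroids, which lie outside $Q$; one must therefore account explicitly for the extra slack, and this is what forces both the constant $p$ and the \emph{asymmetric} error factors $(1+\delta)$ and $(1-\delta(1+2p))$ on the two sides. Securing this tight asymmetric bookkeeping --- rather than a crude common factor $1\pm\delta(1+2p)$, which would give a strictly weaker threshold --- is the delicate point, and cleanly disposing of the cluster-size prefactors $\tfrac{|C_l|}{|C_l|+1}$ and $\tfrac{|C_j|}{|C_j|-1}$ is presumably why the precise form of $p$ is left to~(\ref{eq:pdefinition}).
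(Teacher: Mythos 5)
Your proposal reaches the paper's threshold $(1+\delta)\left(1-\frac g2\right)^2 \le 1-\delta(1+2p)$ by essentially the same mechanism the paper uses: condition on the single set-JL event of Theorem \ref{thm:Klvopotek_Lemma} (no further union bound), transfer the $(1\pm\delta)$ control from pairwise distances to centroid-related distances at the price of a variance-driven slack $p$, and let the gap hypothesis supply the margin. The routing differs, though. The paper (Lemmas \ref{lem:Klvopotek_Lemma_point_cluster_centre}, \ref{lem:Klvopotek_Lemma_two_cluster_centres} and \ref{lem:Klvopotek_Lemma_no__cluster_change}) never estimates the distance from $\mathbf{x}_i$ to a \emph{foreign} centroid: it shows $\|\mathbf{x}'_i-\boldsymbol\mu'_1\|\le\frac12\|\boldsymbol\mu'_2-\boldsymbol\mu'_1\|$, placing the $(1+2p)$ slack on the \emph{centre-to-centre} distortion (derived from the decomposition of $VAR(C_{12})$), and concludes by the triangle inequality that the point stays on its own side of the bisector. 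You compare $\|\mathbf{x}'_i-\boldsymbol\mu'(C_j)\|^2$ with $\|\mathbf{x}'_i-\boldsymbol\mu'(C_l)\|^2$ directly and place the slack on the point-to-foreign-centroid distortion; your ratio bound $\|\mathbf{x}_i-\boldsymbol\mu(C_l)\|\ge\frac1\alpha\|\mathbf{x}_i-\boldsymbol\mu(C_j)\|$ is the paper's ``$\alpha d$ versus $d$'' comparison in disguise, and both routes land on the same inequality.

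Two steps do not close as written. First, your bound $\|\mathbf{x}'_i-\boldsymbol\mu'(C_j)\|^2\le\frac{n'}{n}(1+\delta)\|\mathbf{x}_i-\boldsymbol\mu(C_j)\|^2$ does not follow from the parallel-axis identity you quote: applying the $(1\pm\delta)$ control to the two sums separately yields $\frac{n'}{n}\bigl((1+\delta)\|\mathbf{x}_i-\boldsymbol\mu(C_j)\|^2+2\delta\, VAR(C_j)\bigr)$, and for a boundary point the extra $2\delta\,VAR(C_j)$ term need not be small relative to $(\alpha d)^2$ unless it too is absorbed into the constant, which would alter the threshold. (The paper gets the clean $(1+\delta)$ factor only because its Lemma \ref{lem:Klvopotek_Lemma_point_cluster_centre} uses the identity \emph{without} the variance correction; you write the correct identity but then assert the paper's conclusion.) Second, the $p$ your foreign-centroid bound requires --- a bound on $VAR(C_l)/\|\mathbf{x}_i-\boldsymbol\mu(C_l)\|^2$ --- is not the $p$ of (\ref{eq:pdefinition}), which bounds $\bigl(VAR(C_1)\,m_{12}/m_2+VAR(C_2)\,m_{12}/m_1\bigr)/\|\boldsymbol\mu(C_1)-\boldsymbol\mu(C_2)\|^2$; converting one into the other through $\|\mathbf{x}_i-\boldsymbol\mu(C_l)\|\ge\frac12\|\boldsymbol\mu(C_1)-\boldsymbol\mu(C_2)\|$ costs a factor of $4$, giving $1+8p$ rather than $1+2p$. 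The same objection applies to folding the $\frac{|C_j|}{|C_j|-1}$, $\frac{|C_l|}{|C_l|+1}$ prefactors into $p$: the quantity (\ref{eq:pdefinition}) simply does not contain them (the paper sidesteps this by taking ``local minimum'' to mean the nearest-centroid fixed point of Lloyd's iteration). None of this breaks the strategy, but as stated your constants do not match the theorem's.
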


\begin{theorem}\label{thm:Klvopotek_k_means_inverted_local_minimum}
Under the assumptions and notation of Theorem \ref{thm:Klvopotek_k_means_projected_cost},
  if the clustering $\mathfrak{C'}^* $ constitutes a local minimum of $\mathfrak{J}(Q',\mathfrak{C})$ 
over $\mathfrak{C}$ (in the projected space)
and if 
for any two clusters  
  $1-\alpha$ times the distance between their centres is the gap between these clusters, where   $\alpha\in [0,1)$,
and 
\begin{equation}
\frac{ \delta}{1-\delta} \le   \frac{1-\alpha^2}{(1+2p)+\alpha^2}
\end{equation}
then the very same partition   $\mathfrak{C'}^*$
 is also (in the original space) a local minimum of 
$\mathfrak{J}(Q,\mathfrak{C})$ 
over $\mathfrak{C}$, 
with probability of at least $1-\epsilon$.
\\ 
\end{theorem}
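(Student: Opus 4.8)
The plan is to imitate the proof of Theorem~\ref{thm:Klvopotek_k_means_projected_local_minimum}, interchanging the roles of the original and the projected space, so that wherever that argument loses a factor $1+\delta$ this one loses $\tfrac1{1-\delta}$ and the net relative distortion is $\tfrac{\delta}{1-\delta}$ (compare the remark following Theorem~\ref{thm:Klvopotek_k_means_projected_cost}); this is why the hypothesis is phrased through $\tfrac{\delta}{1-\delta}$ rather than $\delta$. First I would restate ``$\mathfrak{C'}^*$ is a local minimum of $\mathfrak{J}(Q',\cdot)$'' as the family of one-object--transfer inequalities: for every $i$ with $C_l=\mathfrak{C'}^*(i)$ and every cluster $C_j$, $j\neq l$, reassigning $i$ to $C_j$ may not decrease $\mathfrak{J}(Q',\cdot)$, which by the standard closed form for the change of the $k$-means cost under a single-point move is $\tfrac{|C_l|}{|C_l|-1}\|\mathbf{x'}_i-\boldsymbol\mu'(C_l)\|^2\le\tfrac{|C_j|}{|C_j|+1}\|\mathbf{x'}_i-\boldsymbol\mu'(C_j)\|^2$. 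The target is the same family of inequalities with every primed object and centroid replaced by its unprimed version.

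Next I would transfer individual point-to-centroid squared distances. Because the random projection is linear, $\boldsymbol\mu'(C)$ is the projection of $\boldsymbol\mu(C)$, so $\|\mathbf{x'}_i-\boldsymbol\mu'(C)\|^2$ is the squared length of the image of the fixed vector $\mathbf{x}_i-\boldsymbol\mu(C)$; applying the set version of the Johnson--Lindenstrauss Lemma (Theorem~\ref{thm:Klvopotek_Lemma}) to $Q$ enlarged by the (at most $k\le m$) cluster centres of $\mathfrak{C'}^*$ --- the enlarged union bound still being accommodated by the $2\ln m$ in~(\ref{eq:nprime_computation}) --- gives, with probability at least $1-\epsilon$, that $\tfrac{n}{n'}\|\mathbf{x'}_i-\boldsymbol\mu'(C)\|^2$ lies within a factor $1\pm\delta$ of $\|\mathbf{x}_i-\boldsymbol\mu(C)\|^2$ for all $i$ and all clusters $C$ of $\mathfrak{C'}^*$ at once. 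On this event, chaining the lower bound for $C_l$, the projected-space transfer inequality of the first step, and the upper bound for $C_j$ yields $\tfrac{|C_l|}{|C_l|-1}\|\mathbf{x}_i-\boldsymbol\mu(C_l)\|^2\le\tfrac{1+\delta}{1-\delta}\cdot\tfrac{|C_j|}{|C_j|+1}\|\mathbf{x}_i-\boldsymbol\mu(C_j)\|^2$ --- the desired original-space inequality up to the multiplicative slack $\tfrac{1+\delta}{1-\delta}=1+\tfrac{2\delta}{1-\delta}$.

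The remaining, and I expect hardest, step is to swallow that slack using the separation hypothesis and the quantity $p$ of~(\ref{eq:pdefinition}). The idea is that, since for every pair of clusters the gap is $1-\alpha$ times the distance between their centres, for $i\in C_l$ the own-centre term $\|\mathbf{x}_i-\boldsymbol\mu(C_l)\|^2$ is, once the size corrections $\tfrac{|C_l|}{|C_l|-1}$ and $\tfrac{|C_j|}{|C_j|+1}$ are folded into $p$, dominated by a fraction of the foreign-centre term $\|\mathbf{x}_i-\boldsymbol\mu(C_j)\|^2$ governed by $\alpha$ and $p$ (via the triangle inequality $\|\mathbf{x}_i-\boldsymbol\mu(C_j)\|\ge\|\boldsymbol\mu(C_l)-\boldsymbol\mu(C_j)\|-\|\mathbf{x}_i-\boldsymbol\mu(C_l)\|$ and the analogous upper estimate), and this domination survives the factor $\tfrac{1+\delta}{1-\delta}$ exactly when $\tfrac{\delta}{1-\delta}\le\tfrac{1-\alpha^2}{(1+2p)+\alpha^2}$. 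The delicate part is the bookkeeping: producing the constant $p$ as precisely the combination of within-cluster scatter and the two size corrections that makes $(1+2p)$ appear as written, making sure the gap estimate is used in the space where it is assumed (the projected one) and transported consistently through the same JL event, and checking that no cluster is emptied so that the cost-change formula is legitimate. Once this scalar inequality is verified, it holds for every $i$ and every $j\neq l$ on the probability-$(1-\epsilon)$ event, which is exactly the claim that $\mathfrak{C'}^*$ is a local minimum of $\mathfrak{J}(Q,\cdot)$.
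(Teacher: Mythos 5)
Your overall strategy differs from the paper's, and in its present form it does not close. The paper never invokes the single-point-transfer cost formula with the factors $\frac{|C_l|}{|C_l|-1}$ and $\frac{|C_j|}{|C_j|+1}$; it reads ``local minimum'' as the $k$-means fixed-point condition (every point closer to its own cluster centre than to any other) and proves Lemma \ref{lem:Klvopotek_Lemma_no__cluster_change_inverted}: a point at distance $\alpha d$ from its centre in the projected space, with $d$ the distance from that centre to the bisector, stays on the correct side in the original space provided $(1-\delta)^{-1}(\alpha d)^2\le(1+\delta(1+2p))^{-1}d^2$, which rearranges to the stated threshold. Crucially, the factor $(1+2p)$ enters through Lemma \ref{lem:Klvopotek_Lemma_two_cluster_centres}: since $\boldsymbol\mu_1,\boldsymbol\mu_2$ are not data points, the paper controls $\|\boldsymbol\mu_1'-\boldsymbol\mu_2'\|^2$ only up to $1\pm\delta(1+2p)$, where $p$ is the ratio (\ref{eq:pdefinition}) of weighted within-cluster variances to the squared centre separation, obtained from the decomposition $VAR(C_{12})\,m_{12}=VAR(C_1)m_1+VAR(C_2)m_2+\frac{m_1m_2}{m_{12}}\|\boldsymbol\mu_1-\boldsymbol\mu_2\|^2$ together with Lemma \ref{lem:Klvopotek_Lemma_point_cluster_centre}; point-to-centroid distances are likewise transferred via their expression in terms of pairwise data-point distances, never by adding centres to the JL set.

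Two concrete gaps. First, applying Theorem \ref{thm:Klvopotek_Lemma} to $Q$ enlarged by the cluster centres is not covered by the stated $n'$: the derivation of (\ref{eq:nprime_computation}) uses $\ln(m(m-1))<2\ln m$, whose slack does not absorb $k$ extra points, so the union bound over $m+k$ points requires a larger $n'$ than the theorem assumes. (Moreover, if that route did go through, centre-to-centre distances would be preserved within $1\pm\delta$ and the factor $(1+2p)$ would never arise, so the condition you would end up with is not the one in the statement.) Second, and more seriously, the step you yourself flag as the hardest --- producing $(1+2p)$ with $p$ exactly as in (\ref{eq:pdefinition}) and verifying that the slack $\frac{1+\delta}{1-\delta}$ is absorbed precisely when $\frac{\delta}{1-\delta}\le\frac{1-\alpha^2}{(1+2p)+\alpha^2}$ --- is only asserted, not carried out; your proposed mechanism (folding the cluster-size corrections and the scatter into $p$, plus a triangle inequality) is not the paper's $p$, and no computation is given showing it yields the stated threshold. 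As written, the proposal is a plausible outline of a different argument rather than a proof of the theorem as stated.
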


\begin{theorem}\label{thm:Klvopotek_k_means_projected_global_minimum}
Under the assumptions and notation of Theorem \ref{thm:Klvopotek_k_means_projected_cost},
if  $\mathfrak{C_G} $ denotes the clustering reaching the global optimum in the original space, 
and $\mathfrak{C'_G} $ denotes the clustering reaching the global optimum in the projected space,
then  
\begin{equation} 
\frac{n}{n'}
 \mathfrak{J}(Q',\mathfrak{C'_G})  \le (1+\delta)\mathfrak{J}(Q,
\mathfrak{C_G})
\end{equation}
with probability of at least $1-\epsilon$.

That is the perfect $k$-means algorithm in the projected space is a constant factor approximation of $k$-means optimum in the original space. 
\end{theorem}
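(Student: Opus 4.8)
The plan is to derive the claim directly from Theorem~\ref{thm:Klvopotek_k_means_projected_cost} together with the defining optimality of $\mathfrak{C'_G}$ in the projected space. The first step is to observe that the event underlying Theorem~\ref{thm:Klvopotek_k_means_projected_cost} — the one that occurs with probability at least $1-\epsilon$ — is a statement about the random coordinate system $\mathcal{C}_{n'}$ that, once it holds, holds \emph{simultaneously for every partition} $\mathfrak{C}$ of $\mathfrak{Q}$. Indeed, writing the $k$-means cost in the pairwise form $\mathfrak{J}(Q,\mathfrak{C})=\sum_j \frac1{2|C_j|}\sum_{i,i'\in C_j}\|\mathbf{x}_i-\mathbf{x}_{i'}\|^2$ (and likewise for $Q'$), inequality~\eqref{eq:Klvopotek_k_means_projected_cost} for all $\mathfrak{C}$ is implied by the single event that $\frac{n}{n'}\|\mathbf{x'}_i-\mathbf{x'}_{i'}\|^2$ lies between $(1-\delta)\|\mathbf{x}_i-\mathbf{x}_{i'}\|^2$ and $(1+\delta)\|\mathbf{x}_i-\mathbf{x}_{i'}\|^2$ for all $\binom m2$ pairs, which is exactly what the union bound behind the choice~\eqref{eq:nprime_computation} of $n'$ secures. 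So I would fix a realisation of $\mathcal{C}_{n'}$ on which this good event holds and carry out the rest of the argument deterministically.

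On the good event I would apply the right-hand inequality of~\eqref{eq:Klvopotek_k_means_projected_cost} to the particular partition $\mathfrak{C}=\mathfrak{C_G}$, obtaining
\begin{equation*}
\frac{n}{n'}\mathfrak{J}(Q',\mathfrak{C_G}) \le (1+\delta)\,\mathfrak{J}(Q,\mathfrak{C_G}).
\end{equation*}
Next, because $\mathfrak{C'_G}$ is by definition a minimiser of $\mathfrak{J}(Q',\cdot)$ over all partitions of $\mathfrak{Q}$, it is at least as good as $\mathfrak{C_G}$ in the projected space, i.e. $\mathfrak{J}(Q',\mathfrak{C'_G}) \le \mathfrak{J}(Q',\mathfrak{C_G})$. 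Multiplying this by $n/n'>0$ and chaining with the previous display gives
\begin{equation*}
\frac{n}{n'}\mathfrak{J}(Q',\mathfrak{C'_G}) \le \frac{n}{n'}\mathfrak{J}(Q',\mathfrak{C_G}) \le (1+\delta)\,\mathfrak{J}(Q,\mathfrak{C_G}),
\end{equation*}
which is the asserted inequality; since the whole chain took place on an event of probability at least $1-\epsilon$, the probability statement follows, and the constant-factor-approximation remark is just this bound read with a fixed $\delta<\tfrac12$.

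There is essentially no hard calculation here; the only thing one must be careful about is the logical structure of the probabilistic claim, and that is where the (mild) main obstacle lies. It would be a mistake to invoke Theorem~\ref{thm:Klvopotek_k_means_projected_cost} twice — once for $\mathfrak{C_G}$ and once for $\mathfrak{C'_G}$ — and union-bound the two failure probabilities: that would both degrade the bound to $1-2\epsilon$ and, more seriously, be illegitimate, since $\mathfrak{C'_G}$ is itself a function of the random projection and so is not a fixed partition to which a per-partition guarantee applies. The clean way around this is precisely the observation of the first paragraph: the guarantee of Theorem~\ref{thm:Klvopotek_k_means_projected_cost} is really a single ``all pairwise distances preserved'' event, hence an ``all partition costs preserved'' event, and on that one event the optimality comparison for the random partition $\mathfrak{C'_G}$ is valid.
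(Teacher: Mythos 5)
Your proof is correct and follows essentially the same route as the paper's: apply the right-hand inequality of Theorem~\ref{thm:Klvopotek_k_means_projected_cost} to the fixed partition $\mathfrak{C_G}$ and chain it with the optimality of $\mathfrak{C'_G}$ in the projected space. Your explicit remark that the guarantee is a single ``all pairwise distances preserved'' event holding simultaneously for every partition --- and hence legitimately applicable to the data-dependent partition $\mathfrak{C'_G}$ --- is left implicit in the paper but is exactly the right justification.
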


We postpone the proof of the theorems 
\ref{thm:Klvopotek_k_means_projected_cost}-\ref{thm:Klvopotek_k_means_projected_global_minimum} till section
\ref{sec:Klvopotek_projective_k_means},
as we need first to derive the basic theorem   \ref{thm:Klvopotek_Lemma} in section \ref{sec:derivation} which is essentially based on the results reported by Dasgupta and Gupta \cite{Dasgupta:2003}. 

Let us however stress at this point the significance of these theorems.
Earlier forms of JL lemma required sampling of the coordinates over and over again\footnote{
Though in passing a similar result is claimed in Lemma 5.3
\url{http://math.mit.edu/~bandeira/2015_18.S096_5_Johnson_Lindenstrauss.pdf}, though without an explicit proof.
}, with quite a low success rate until a mapping is found fitting the error constraints.
In our theorems, we need only one sampling in order to achieve the required success probability of selecting a suitable subspace to perform $k$-means.  
In Section \ref{sec:examples} we illustrate this advantage by some numerical simulation results, showing at the same time the impact of various parameters of Jonson-Lindenstrauss Lemma on the dimensionality of the projected space. 
In Section \ref{sec:previousWork} we recall the corresponding results of other authors. 

In Section \ref{sec:clusterability} we demonstrate an additional advantage of our version of JL lemma consisting in preservation of various clusterability criteria. 

Section \ref{sec:conclusions} contains some concluding remarks. 

% reverse engineering - minimum of q' being minimum of q
%* theorem on random sampling rows \\
%* theorem on keeping k-means clustering when row sampling 

\section{Derivation of the Set-Friendly \\ Johnson-Lindenstrauss Lemma} \label{sec:derivation}

Let us present the process of seeking the mapping $f$ from Theorem \ref{thm:JL_lemma}
 in a more detailed manner, so that we can then switch to our target of selecting the size of the subspace guaranteeing that the projected distances preserve their proportionality in the required range.  

Let us consider first a single vector $\mathbf{x}=(x_1,.x_2,...,x_n)$ of $n$ independent random variables drawn from the normal distribution $\mathcal{N}(0,1)$ with mean 0 and variance 1. 
Let 
$\mathbf{x'}=(x_1,.x_2,...,x_{n'})$, where $n'<n$, be its projection onto the first $n'$ coordinates.  

Dasgupta and Gupta \cite{Dasgupta:2003} in their Lemma 2.2 demonstrated that for a positive $\beta$
\begin{itemize}
\item
if $\beta<1$ then 
\begin{equation}
Pr(\|\mathbf{x}'\|^2\le \beta\frac{n'}{n} \|\mathbf{x}\|^2)\le \beta^{\frac{n'}{2}}\left(1+\frac{n'(1-\beta)}{n-n' }\right)^{\frac{n-n'}{2}}
\end{equation}
\item
if $\beta>1$ then 
\begin{equation}
Pr(\|\mathbf{x}'\|^2\ge \beta\frac{n'}{n} \|\mathbf{x}\|^2)\le \beta^{\frac{n'}{2}}\left(1+\frac{n'(1-\beta)}{n-n'}\right)^{\frac{n-n'}{2}}
\end{equation}
\end{itemize}

Now imagine we want to keep the error of squared length of $\mathbf{x}$ bounded within a range of $\pm \delta$ (relative error) upon projection, where   $\delta \in (0,1)$. Then  we get the probability 
\begin{align*}
&Pr\left((1-\delta)\|\mathbf{x} \|^2 \le \frac{n}{n'}\|\mathbf{x}'\|^2 \le (1+\delta)\|\mathbf{x} \|^2 \right)
\\&\ge 1-
  (1-\delta)^\frac{n'}{2}\left(1+\frac{n'\delta}{n-n'}\right)^{\frac{n-n'}{2}}
\\&  - (1+\delta)^{\frac{n'}{2}}\left(1-\frac{n'\delta}{n-n'}\right)^{\frac{n-n'}{2}}
\end{align*}
This implies 
  
\begin{align*}
& Pr\left((1-\delta)\|\mathbf{x} \|^2 \le \frac{n}{n'}\|\mathbf{x}'\|^2 \le (1+\delta)\|\mathbf{x} \|^2 \right)
\\&
\ge
    1-
2\max\left(  (1-\delta)^{\frac{n'}{2}}\left(1+\frac{n'\delta}{n-n'}\right)^{\frac{n-n'}{2}},
\right.\\&\left.
     (1+\delta)^{\frac{n'}{2}}\left(1-\frac{n'\delta}{n-n'}\right)^{\frac{n-n'}{2}}
\right)
\\&
=   1-
2\max_{\delta^*\in\{-\delta,+\delta\}}\left(  \left(1-\delta^*\right)^{\frac{n'}{2}}\left(1+\frac{\delta^* n'}{n-n'}\right)^{\frac{n-n'}{2}}
\right)
\end{align*}
The same holds if we scale the vector $\mathbf{x}$. 

Now if we have a sample consisting of $m$ points in space, without however a guarantee that coordinates are independent between the vectors then we want that the probability that squared distances   between all of them lie within the relative range $\pm \delta$ is higher than  
\begin{equation}\label{eq:nprime_implicit}
 1-\epsilon \le 1-{m \choose 2}\left(1- Pr\left((1-\delta)\|\mathbf{x} \|^2 \le \frac{n}{n'}\|\mathbf{x}'\|^2 \le (1+\delta)\|\mathbf{x} \|^2 \right)\right) 
\end{equation}
for some failure probability\footnote{
We speak about a success if all the projected data points lie within the range defined by formula 
(\ref{eq:JL}). Otherwise we speak about failure
(even if only one data point lies outside this range). 
} term $\epsilon \in (0,1)$. 

To achieve this, it is sufficient that the following  holds: 
$$\epsilon \ge 2{m \choose 2}\max_{\delta^*\in\{-\delta,+\delta\}}\left(  (1-\delta^*)^{\frac{n'}{2}}\left(1+\frac{\delta^* n'}{n-n'}\right)^{\frac{n-n'}{2}}
\right)$$
Taking logarithm 

\begin{align*}
 \ln \epsilon \ge &\ln(m (m-1))
\\&+ \max_{\delta^*\in\{-\delta,+\delta\}}\left( \frac{n'}{ 2} \ln (1-\delta^*)+\frac{(n-n')}{ 2} \ln\left(1+\frac{\delta^* n'}{n-n'}\right)
\right) 
\end{align*}

\begin{align*}
&\ln \epsilon -   \ln(m (m-1)) 
\\&
\ge \max_{\delta^*\in\{-\delta,+\delta\}}\left( \frac{n'}{ 2} \ln (1-\delta^*)+\frac{(n-n')}{ 2} \ln\left(1+\frac{\delta^* n'}{n-n'}\right)
\right)
\end{align*}

We know% 
\footnote{Please recall at this point the Taylor expansion 
$\ln (1+x)=x-x^2/2+x^3/3-x^5/5 \dots$ 
which converges in the range (-1,1) 
and hence implies $\ln (1+x)<x$  for $x\in (-1,0)\cup (0,1)$ as we will refer to it discussing difference to  JL theorems of other authors.  
}
that 
$\ln (1+x)<x$ for $x>-1$ and $x \ne 0$, 
hence   the above holds if 
 
$$\ln \epsilon -   \ln(m (m-1)) \ge \max_{\delta^*\in\{-\delta,+\delta\}}\left(  \frac{n'}{ 2} \ln (1- \delta^*)+\frac{(n-n')}{ 2}   \frac{ \delta^* n'}{n-n'} 
\right)
 $$
$$\ln \epsilon -   \ln(m (m-1)) \ge 
\max_{\delta^*\in\{-\delta,+\delta\}}\left(  \frac{n'}{ 2} \ln (1- \delta^*)+\frac{ 1}{ 2}     (\delta^*) n'   \right)
=\frac{n'}{ 2}
\max_{\delta^*\in\{-\delta,+\delta\}}\left(   \ln (1- \delta^*)+  \delta^*    \right)
 $$
Recall that also we have $\ln (1-x)+x<0$ for $x<1$ and $x \ne 0$,
threfore 
$$\max_{\delta^*\in\{-\delta,+\delta\}}\left(2\frac{\ln \epsilon -   \ln(m (m-1))} 
{\ln (1- \delta^*)+    \delta^*}\right)
\le     n'   
 $$
So finally, realizing that  
%\ln (1+x)=x-x^2/2+x^3/3-x^5/5 \dots$ 
%\ln (1-x)=-x-x^2/2-x^3/3-x^5/5 \dots$ 
%-\ln (1+x)=-x+x^2/2-x^3/3+x^5/5 \dots$ 
%-\ln (1-x)=+x+x^2/2+x^3/3+x^5/5 \dots$ 
%-\ln (1+x)+x=+x^2/2-x^3/3+x^5/5 \dots$ 
%-\ln (1-x)-x=+x^2/2+x^3/3+x^5/5 \dots$ 
${-\ln (1- \delta)-    \delta} \ge \besterNenner >0$, 
and that  $\ln(m(m-1))<2\ln (m)$
we get as \textit{sufficient} condition\footnote{
We substituted the denominator with a smaller positive number 
and the nominator with a larger positive number so that the fraction value increases so that a higher $n'$ will be required than actually needed. 
}

$$n'\ge 2\frac{-\ln \epsilon +   2 \ln( m)  } 
{\besterNenner}
 $$

Note that this expression does not depend on $n$ that is the number of dimensions in the projection is chosen independently of the original number of dimensions\footnote{
Though in passing a similar result is claimed in Lemma 5.3
\url{http://math.mit.edu/~bandeira/2015_18.S096_5_Johnson_Lindenstrauss.pdf}, though without an explicit proof.
They propose that 
$$n'\ge (2+r)\frac{    2 \ln( m)  } {\besterNenner}
$$ 
in order to get a failure rate below $m^{-r}$.
In fact when we substitute $\epsilon=m^{-r}$, both formulas are the same.
However, usage of $\epsilon$ alows for control of failure rate in the other theorems in this paper, while $r$ does not make this possibility obvious. Also fixing $r$ versus fixing $\epsilon$ impacts disadvantageously the growth rate of $n'$ with $m$.   
}.

So we are ready to formulate our major finding of this paper
\begin{theorem}\label{thm:Klvopotek_Lemma}
{ } Let $\delta \in (0,\frac12)$, $\epsilon \in (0,1)$.
Let $Q\subset \mathbb{R}^n$ be a set of $m$ points in an $n$-dimensional orthogonal coordinate system $\mathcal{C}_n$  
and let (as in formula~(\ref{eq:nprime_computation}))
\begin{equation*}
n'\ge 2\frac{-\ln \epsilon +   2 \ln( m)  } 
{\besterNenner}
\end{equation*}
Let $\mathcal{C}_{n'}$ be a randomly selected (via sampling from a normal distribution)
$n'$-dimensional orthogonal coordinate system. 
For each $\mathbf{v}\in Q$ let $\mathbf{v'}$ be its projection onto $\mathcal{C}_{n'}$. 
Then for \emph{all} pairs $\mathbf{u},\mathbf{v} \in Q$
\begin{equation}
(1-\delta)\|\mathbf{u}-\mathbf{v} \|^2 \le \frac{n}{n'}\|\mathbf{u'}-\mathbf{v'}\|^2 \le (1+\delta)\|\mathbf{u}-\mathbf{v} \|^2
\end{equation}
holds with probability of at least $1-\epsilon$ 
\end{theorem}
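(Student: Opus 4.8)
The plan is to reduce the set statement to the single-vector concentration bound of Dasgupta and Gupta (Lemma~2.2 of \cite{Dasgupta:2003}), recalled in the derivation above, and then pay a union-bound price of $\binom{m}{2}$ over all pairs of points, after which only the elementary logarithmic estimates already rehearsed remain.

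First I would fix an arbitrary pair $\mathbf{u},\mathbf{v}\in Q$ and set $\mathbf{w}=\mathbf{u}-\mathbf{v}$, so that $\frac{n}{n'}\|\mathbf{u'}-\mathbf{v'}\|^2=\frac{n}{n'}\|\mathbf{w'}\|^2$, where $\mathbf{w'}$ is the projection of $\mathbf{w}$ onto the random orthogonal system $\mathcal{C}_{n'}$. The key distributional observation is rotational invariance: since $\mathcal{C}_{n'}$ is built from i.i.d.\ normal samples, the law of $\frac{n}{n'}\|\mathbf{w'}\|^2/\|\mathbf{w}\|^2$ does not depend on the direction of $\mathbf{w}$ and coincides with the law of $\frac{n}{n'}\|\mathbf{x'}\|^2/\|\mathbf{x}\|^2$ with $\mathbf{x}\sim\mathcal{N}(0,I_n)$ and $\mathbf{x'}$ its restriction to the first $n'$ coordinates (this is exactly the "the same holds if we scale the vector" remark made in the derivation). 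Hence the two Dasgupta--Gupta tail bounds apply verbatim, with $\beta=1-\delta<1$ controlling $\Pr(\frac{n}{n'}\|\mathbf{w'}\|^2<(1-\delta)\|\mathbf{w}\|^2)$ and $\beta=1+\delta>1$ controlling $\Pr(\frac{n}{n'}\|\mathbf{w'}\|^2>(1+\delta)\|\mathbf{w}\|^2)$.

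Next I would combine the two tails and take a union bound. The probability that the fixed pair $\mathbf{u},\mathbf{v}$ violates the target sandwich is at most $(1-\delta)^{n'/2}\bigl(1+\tfrac{n'\delta}{n-n'}\bigr)^{(n-n')/2}+(1+\delta)^{n'/2}\bigl(1-\tfrac{n'\delta}{n-n'}\bigr)^{(n-n')/2}\le 2\max_{\delta^*\in\{-\delta,+\delta\}}\bigl((1-\delta^*)^{n'/2}(1+\tfrac{\delta^* n'}{n-n'})^{(n-n')/2}\bigr)$. Summing over the $\binom{m}{2}$ pairs — which requires no independence between coordinates of distinct vectors, only subadditivity of probability — the overall failure probability is at most $2\binom{m}{2}$ times that maximum, so it suffices to force this quantity below $\epsilon$.

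Finally I would linearise exactly as in the derivation preceding the theorem: taking logarithms of $\epsilon\ge 2\binom{m}{2}\max(\cdots)$ and applying $\ln(1+x)<x$ to the factor $\tfrac{n-n'}{2}\ln\bigl(1+\tfrac{\delta^* n'}{n-n'}\bigr)$ collapses the right-hand side to $\tfrac{n'}{2}\max_{\delta^*}\bigl(\ln(1-\delta^*)+\delta^*\bigr)$; since $\ln(1-x)+x<0$ for $x\in(0,1)$ the worst-case value of this maximum is negative, so dividing through reverses the inequality and gives $n'\ge 2\frac{\ln\epsilon-\ln(m(m-1))}{\ln(1-\delta^*)+\delta^*}$ for the extremal $\delta^*$. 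Replacing the denominator $-\ln(1-\delta)-\delta$ by the smaller positive quantity $\besterNenner$ and $\ln(m(m-1))$ by the larger $2\ln m$ only strengthens the requirement, yielding the stated closed form for $n'$, which is manifestly independent of $n$. The part I expect to need the most care is the rotational-invariance reduction in the first step — making rigorous that projecting a fixed vector onto a random orthogonal system has the same length distribution as projecting a standard Gaussian onto its first $n'$ coordinates — since everything downstream is the union bound together with the two elementary inequalities $\ln(1+x)<x$ and $\ln(1-x)+x<0$.
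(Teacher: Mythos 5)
Your proposal follows essentially the same route as the paper: the Dasgupta--Gupta single-vector tail bounds (transferred to a fixed difference vector via the rotational-invariance/"inverting the projection" argument), a union bound over the $\binom{m}{2}$ pairs, and then the same two elementary estimates $\ln(1+x)<x$ and $\ln(1-x)+x<0$ to extract the explicit, $n$-independent lower bound on $n'$. The only difference is cosmetic --- you make the rotational-invariance reduction explicit where the paper dispatches it with the remark that "the same holds if we scale the vector" --- so the argument matches the paper's own derivation in Section~\ref{sec:derivation}.
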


\section{Proofs of theorems \ref{thm:Klvopotek_k_means_projected_cost}-\ref{thm:Klvopotek_k_means_projected_global_minimum}}
\label{sec:Klvopotek_projective_k_means}

The permissible error $\delta$ will surely depend on the target application.
Let us consider the context of $k$-means. 
First we claim for $k$-means, that the JL Lemma applies not only to 
data points but also to cluster centres. 

\begin{lemma}\label{lem:Klvopotek_Lemma_point_cluster_centre}
{ } Let $\delta \in (0,\frac12)$, $\epsilon \in (0,1)$.
Let $Q\subset \mathbb{R}^n$ be a set of $m$ representatives of elements of $\mathfrak{Q}$ in an $n$-dimensional orthogonal coordinate system $\mathcal{C}_n$  
and let the inequality (\ref{eq:nprime_computation}) hold. 
Let $\mathcal{C}_{n'}$ be a randomly selected (via sampling from a normal distribution)
$n'$-dimensional orthogonal coordinate system. 
For each $\mathbf{x}_i\in Q$ let $\mathbf{x_i'}\in Q'$ be its projection onto $\mathcal{C}_{n'}$. 
Let $\mathfrak{C}$ be a partition of $\mathfrak{Q}$. 
Then for \emph{all} data points  $\mathbf{x_i} \in Q$
\begin{align}
(1-\delta) \|\mathbf{x_i}-\boldsymbol\mu(\mathfrak{C}(i)) \|^2 \le \frac{n}{n'} \|\mathbf{x_i'}-\boldsymbol\mu'(\mathfrak{C}(i))\|^2 \le (1+\delta) \|\mathbf{x}_i-\boldsymbol\mu(\mathfrak{C}(i)) \|^2
\label{eq:onecluster}
\Bem{\\
(1-\delta) \|\boldsymbol\mu_1-\boldsymbol\mu_2 \|^2 \le \frac{n}{n'} \|\boldsymbol\mu_1'-\boldsymbol\mu_2'\|^2 \le (1+\delta) \|\boldsymbol\mu_1-\boldsymbol\mu_2 \|^2
}%Bem 
\end{align}
hold with probability of at least $1-\epsilon$,
\end{lemma}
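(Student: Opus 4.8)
The plan is to reduce the statement to the single‑fixed‑vector tail estimate of Dasgupta and Gupta that was already used in Section~\ref{sec:derivation}, now applied to the $m$ vectors
\[
\mathbf{w}_i:=\mathbf{x}_i-\boldsymbol\mu(\mathfrak{C}(i)),\qquad i\in\mathfrak{Q}.
\]
These are fixed once $Q$ and the partition $\mathfrak{C}$ are given, i.e.\ before the random coordinate system $\mathcal{C}_{n'}$ is drawn, which is exactly the setting in which that estimate applies.

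First I would exploit linearity of the projection. Writing $M$ for the (normalised) projection onto $\mathcal{C}_{n'}$, so that $\mathbf{x}_i'=M\mathbf{x}_i$, and using that averaging commutes with a linear map, one obtains $\boldsymbol\mu'(C_j)=\frac{1}{|C_j|}\sum_{i\in C_j}M\mathbf{x}_i=M\boldsymbol\mu(C_j)$ for every cluster $C_j$, whence
\[
\mathbf{x}_i'-\boldsymbol\mu'(\mathfrak{C}(i))=M\bigl(\mathbf{x}_i-\boldsymbol\mu(\mathfrak{C}(i))\bigr)=M\mathbf{w}_i .
\]
Hence the middle term of~(\ref{eq:onecluster}) equals $\tfrac{n}{n'}\|M\mathbf{w}_i\|^2$, the rescaled squared length of the projection of the fixed vector $\mathbf{w}_i$, and for each individual $i$ the probability that it lies in $[(1-\delta)\|\mathbf{w}_i\|^2,(1+\delta)\|\mathbf{w}_i\|^2]$ is bounded below exactly as in Section~\ref{sec:derivation} (rotational invariance of the Gaussian plus Lemma~2.2 of~\cite{Dasgupta:2003}, neither of which cares how the $\mathbf{w}_i$ were built).

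Then I would take a union bound over the $m$ events ``$\mathbf{w}_i$ is badly distorted'' (vectors $\mathbf{w}_i=\mathbf 0$ coming from singleton clusters may simply be dropped, so the count is at most $m$), giving total failure probability at most $2m\max_{\delta^*\in\{-\delta,+\delta\}}\bigl((1-\delta^*)^{n'/2}(1+\tfrac{\delta^*n'}{n-n'})^{(n-n')/2}\bigr)$. The value of $n'$ in~(\ref{eq:nprime_computation}) was chosen in Section~\ref{sec:derivation} precisely so that this same maximum, multiplied by $m(m-1)=2\binom m2$, is at most $\epsilon$; since $2m\le m(m-1)$ for $m\ge3$ (the cases $m\le2$ being trivial or vacuous), the $m$‑fold union bound is a fortiori at most $\epsilon$, and~(\ref{eq:onecluster}) then holds simultaneously for all $\mathbf{x}_i\in Q$ with probability at least $1-\epsilon$.

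I do not anticipate a real obstacle here; the one point worth stating carefully is that the cluster centres $\boldsymbol\mu(C_j)$ are \emph{not} to be treated as extra data points, so $n'$ need not grow with $k$: it suffices that the $m$ fixed difference vectors $\mathbf{w}_i$ survive the projection, and a union bound of size $m$ is already covered by the size‑$\binom m2$ budget built into~(\ref{eq:nprime_computation}).
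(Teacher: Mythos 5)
Your proof is correct, and it takes a genuinely different route from the paper's. The paper derives the lemma as a deterministic consequence of Theorem~\ref{thm:Klvopotek_Lemma}: it writes the squared point--to--centre distance in terms of the within-cluster pairwise squared distances and transfers the $(1\pm\delta)$ sandwich from the pairwise distances to the centre distances. The identity it invokes, $\|\mathbf{x}_i-\boldsymbol\mu(C)\|^2=\frac1{|C|}\sum_{j\in C}\|\mathbf{x}_i-\mathbf{x}_j\|^2$, in fact omits the within-cluster variance term $\frac1{|C|}\sum_{j\in C}\|\mathbf{x}_j-\boldsymbol\mu(C)\|^2$, and with the corrected identity the quantity of interest is a \emph{difference} of two preserved sums, so the sandwich does not transfer term by term; your route, which bypasses the identity entirely, is therefore the cleaner one. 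You instead use linearity of the projection to get $\boldsymbol\mu'(C)=M\boldsymbol\mu(C)$ (the paper remarks that $\boldsymbol\mu'$ is ``not the projective image'' of $\boldsymbol\mu$, but for the linear map $M$ the centre of the images and the image of the centre coincide, so your identification is legitimate), reduce to the single-vector Dasgupta--Gupta tail bound applied to the $m$ fixed difference vectors $\mathbf{w}_i$, and close with a size-$m$ union bound, which indeed fits inside the size-$\binom{m}{2}$ budget built into~(\ref{eq:nprime_computation}) for $m\ge 3$. What the paper's approach is designed to buy, and yours does not, is that its conclusion holds on the \emph{same} probability-$(1-\epsilon)$ event as Theorem~\ref{thm:Klvopotek_Lemma} and simultaneously for \emph{every} partition $\mathfrak{C}$; the downstream proofs rely on this (e.g.\ Lemma~\ref{lem:Klvopotek_Lemma_two_cluster_centres} applies the present lemma to the merged cluster $C_{12}$ and combines it with the statements for $C_1$ and $C_2$, and the local-minimum results range over neighbouring partitions). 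Your event depends on the fixed partition and is a priori distinct from the pairwise-distance event, so chaining your version with Theorem~\ref{thm:Klvopotek_Lemma} or with several partitions would require enlarging the union bound or degrading the guarantee to $1-2\epsilon$. As literally stated, however, the lemma fixes one partition and asks only for probability $1-\epsilon$, which your argument delivers.
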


\begin{proof}
As we know, data points under $k$-means  are assigned to clusters having the closest cluster centre. 
On the other hand the cluster centre $\boldsymbol\mu$ is the average of all the data point representatives  in the cluster.

Hence the cluster element $i$ has the 
squared distance to its cluster centre $\boldsymbol\mu(\mathfrak{C}(i))$ amounting to

$$\|\mathbf{x}_i-\boldsymbol\mu(\mathfrak{C}(i))\|^2=\frac{1}{|\mathfrak{C}(i)|}\sum_{j \in  \mathfrak{C}(i) }\|\mathbf{x_i} -\mathbf{x_j}\|^2$$ 

But according to Theorem \ref{thm:Klvopotek_Lemma}

$$
(1-\delta)\sum_{j \in  \mathfrak{C}(i) }\|\mathbf{x}_i-\mathbf{x}_j \|^2 \le \frac{n}{n'}
\sum_{j \in  \mathfrak{C}(i) }\|\mathbf{x}_i'-\mathbf{x}_j'\|^2 \le 
(1+\delta)\sum_{j \in  \mathfrak{C}(i) }\|\mathbf{x}_i-\mathbf{x}_j \|^2
$$

Hence
$$
(1-\delta) \|\mathbf{x}_i-\boldsymbol\mu(\mathfrak{C}(i)) \|^2 \le \frac{n}{n'} \|\mathbf{x}_i'-\boldsymbol\mu'(\mathfrak{C}(i))\|^2 \le (1+\delta) \|\mathbf{x}_i-\boldsymbol\mu(\mathfrak{C}(i)) \|^2
$$
Note that here $\boldsymbol\mu'(\mathfrak{C}(i))$ is not the projective image 
of $\boldsymbol\mu(\mathfrak{C}(i))$, but rather the centre of projected images of cluster elements. 

\Bem{
In a similar way we can show that for two neighbouring cluster centres we have
$$
(1-\delta) \|\boldsymbol\mu_1-\boldsymbol\mu_2 \|^2 \le \frac{n}{n'} \|\boldsymbol\mu_1'-\boldsymbol\mu_2'\|^2 \le (1+\delta) \|\boldsymbol\mu_1-\boldsymbol\mu_2 \|^2
$$
}%Bem 
\end{proof}

The Lemma \ref{lem:Klvopotek_Lemma_point_cluster_centre}
permits us to prove Theorem \ref{thm:Klvopotek_k_means_projected_cost}

\begin{proof} {\bf (Theorem \ref{thm:Klvopotek_k_means_projected_cost})}
According to formula (\ref{eq:onecluster}):

$$
(1-\delta) \|\mathbf{x_i}-\boldsymbol\mu(\mathfrak{C}(i)) \|^2 \le \frac{n}{n'} \|\mathbf{x_i'}-\boldsymbol\mu'(\mathfrak{C}(i))\|^2 \le (1+\delta) \|\mathbf{x_i}-\boldsymbol\mu(\mathfrak{C}(i)) \|^2
$$
Hence  
$$
\sum_{i\in \mathfrak{Q}}
(1-\delta) \|\mathbf{x_i}-\boldsymbol\mu(\mathfrak{C}(i)) \|^2 \le 
\sum_{i\in \mathfrak{Q}}
\frac{n}{n'} \|\mathbf{x_i'}-\boldsymbol\mu'(\mathfrak{C}(i))\|^2 \le 
\sum_{i\in \mathfrak{Q}}
(1+\delta) \|\mathbf{x_i}-\boldsymbol\mu(\mathfrak{C}(i)) \|^2
$$

$$
(1-\delta)\sum_{i\in \mathfrak{Q}}
 \|\mathbf{x_i}-\boldsymbol\mu(\mathfrak{C}(i)) \|^2 \le 
\sum_{i\in \mathfrak{Q}}
\frac{n}{n'} \|\mathbf{x_i'}-\boldsymbol\mu'(\mathfrak{C}(i))\|^2 \le 
(1+\delta)\sum_{i\in \mathfrak{Q}}
 \|\mathbf{x_i}-\boldsymbol\mu(\mathfrak{C}(i)) \|^2
$$

Based on defining equations (\ref{eq:defJ}) and (\ref{eq:defJprim})
we get the formula (\ref{eq:Klvopotek_k_means_projected_cost})
\begin{equation*}
(1-\delta) \mathfrak{J}(Q,\mathfrak{C}) \le \frac{n}{n'}
\mathfrak{J}(Q',\mathfrak{C})  \le (1+\delta)\mathfrak{J}(Q,\mathfrak{C})
\end{equation*}
\end{proof}

Let us now investigate the distance between centres of two clusters, say $C_1, C_2$.
Let their cardinalities amount to $m_1,m_2$ respectively. 
Denote $C_{12}=C_1\cup C_2$.
Consequently $m_{12}=|C_{12}|=m_1+m_2$. 
For a set $C_j$ let $VAR(C_j)=\frac1 {|C_j|}\sum_{i\in C_j} 
\|\mathbf{x_i}-\boldsymbol\mu(C_j)\|^2$ and 
 $VAR'(C_j)=\frac1 {|C_j|}\sum_{i\in C_j} 
\|\mathbf{x'_i}-\boldsymbol\mu'(C_j)\|^2$.

Therefore 
 $$VAR(C_{12})=\frac1 {|C_{12}|}\sum_{i\in C_{12}} 
\|\mathbf{x_i}-\boldsymbol\mu(C_{12})\|^2$$
$$=\frac1 {|C_{12}|}\left(
\left(
\sum_{i\in C_{1}} 
\|\mathbf{x_i}-\boldsymbol\mu(C_{12})\|^2
\right)
+\left(
\sum_{i\in C_{2}} 
\|\mathbf{x_i}-\boldsymbol\mu(C_{12})\|^2
\right)
\right)
$$
By inserting a zero 
$$=\frac1 {|C_{12}|}\left(
\left(
\sum_{i\in C_{1}} 
\|\mathbf{x_i}-\boldsymbol\mu(C_{1})+\boldsymbol\mu(C_{1})-\boldsymbol\mu(C_{12})\|^2
\right)
+\left(
\sum_{i\in C_{2}} 
\|\mathbf{x_i}-\boldsymbol\mu(C_{12})\|^2
\right)
\right)
$$

$$=\frac1 {|C_{12}|}\left(
\left(
\sum_{i\in C_{1}} \left(
(\mathbf{x_i}-\boldsymbol\mu(C_{1}))^2
+(\boldsymbol\mu(C_{1})-\boldsymbol\mu(C_{12}))^2
+2(\mathbf{x_i}-\boldsymbol\mu(C_{1}))(\boldsymbol\mu(C_{1})-\boldsymbol\mu(C_{12})) 
\right)\right)
\right. $$ $$ \left. 
+\left(
\sum_{i\in C_{2}} 
\|\mathbf{x_i}-  \boldsymbol\mu(C_{12})\|^2
\right)
\right)
$$

$$=\frac1 {|C_{12}|}\left(
\left(
 \left(\sum_{i\in C_{1}}(\mathbf{x_i}-\boldsymbol\mu(C_{1}))^2\right)
+\left(\sum_{i\in C_{1}}(\boldsymbol\mu(C_{1})-\boldsymbol\mu(C_{12}))^2\right)
\right. \right. $$ $$ \left. \left. 
+2(\sum_{i\in C_{1}}\mathbf{x_i}-\sum_{i\in C_{1}}\boldsymbol\mu(C_{1}))(\boldsymbol\mu(C_{1})-\boldsymbol\mu(C_{12})) 
\right)\right)
+\left(
\sum_{i\in C_{2}} 
\|\mathbf{x_i}-\boldsymbol\mu(C_{2})+\boldsymbol\mu(C_{2})-\boldsymbol\mu(C_{12})\|^2
\right)
$$

$$=\frac1 {|C_{12}|}\left(
\left(
 \left(\sum_{i\in C_{1}}(\mathbf{x_i}-\boldsymbol\mu(C_{1}))^2\right)
+| C_{1}|(\boldsymbol\mu(C_{1})-\boldsymbol\mu(C_{12}))^2
\right. \right. $$ $$ \left. \left. 
+2(| C_{1}|\boldsymbol\mu(C_{1})-|C_{1}|\boldsymbol\mu(C_{1}))(\boldsymbol\mu(C_{1})-\boldsymbol\mu(C_{12})) 
\right)
+\left(
\sum_{i\in C_{2}} 
\|\mathbf{x_i}-\boldsymbol\mu(C_{12})\|^2
\right)
\right)
$$

$$=\frac1 {|C_{12}|}\left(
\left(VAR(C_1) |C_1|
+| C_{1}|(\boldsymbol\mu(C_{1})-\boldsymbol\mu(C_{12}))^2
\right) 
+\left(
\sum_{i\in C_{2}} 
\|\mathbf{x_i}- \boldsymbol\mu(C_{12})\|^2
\right)
\right)
$$
Via the same reasonig we get: 
$$=\frac1 {|C_{12}|}\left(
\left(VAR(C_1) |C_1|
+| C_{1}|(\boldsymbol\mu(C_{1})-\boldsymbol\mu(C_{12}))^2 
\right) 
\right. $$ $$ \left. 
+\left(VAR(C_2) |C_2|
+| C_{2}|(\boldsymbol\mu(C_{2})-\boldsymbol\mu(C_{12}))^2 
\right) 
\right)
$$
$$=\frac1 {|C_{12}|}\left(
VAR(C_1) |C_1|+VAR(C_2) |C_2|
\right. $$ $$ \left. 
+| C_{1}|(\boldsymbol\mu(C_{1})-\boldsymbol\mu(C_{12}))^2 
+| C_{2}|(\boldsymbol\mu(C_{2})-\boldsymbol\mu(C_{12}))^2
\right)
$$
As  
Apparently
$\boldsymbol\mu(C_{12})=\frac{1}{|C_{12}|}\sum_{i\in C_{12}} \mathbf{x_i}
$
$ =\frac{1}{|C_{12}|}\left((\sum_{i\in C_{1}} \mathbf{x_i})+(\sum_{i\in C_{2}} \mathbf{x_i})\right)
$
$ =\frac{1}{|C_{12}|}\left( |C_{1}| \boldsymbol\mu(C_{1})
+|C_{1}| \boldsymbol\mu(C_{1})\right)
$
that is 
$\boldsymbol\mu(C_{12})=
\frac{|C_1|}{|C_{12}|}\boldsymbol\mu(C_{1})
+\frac{|C_2|}{|C_{12}|}\boldsymbol\mu(C_{2}$, we get 
$$=\frac1 {|C_{12}|}\left(
VAR(C_1) |C_1|+VAR(C_2) |C_2|
+| C_{1}|\left(\boldsymbol\mu(C_{1})-\frac{|C_1|}{|C_{12}|}\boldsymbol\mu(C_{1})
-\frac{|C_2|}{|C_{12}|}\boldsymbol\mu(C_{2}\right)^2 
\right. $$ $$ \left. 
+| C_{2}|\left(\boldsymbol\mu(C_{2})-\frac{|C_1|}{|C_{12}|}\boldsymbol\mu(C_{1})
-\frac{|C_2|}{|C_{12}|}\boldsymbol\mu(C_{2})\right)^2
\right)
$$
$$=\frac1 {|C_{12}|}\left(
VAR(C_1) |C_1|+VAR(C_2) |C_2|
+| C_{1}|\left( \frac{|C_2|}{|C_{12}|}\boldsymbol\mu(C_{1})
-\frac{|C_2|}{|C_{12}|}\boldsymbol\mu(C_{2})
\right)^2 
\right. $$ $$ \left. 
+| C_{2}|\left( -\frac{|C_1|}{|C_{12}|}\boldsymbol\mu(C_{1})
+\frac{|C_1|}{|C_{12}|}\boldsymbol\mu(C_{2})\right)^2 
\right)
$$
 
$$=\frac1 {|C_{12}|}\left(
VAR(C_1) |C_1|+VAR(C_2) |C_2|
+
\frac{|C_1||C_2|^2+|C_1|^2|C_2|}{|C_{12}|^2}  \left( \boldsymbol\mu(C_{1})
- \boldsymbol\mu(C_{2})\right)^2
\right)
$$

hence 
$$VAR(C_{12})==\frac1 {|C_{12}|}\left(
VAR(C_1) |C_1|+VAR(C_2) |C_2|
+
\frac{|C_1||C_2| }{|C_{12}| }  \left( \boldsymbol\mu(C_{1})
- \boldsymbol\mu(C_{2})
\right)^2
\right)
$$

% v1=rnorm(100000,10,5)
% v2=rnorm(200000,110,10)
% > var(c(v1,v2))*length(c(v1,v2))
%[1] 688594837
%> var(v1)*length(v1)+var(v2)*length(v2)+(mean(v1)-mean(v2))^2*length(v1)*length(v2)/length(c(v1,v2))
%[1] 688592667
%> var(v1)*length(v1)+var(v2)*length(v2)
%[1] 22398964

This leads immediately to 

$$
  VAR(C_{12})\cdot m_{12}= VAR(C_1)\cdot m_1+VAR(C_2)\cdot m_2+m_1\cdot m_2/m_{12}\cdot \|\boldsymbol\mu(C_1)-\boldsymbol\mu(C_2)\|^2 $$
which implies
$$
  VAR(C_{12})\cdot \frac{m_{12}^2}{ m_1 \cdot m_2}= VAR(C_1)\cdot \frac{m_{12}}{m_2}+VAR(C_2)\cdot \frac{m_{12}}{m_1}+ \|\boldsymbol\mu(C_1)-\boldsymbol\mu(C_2)\|^2 $$

According to Lemma \ref{lem:Klvopotek_Lemma_point_cluster_centre}, applied to the set $C_{12}$ as a cluster,

$$
(1-\delta)
\left(  VAR(C_1)\cdot m_{12}/m_2+VAR(C_2)\cdot m_{12}/m_1+ \|\boldsymbol\mu(C_1)-\boldsymbol\mu(C_2)\|^2 \right)
$$ $$\le 
\frac{n}{n'} 
\left(  VAR'(C_1)\cdot m_{12}/m_2+VAR'(C_2)\cdot m_{12}/m_1+ \|\boldsymbol\mu'(C_1)-\boldsymbol\mu'(C_2)\|^2 \right)
$$ $$\le 
(1+\delta)
\left(  VAR(C_1)\cdot m_{12}/m_2+VAR(C_2)\cdot m_{12}/m_1+ \|\boldsymbol\mu(C_1)-\boldsymbol\mu(C_2)\|^2 \right)
 $$
and with respect to $C_1,C_2$ combined 
$$
(1-\delta)
\left(  VAR(C_1)\cdot m_{12}/m_2+VAR(C_2)\cdot m_{12}/m_1  \right)
$$ $$\le 
\frac{n}{n'} 
\left(  VAR'(C_1)\cdot m_{12}/m_2+VAR'(C_2)\cdot m_{12}/m_1  \right)
$$ $$\le 
(1+\delta)
\left(  VAR(C_1)\cdot m_{12}/m_2+VAR(C_2)\cdot m_{12}/m_1 \right)
 $$
These two last equations mean that

$$
-2\delta
\left(  VAR(C_1)\cdot m_{12}/m_2+VAR(C_2)\cdot m_{12}/m_1\right)+ (1-\delta)\|\boldsymbol\mu(C_1)-\boldsymbol\mu(C_2)\|^2 
$$ $$\le 
\frac{n}{n'} 
\left(   \|\boldsymbol\mu'(C_1)-\boldsymbol\mu'(C_2)\|^2 \right)
$$ $$\le 
2\delta
\left(  VAR(C_1)\cdot m_{12}/m_2+VAR(C_2)\cdot m_{12}/m_1\right)+ (1+\delta)\|\boldsymbol\mu(C_1)-\boldsymbol\mu(C_2)\|^2 
 $$
Let us assume that the quotient 
\begin{equation}\label{eq:pdefinition}
\frac{VAR(C_1)\cdot m_{12}/m_2+VAR(C_2)\cdot m_{12}/m_1}{\|\boldsymbol\mu(C_1)-\boldsymbol\mu(C_2)\|^2} \le p
\end{equation}
where $p$ is some positive number. 
So we have in effect 
$$
  (1-\delta(1+2p))\|\boldsymbol\mu(C_1)-\boldsymbol\mu(C_2)\|^2 
\le 
\frac{n}{n'} 
\left(   \|\boldsymbol\mu'(C_1)-\boldsymbol\mu'(C_2)\|^2 \right)
\le 
  (1+\delta(1+2p))\|\boldsymbol\mu(C_1)-\boldsymbol\mu(C_2)\|^2 
 $$

Under balanced ball-shaped clusters $p$ does not exceed 1. 
So we have shown the lemma
\begin{lemma}  \label{lem:Klvopotek_Lemma_two_cluster_centres}
Under the assumptions of preceding lemmas 
for any two clusters $C_1,C_2$ 
\begin{equation}
  (1-\delta(1+2p))\|\boldsymbol\mu(C_1)-\boldsymbol\mu(C_2)\|^2 
\le 
\frac{n}{n'} 
\left(   \|\boldsymbol\mu'(C_1)-\boldsymbol\mu'(C_2)\|^2 \right)
\le 
  (1+\delta(1+2p))\|\boldsymbol\mu(C_1)-\boldsymbol\mu(C_2)\|^2 
\end{equation}
where $p$ depends on degree of balance between clusters and cluster shape,
holds with probability at least $1-\epsilon$. 
\end{lemma}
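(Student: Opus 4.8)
The plan is to combine the parallel-axis (within-cluster variance) decomposition for the merged cluster $C_{12}=C_1\cup C_2$, already derived above, with the distance-preservation guarantee of Theorem~\ref{thm:Klvopotek_Lemma}, and then to eliminate the remaining variance terms using the assumption~(\ref{eq:pdefinition}). Concretely, I would start from the rearranged identity
\[
VAR(C_{12})\cdot \frac{m_{12}^2}{m_1 m_2}
= VAR(C_1)\cdot\frac{m_{12}}{m_2}+VAR(C_2)\cdot\frac{m_{12}}{m_1}+\|\boldsymbol\mu(C_1)-\boldsymbol\mu(C_2)\|^2
\]
together with its exact analogue in the projected space (with $VAR$, $\boldsymbol\mu$ replaced by $VAR'$, $\boldsymbol\mu'$). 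Writing $A=VAR(C_1)\tfrac{m_{12}}{m_2}+VAR(C_2)\tfrac{m_{12}}{m_1}$, $D=\|\boldsymbol\mu(C_1)-\boldsymbol\mu(C_2)\|^2$, and $A'$, $D'$ for their primed counterparts, the two identities read $\tfrac{m_{12}^2}{m_1m_2}VAR(C_{12})=A+D$ and $\tfrac{m_{12}^2}{m_1m_2}VAR'(C_{12})=A'+D'$, so the target inequality is really a statement comparing $\tfrac{n}{n'}D'$ with $D$.

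The point that makes everything go through is that $VAR(C_1)$, $VAR(C_2)$ and $VAR(C_{12})$ are each averages of squared pairwise distances among the points of the respective sets — this is the identity $\|\mathbf{x}_i-\boldsymbol\mu(C)\|^2=\frac1{|C|}\sum_{j\in C}\|\mathbf{x}_i-\mathbf{x}_j\|^2$ already exploited in the proof of Lemma~\ref{lem:Klvopotek_Lemma_point_cluster_centre} — and hence so is every nonnegative linear combination of them. Therefore, on the single event of Theorem~\ref{thm:Klvopotek_Lemma} (all squared pairwise distances of $Q$ reproduced, after rescaling by $\tfrac{n}{n'}$, up to a factor $1\pm\delta$), which holds with probability at least $1-\epsilon$, I get simultaneously
\[
(1-\delta)A\le \tfrac{n}{n'}A'\le (1+\delta)A
\qquad\text{and}\qquad
(1-\delta)(A+D)\le \tfrac{n}{n'}(A'+D')\le (1+\delta)(A+D),
\]
the second chain being the $(1\pm\delta)$-preservation of $VAR(C_{12})$ — i.e. Lemma~\ref{lem:Klvopotek_Lemma_point_cluster_centre} applied with $C_{12}$ regarded as a one-element partition — multiplied by the positive constant $\tfrac{m_{12}^2}{m_1m_2}$.

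Subtracting the first chain from the second isolates $\tfrac{n}{n'}D'$:
\[
-2\delta A+(1-\delta)D\ \le\ \tfrac{n}{n'}D'\ \le\ 2\delta A+(1+\delta)D .
\]
Now the assumption~(\ref{eq:pdefinition}) gives $A\le pD$, hence $-2\delta A\ge-2\delta pD$ and $2\delta A\le 2\delta pD$, which turns the display into
\[
(1-\delta(1+2p))\|\boldsymbol\mu(C_1)-\boldsymbol\mu(C_2)\|^2\ \le\ \tfrac{n}{n'}\|\boldsymbol\mu'(C_1)-\boldsymbol\mu'(C_2)\|^2\ \le\ (1+\delta(1+2p))\|\boldsymbol\mu(C_1)-\boldsymbol\mu(C_2)\|^2,
\]
which is exactly the claim; the probability $1-\epsilon$ is inherited, with no union bound, from the one event used above. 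It then only remains to remark that for balanced, roughly ball-shaped clusters the quotient in~(\ref{eq:pdefinition}) does not exceed $1$, so one may take $p\le1$.

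I expect the only genuinely delicate point to be the bookkeeping in the middle step: one must be sure that the approximate identities for $C_1$, for $C_2$ and for $C_{12}$ — and therefore for $A$, $A'$, $A+D$ and $A'+D'$ — can be made to hold on one and the same event without any loss in probability. Reducing every quantity involved to a nonnegative combination of squared \emph{pairwise} distances is precisely what makes this automatic, since Theorem~\ref{thm:Klvopotek_Lemma} already controls all of those distances at once. Everything past that is the elementary algebra of subtracting the two inequality chains and substituting the bound~(\ref{eq:pdefinition}).
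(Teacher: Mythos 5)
Your proposal is correct and follows essentially the same route as the paper: the same variance decomposition of $C_{12}$, the same application of the $(1\pm\delta)$ preservation to the combined cluster and to the two within-cluster variance terms, the same subtraction to isolate $\tfrac{n}{n'}\|\boldsymbol\mu'(C_1)-\boldsymbol\mu'(C_2)\|^2$, and the same use of the bound~(\ref{eq:pdefinition}). Your only addition is to spell out why all the approximate identities hold simultaneously on the single event of Theorem~\ref{thm:Klvopotek_Lemma} (every quantity being a nonnegative combination of squared pairwise distances), a point the paper leaves implicit.
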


Now let us consider the choice of $\delta$ in such a way that with high probability no data point will be classified into some other cluster. 
We claim the following
\begin{lemma}\label{lem:Klvopotek_Lemma_no__cluster_change}
Consider two   clusters $C_1,C_2$.
{ } Let $\delta \in (0,\frac12)$, $\epsilon \in (0,1)$.
Let $Q\subset \mathbb{R}^n$ be a set of $m$ points in an $n$-dimensional orthogonal coordinate system $\mathcal{C}_n$  
and let the inequality (\ref{eq:nprime_computation}) hold. 
Let $\mathcal{C}_{n'}$ be a randomly selected (via sampling from a normal distribution)
$n'$-dimensional orthogonal coordinate system. 
For each $\mathbf{x}_i\in Q$ let $\mathbf{v'}$ be its projection onto $\mathcal{C}_{n'}$. 
For   two clusters $C_1,C_2$, obtained via  $k$-means, in the original space let 
  $\boldsymbol\mu_1,  \boldsymbol\mu_2$ be their centres   
and
$\boldsymbol\mu_1',  \boldsymbol\mu_2'$ be centres to the correspondings sets of projected cluster members. 
Furthermore let $d$ be the distance of the first cluster centre to the common border of both clusters 
and let  the closest point of the first cluster to this border be at the distance  of 
  $\alpha d$ from its cluster centre as projected on the line connecting both cluster centres, where   $\alpha\in (0,1)$.
\\
Then all 
projected points  of the first cluster are (each) closer 
to the centre of the set of projected points of the first 
than to 
 the centre of the set of projected points of the second 
  if 
\begin{equation}
% \delta \le  \frac{ 1-\left(1-\frac \ggg2\right)^2}{ \left(1-\frac \ggg2\right)^2+1 }=\frac{1-\alpha^2}{(1+2p)+\alpha^2}
\delta \le  \frac{ 1-\left(1-\frac g2\right)^2}{ \left(1-\frac g2\right)^2+(1+2p) }=\frac{1-\alpha^2}{(1+2p)+\alpha^2}
\end{equation}
where $g=2(1-\alpha)$, 
 with  probability of at least $1-\epsilon$.
\end{lemma}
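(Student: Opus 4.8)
The plan is to reduce the cluster-membership question to one clean squared-distance inequality and then feed it the two preceding lemmas. Fix any $\mathbf{x}_i\in C_1$. Comparing $\|\mathbf{x}_i'-\boldsymbol\mu_1'\|$ directly with $\|\mathbf{x}_i'-\boldsymbol\mu_2'\|$ is awkward, since Lemma~\ref{lem:Klvopotek_Lemma_point_cluster_centre} controls the distance of a projected point only to the centre of \emph{its own} projected cluster and $i\notin C_2$; so instead I would use the reverse triangle inequality. If $\|\mathbf{x}_i'-\boldsymbol\mu_1'\|\le\tfrac12\|\boldsymbol\mu_1'-\boldsymbol\mu_2'\|$, then $\|\mathbf{x}_i'-\boldsymbol\mu_2'\|\ge\|\boldsymbol\mu_1'-\boldsymbol\mu_2'\|-\|\mathbf{x}_i'-\boldsymbol\mu_1'\|\ge\tfrac12\|\boldsymbol\mu_1'-\boldsymbol\mu_2'\|\ge\|\mathbf{x}_i'-\boldsymbol\mu_1'\|$, so $\mathbf{x}_i'$ stays with $\boldsymbol\mu_1'$. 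Hence it suffices to prove $\|\mathbf{x}_i'-\boldsymbol\mu_1'\|^2\le\tfrac14\|\boldsymbol\mu_1'-\boldsymbol\mu_2'\|^2$ for every $\mathbf{x}_i\in C_1$ (and, by the symmetric argument with the roles of $C_1,C_2$ swapped, the analogue for $C_2$).

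Next I would translate the geometric hypothesis. With $d=\tfrac12\|\boldsymbol\mu_1-\boldsymbol\mu_2\|$ the distance from $\boldsymbol\mu_1$ to the bisector hyperplane separating the two $k$-means cells, the assumption that $C_1$ reaches the border only up to $\alpha d$ is read as: $C_1$ lies in the ball of radius $\alpha d$ about $\boldsymbol\mu_1$ (which is precisely what makes the inter-cluster gap equal $gd=2(1-\alpha)d$, i.e. $g$ times $d$, since $1-\tfrac g2=\alpha$). This yields $\|\mathbf{x}_i-\boldsymbol\mu_1\|^2\le\alpha^2 d^2=\tfrac{\alpha^2}{4}\|\boldsymbol\mu_1-\boldsymbol\mu_2\|^2$.

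Now I would plug in the two lemmas, both of which hold on the \emph{single} high-probability event of Theorem~\ref{thm:Klvopotek_Lemma} (all pairwise squared distances preserved up to a factor $1\pm\delta$), hence jointly with probability at least $1-\epsilon$. Lemma~\ref{lem:Klvopotek_Lemma_point_cluster_centre} applied to $C_1$ gives $\tfrac{n}{n'}\|\mathbf{x}_i'-\boldsymbol\mu_1'\|^2\le(1+\delta)\|\mathbf{x}_i-\boldsymbol\mu_1\|^2\le(1+\delta)\tfrac{\alpha^2}{4}\|\boldsymbol\mu_1-\boldsymbol\mu_2\|^2$, while Lemma~\ref{lem:Klvopotek_Lemma_two_cluster_centres} gives $\tfrac{n}{n'}\|\boldsymbol\mu_1'-\boldsymbol\mu_2'\|^2\ge(1-\delta(1+2p))\|\boldsymbol\mu_1-\boldsymbol\mu_2\|^2$. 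The stated bound on $\delta$, namely $\delta\le\frac{1-\alpha^2}{(1+2p)+\alpha^2}$, is exactly the rearrangement of $(1+\delta)\alpha^2\le 1-\delta(1+2p)$ (note $1-\delta(1+2p)>0$ follows, since $1-\alpha^2<1+2p+\alpha^2$); multiplying that inequality by $\tfrac14\|\boldsymbol\mu_1-\boldsymbol\mu_2\|^2\ge0$ and chaining the two displayed bounds yields $\tfrac{n}{n'}\|\mathbf{x}_i'-\boldsymbol\mu_1'\|^2\le\tfrac14\tfrac{n}{n'}\|\boldsymbol\mu_1'-\boldsymbol\mu_2'\|^2$. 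Cancelling the positive factor $n/n'$ closes the reduction step, and therefore the lemma.

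I expect the algebra above to be routine; the genuinely delicate point is pinning down the geometric hypothesis. One must be certain that ``closest point of the first cluster to the border, as projected on the line \dots at distance $\alpha d$'' really delivers the \emph{Euclidean} bound $\|\mathbf{x}_i-\boldsymbol\mu_1\|\le\alpha d$ used above --- a bound on the along-the-line component alone would leave the perpendicular spread of $\mathbf{x}_i$ unconstrained, and then neither the triangle-inequality route nor any $VAR$-based lower bound on $\|\mathbf{x}_i'-\boldsymbol\mu_2'\|$ would close. Stating the intended reading explicitly (each cluster contained in the ball of radius $\alpha d$ around its centre, so the centre-line gap is $gd$), noting that $p$ is finite — e.g. $p\le1$ for balanced ball-shaped clusters as in Lemma~\ref{lem:Klvopotek_Lemma_two_cluster_centres} — and remarking that the strengthened target $\|\mathbf{x}_i'-\boldsymbol\mu_1'\|\le\tfrac12\|\boldsymbol\mu_1'-\boldsymbol\mu_2'\|$ (rather than the bare $k$-means test) is what makes the argument go through with only the two available lemmas, at the cost of the condition being merely \emph{sufficient}, is the only care that the write-up really needs.
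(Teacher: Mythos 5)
Your proposal is correct and follows essentially the same route as the paper: reduce membership to the sufficient condition $\|\mathbf{x}'-\boldsymbol\mu_1'\|\le\frac12\|\boldsymbol\mu_1'-\boldsymbol\mu_2'\|$, bound the left side via Lemma~\ref{lem:Klvopotek_Lemma_point_cluster_centre} by $(1+\delta)(\alpha d)^2$ and the right side via Lemma~\ref{lem:Klvopotek_Lemma_two_cluster_centres} by $(1-\delta(1+2p))d^2$, and rearrange $(1+\delta)\alpha^2\le 1-\delta(1+2p)$ into the stated bound on $\delta$. Your explicit reverse-triangle-inequality justification of the halfway criterion and your flagging of the ambiguity in the geometric hypothesis are small additions the paper leaves implicit, but the argument is the same.
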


\begin{proof}
Consider    a data point $\mathbf{x}$ "close" to the border between the two neighbouring clusters, 
on the line connecting the cluster centres, belonging to the first cluster, 
at a distance $\alpha d$ from its cluster centre, where $d$ is the distance of the first cluster centre to the border   and $\alpha\in (0,1)$.
The squared distance between cluster centres, under projection, can be "reduced" by the factor $1-\delta$, (beside the factor $\frac{n}{n'}$ which is common to all the points) 
whereas the squared distance of $\mathbf{x}$ to its cluster centre may be "increased" by the factor $1+\delta$. 
This implies a relationship between the factor $\alpha$ and the error $\delta$.

If $\mathbf{x'}$ should not cross the border between the clusters, 
the following needs to hold:

\begin{equation}\label{eq:ontheonesideofclusterborder}
\|\mathbf{x}'-\boldsymbol\mu'_1\| \le \frac12\|\boldsymbol\mu'_2-\boldsymbol\mu'_1\|  
\end{equation}
which implies:
$$ \frac{n}{n'} \|\mathbf{x}'-\boldsymbol\mu'_1\|^2 \le  \frac{n}{n'} \frac{1}{4}\|\boldsymbol\mu'_2-\boldsymbol\mu'_1\|^2 $$
As  (see Lemma \ref{lem:Klvopotek_Lemma_point_cluster_centre})
$$ \frac{n}{n'} \|\mathbf{x}'-\boldsymbol\mu'_1\|^2 \le (1+\delta) \|\mathbf{x}-\boldsymbol\mu_1\|^2=(1+\delta)(\alpha d)^2$$ 
and  (see Lemma \ref{lem:Klvopotek_Lemma_two_cluster_centres})
$$\frac{n}{n'} \frac{1}{4}\|\boldsymbol\mu'_2-\boldsymbol\mu'_1\|^2 \ge (1-\delta(1+2p)) \frac{1}{4}\|\boldsymbol\mu_2-\boldsymbol\mu_1\|^2=(1-\delta(1+2p)) d^2$$
we know that, for inequality (\ref{eq:ontheonesideofclusterborder}) to hold, 
 it is sufficient that: 
$$(1+\delta)(\alpha d)^2\le (1-\delta(1+2p)) d^2$$
that is 
$$\alpha\le \sqrt{\frac{1-\delta(1+2p)}{1+\delta}}$$
But $2(1-\alpha)d$ or  $2(1-\alpha)$ can be viewed as absolute or relative gap between clusters.
So if we expect a relative gap $g=2(1-\alpha)$ between clusters, we have to choose $\delta$ in such a way that
$$1-\frac g2\le \sqrt{\frac{1-\delta(1+2p)}{1+\delta}}$$
%$$\left(1-\frac g2\right)^2\le  \frac{1-\delta(1+2p)}{1+\delta}$$
%$$\left(1-\frac g2\right)^2({1+\delta})\le   1-\delta(1+2p) $$
%$$\left(\left(1-\frac g2\right)^2+(1+2p)\right)\delta\le   1-\left(1-\frac g2\right)^2$$
Therefore
\begin{equation}\label{eq:gapanderror}
\delta \le  \frac{ 1-\left(1-\frac g2\right)^2}{ \left(1-\frac g2\right)^2+(1+2p) }
\end{equation}
\end{proof}

So we see that the decision on the permitted error depends on the size of the gap between clusters that we hope to observe. 

The Lemma \ref{lem:Klvopotek_Lemma_no__cluster_change} 
allows us   to prove Theorem \ref{thm:Klvopotek_k_means_projected_local_minimum} 
in a straight forward manner. 

\begin{proof} {\bf (Theorem \ref{thm:Klvopotek_k_means_projected_local_minimum})}
Observe that in this theorem we impose the condition of this lemma on each cluster. 
So all projected points are closer to their set centres than to any other centre.
So the $k$-means algorithm would get stuck at this clustering and hence we get at a local minimum.
\end{proof}

\begin{lemma}\label{lem:Klvopotek_Lemma_no__cluster_change_inverted}
{ } Let $\delta \in (0,\frac12)$, $\epsilon \in (0,1)$.
Let $Q\subset \mathbb{R}^n$ be a set of $m$ points in an $n$-dimensional orthogonal coordinate system $\mathcal{C}_n$  
and let the inequility (\ref{eq:nprime_computation}) hold. 
Let $\mathcal{C}_{n'}$ be a randomly selected (via sampling from a normal distribution)
$n'$-dimensional orthogonal coordinate system. 
For each $\mathbf{x}_i\in Q$ let $\mathbf{v'}$ be its projection onto $\mathcal{C}_{n'}$. 
For any two $k$-means clusters $C_1,C_2$ in the projected space  let 
  $\boldsymbol\mu_1',  \boldsymbol\mu_2'$ be their centres in the   
projected space
and
$\boldsymbol\mu_1,  \boldsymbol\mu_2$ be centres to the corresponding sets of  cluster members in the original space. 
Furthermore let $d$ be the distance of the first cluster centre to the common border of both clusters in the projected space
and let  the closest point of the first cluster to this border in that space be at the distances of 
  $\alpha d$ from its cluster centre%
%as projected on the line connecting both cluster centres $\boldsymbol\mu'_1$, $\boldsymbol\mu'_2$ 
, where   $\alpha\in [0,1)$.
\\
Then all 
 points  of the first cluster in the original space are (each) closer 
to the centre of the set of   points of the first  
than to 
 the centre of the set of points of the second cluster in the original space
  if 
\begin{equation}
 \delta \le  \frac{ 1-\left(1-\frac \ggg2\right)^2}{ \left(1-\frac \ggg2\right)^2+(1+2p) }=\frac{1-\alpha^2}{(1+2p)+\alpha^2}
\end{equation}
 with  probability of at least $1-\epsilon$.
\end{lemma}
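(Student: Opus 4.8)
The plan is to mirror the argument of Lemma \ref{lem:Klvopotek_Lemma_no__cluster_change}, but with the roles of the original and projected spaces exchanged. That earlier lemma started from a geometric condition in the original space and concluded a containment condition for the projected points; here I start from a geometric condition in the \emph{projected} space (the point $\mathbf{x}'$ sits at distance $\alpha d$ from $\boldsymbol\mu'_1$ along the line joining the two projected centres, with $d$ the distance from $\boldsymbol\mu'_1$ to the projected inter-cluster border) and must conclude that the corresponding original point $\mathbf{x}$ stays on its own side of the original border. The key observation is that the two-sided estimates from Lemma \ref{lem:Klvopotek_Lemma_point_cluster_centre} and Lemma \ref{lem:Klvopotek_Lemma_two_cluster_centres} are symmetric: they bound $\frac{n}{n'}\|\mathbf{x}'-\boldsymbol\mu'_1\|^2$ both above and below in terms of $\|\mathbf{x}-\boldsymbol\mu_1\|^2$, and likewise for the centre distance, so they can be inverted.

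First I would write the target inequality in the original space, namely $\|\mathbf{x}-\boldsymbol\mu_1\|\le\tfrac12\|\boldsymbol\mu_2-\boldsymbol\mu_1\|$, and square it. Next I would use the \emph{lower} bound of Lemma \ref{lem:Klvopotek_Lemma_point_cluster_centre}, $\|\mathbf{x}-\boldsymbol\mu_1\|^2\le\frac{1}{1-\delta}\cdot\frac{n}{n'}\|\mathbf{x}'-\boldsymbol\mu'_1\|^2$, to pass from the original squared distance to the projected one, and the \emph{upper} bound of Lemma \ref{lem:Klvopotek_Lemma_two_cluster_centres}, $\frac14\|\boldsymbol\mu_2-\boldsymbol\mu_1\|^2\ge\frac{1}{1+\delta(1+2p)}\cdot\frac{n}{n'}\cdot\frac14\|\boldsymbol\mu'_2-\boldsymbol\mu'_1\|^2$, to pass the centre distance the other way. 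Then, substituting the projected-space geometry $\|\mathbf{x}'-\boldsymbol\mu'_1\|=\alpha d$ and $\|\boldsymbol\mu'_2-\boldsymbol\mu'_1\|=2d$, the common factor $\frac{n}{n'}$ and the $d^2$ cancel, and a sufficient condition becomes $\frac{\alpha^2}{1-\delta}\le\frac{1}{1+\delta(1+2p)}$. Rearranging this linear inequality in $\delta$ gives exactly $\delta\le\frac{1-\alpha^2}{(1+2p)+\alpha^2}$; I would also record the equivalent form written with $g=2(1-\alpha)$, noting that $\alpha=1-\tfrac g2=1-\tfrac{\ggg}{2}$ under the macro used in the statement, so the two displayed expressions coincide. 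The probability $1-\epsilon$ is inherited verbatim, since both auxiliary lemmas hold on the same event governed by the choice of $n'$ in (\ref{eq:nprime_computation}).

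The only mild obstacle is bookkeeping of the direction of each inequality: one must be careful to use the branch of Lemma \ref{lem:Klvopotek_Lemma_point_cluster_centre} that \emph{upper}-bounds $\|\mathbf{x}-\boldsymbol\mu_1\|^2$ (hence the $1/(1-\delta)$ factor) together with the branch of Lemma \ref{lem:Klvopotek_Lemma_two_cluster_centres} that \emph{lower}-bounds $\frac{n}{n'}\|\boldsymbol\mu'_2-\boldsymbol\mu'_1\|^2$ in terms of $\|\boldsymbol\mu_2-\boldsymbol\mu_1\|^2$ (hence the $1/(1+\delta(1+2p))$ factor), since picking the wrong pair would give a necessary-looking but not sufficient condition. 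Once the orientations are fixed, the derivation is the same routine algebra as in the proof of (\ref{eq:gapanderror}), just with $1-\delta$ and $1+\delta(1+2p)$ interchanged relative to that computation.
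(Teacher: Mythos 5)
Your proposal follows essentially the same route as the paper's own proof: start from the containment condition in the original space, upper-bound $\|\mathbf{x}-\boldsymbol\mu_1\|^2$ via the $(1-\delta)^{-1}$ branch of Lemma \ref{lem:Klvopotek_Lemma_point_cluster_centre}, lower-bound $\frac14\|\boldsymbol\mu_2-\boldsymbol\mu_1\|^2$ via the $(1+\delta(1+2p))^{-1}$ branch of Lemma \ref{lem:Klvopotek_Lemma_two_cluster_centres}, and reduce to $\alpha^2/(1-\delta)\le 1/(1+\delta(1+2p))$, which is exactly the paper's sufficient condition. The only quibble (shared with the paper itself) is that solving that inequality for $\delta$ literally gives $\delta\le(1-\alpha^2)/(1+\alpha^2(1+2p))$ rather than $(1-\alpha^2)/((1+2p)+\alpha^2)$; since the latter is the smaller threshold, the stated condition remains sufficient and the argument stands.
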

\begin{proof} 
Consider    a data point $\mathbf{x'}$ "close" to the border between the two neighbouring clusters in the projected space, 
on the line connecting the cluster centres, belonging to the first cluster, 
at a distance $\alpha d$ from its cluster centre, where $d$ is the distance of the first cluster centre to the border   and $\alpha\in (0,1)$.
The squared distance between cluster centres, in original space, can be "reduced" by the factor $(1+\delta)^{-1}$  (beside the factor $\frac{n'}{n}$ which is common to all the points), 
whereas the squared distance of $\mathbf{x}$ to its cluster centre may be "increased" by the factor  $(1-\delta)^{-1}$ . 
This implies a relationship between the factor $\alpha$ and the error $\delta$.

If $\mathbf{x}$ (in the original space) should not  cross the border between the clusters, 
the following needs to hold:

\begin{equation}\label{eq:ontheonesideofclusterborderInv}
\|\mathbf{x}-\boldsymbol\mu_1\| \le \frac12\|\boldsymbol\mu_2-\boldsymbol\mu_1\|  
\end{equation}
which implies:
$$ \frac{n'}{n} \|\mathbf{x}-\boldsymbol\mu_1\|^2 \le  \frac{n'}{n} \frac{1}{4}\|\boldsymbol\mu_2-\boldsymbol\mu_1\|^2 $$
As (see Lemma \ref{lem:Klvopotek_Lemma_point_cluster_centre})
$$ \frac{n'}{n} \|\mathbf{x}-\boldsymbol\mu_1\|^2 \le (1- \delta)^{-1} \|\mathbf{x'}-\boldsymbol\mu'_1\|^2=(1-\delta)^{-1}(\alpha d)^2$$ 
and  (see Lemma \ref{lem:Klvopotek_Lemma_two_cluster_centres})
$$\frac{n'}{n} \frac{1}{4}\|\boldsymbol\mu_2-\boldsymbol\mu_1\|^2 \ge (1+\delta(1+2p))^{-1} \frac{1}{4}\|\boldsymbol\mu'_2-\boldsymbol\mu'_1\|^2=(1+\delta(1+2p))^{-1} d^2$$
Thus, we know that, for inequality (\ref{eq:ontheonesideofclusterborderInv}) to hold, 
 it is sufficient that: 
$$(1-\delta)^{-1}(\alpha d)^2\le (1+\delta(1+2p))^{-1} d^2$$
that is 
$$\alpha\le \sqrt{\frac{1-\delta}{1+\delta(1+2p)}}$$
But $2(1-\alpha)d$ or  $2(1-\alpha)$ can be viewed as absolute or relative gap between clusters.
So if we want to have a relative gap $g=2(1-\alpha)$ between clusters, we have to choose $\delta$ in such a way that
$$1-\frac g2\le \sqrt{\frac{1-\delta}{1+\delta(1+2p)}}$$
%$$\left(1-\frac g2\right)^2\le  \frac{1-\delta}{1+\delta}$$
%$$\left(1-\frac g2\right)^2({1+\delta})\le   1-\delta $$
%$$\left(\left(1-\frac g2\right)^2+1\right)\delta\le   1-\left(1-\frac g2\right)^2$$
Therefore
\begin{equation}\label{eq:gapanderrorZwei}
\delta \le  \frac{ 1-\left(1-\frac g2\right)^2}{ \left(1-\frac g2\right)^2+(1+2p) }
\end{equation}
\end{proof}

The Lemma \ref{lem:Klvopotek_Lemma_no__cluster_change_inverted} 
allows us   to prove Theorem \ref{thm:Klvopotek_k_means_inverted_local_minimum} 
in a straight forward manner. 

\begin{proof} {\bf (Theorem \ref{thm:Klvopotek_k_means_inverted_local_minimum})}
Observe that in this theorem we impose the condition of this lemma on each cluster. 
So all original space points are closer to their set centres  than to any other centre.
So the $k$-means algorithm would get stuck at this clustering and hence we get at a local minimum.
\end{proof}

Having these results, we can go over to the proof of the Theorem
\ref{thm:Klvopotek_k_means_projected_global_minimum}.
\begin{proof} {\bf (Theorem \ref{thm:Klvopotek_k_means_projected_global_minimum})}
 Let $\mathfrak{C_G} $ denote the clustering reaching the global optimum in the original space. 
Let $\mathfrak{C'_G} $ denote the clustering reaching the global optimum in the projected space. 
From the Theorem \ref{thm:Klvopotek_k_means_projected_cost} we have that 

$$(1-\delta) \mathfrak{J}(Q,\mathfrak{C_G}) \le \frac{n}{n'}
\mathfrak{J}(Q',\mathfrak{C_G})  \le (1+\delta)\mathfrak{J}(Q,
\mathfrak{C_G})
$$

On the other hand 
$$(1+\delta)^{-1} \mathfrak{J}(Q',\mathfrak{C'_G}) \le \frac{n'}{n}
\mathfrak{J}(Q,\mathfrak{C'_G})  \le (1-\delta)^{-1}\mathfrak{J}(Q',
\mathfrak{C'_G})
$$

As $\mathfrak{C'_G} $  is the global minimum in the projected space, hence 
$$ \mathfrak{J}(Q',\mathfrak{C'_G})\le 
\mathfrak{J}(Q',\mathfrak{C_G})$$
So 
$$\frac{n}{n'}
 \mathfrak{J}(Q',\mathfrak{C'_G}) \le \frac{n}{n'}
\mathfrak{J}(Q',\mathfrak{C_G})  \le (1+\delta)\mathfrak{J}(Q,
\mathfrak{C_G})
$$
So 
$$\frac{n}{n'}
 \mathfrak{J}(Q',\mathfrak{C'_G})  \le (1+\delta)\mathfrak{J}(Q,
\mathfrak{C_G})
$$

Note that analogously,
 $\mathfrak{C_G} $  is the global minimum in the original space, hence 
$$ \mathfrak{J}(Q,\mathfrak{C_G})\le 
\mathfrak{J}(Q,\mathfrak{C'_G})$$

\begin{equation}\label{eq:global2}
 \frac{n'}{n}\mathfrak{J}(Q,\mathfrak{C_G})\le 
 \frac{n'}{n}\mathfrak{J}(Q,\mathfrak{C'_G}) \le 
 (1-\delta)^{-1}\mathfrak{J}(Q',\mathfrak{C'_G})
\end{equation}

\end{proof}

\section{Clusterability and the dimensionality reduction}\label{sec:clusterability}

In the literature a number of notions of so-called clusterability have been introduced. 
 Under these notions of clusterability algorithms have been developed clustering the data nearly optimally in polynomial times, when some constraints are matched by the  clusterability parameters. 

It seems therefore worth to have a look at the issue if the aforementioned projection technique would affect the clusterability property of the data sets. 

Let us consider, as representatives,    the following notions of clusterability, present in the literature:
\begin{itemize}
\item \emph{Perturbation Robustness} meaning that   small
perturbations of distances / positions in space of set elements  do not result in a change of the optimal clustering for that data 
set. Two brands may be distinguished: additive \cite{Ackerman:2009} and multiplicative ones \cite{Bilu:2012} (the limit of perturbation is upper-bounded either by an absolute value or by a coefficient).  
\\
The $s$-Multiplicative Perturbation Robustness ($0<s<1)$ 
holds for a data set  
 with  $d_1$ being its distance function
if the following holds.
Let $\mathfrak{C}$ be an optimal clustering 
of data points for this distance. 
Let $d_2$ be any distance function over the same set of points such that for any two points $\mathbf{u}, \mathbf{v}$, 
$s\cdot d_1(\mathbf{u}, \mathbf{v})<d_2(\mathbf{u}, \mathbf{v})<\frac1s \cdot d_1(\mathbf{u}, \mathbf{v})$. 
Then the same clustering  $\mathfrak{C}$  is optimal under the distance function $d_2$. 
\\ 
The $s$-Additive Perturbation Robustness ($0<s<1)$ 
holds for a data set  
 with  $d_1$ being its  distance function
if the following holds.
Let $\mathfrak{C}$ be an optimal clustering 
of data points for this distance. 
Let $d_2$ be any distance function over the same set of points such that for any two points $\mathbf{u}, \mathbf{v}$, 
$ d_1(\mathbf{u}, \mathbf{v})-s<d_2(\mathbf{u}, \mathbf{v})<  \cdot d_1(\mathbf{u}, \mathbf{v})+s$. 
Then the same clustering  $\mathfrak{C}$  is optimal under the distance function $d_2$. 
\\Subsequently we are interested only in the multiplicative version.
\item \emph{$\sigma$-Separatedness} \cite{Ostrovsky:2013} meaning that the cost 
$ \mathfrak{J}(Q, \mathfrak{C}_k)$ 
of optimal  clustering $\mathfrak{C}_k$ of the data set $Q$ 
into $k$ clusters is less than $\sigma^2$ ($0<\sigma<1$) times the cost
$ \mathfrak{J}(Q, \mathfrak{C}_{k-1})$ 
 of optimal clustering $\mathfrak{C}_{k-1}$ into $k-1$ clusters
$ \mathfrak{J}(Q, \mathfrak{C}_{k-1}) <  \sigma^2 \mathfrak{J}(Q, \mathfrak{C}_{k-1})$   
\item \emph{
$(c, \sigma)$-Approximation-Stability} 
\cite{Balcan:2009} meaning that if the cost function values of two partitions 
$\mathfrak{C}_a,\mathfrak{C}_b$ 
differ by at most the factor $c>1$ (that is $c\cdot\mathfrak{J}(Q,\mathfrak{C}_a)\ge\mathfrak{J}(Q,\mathfrak{C}_b)$ and $c\cdot\mathfrak{J}(Q,\mathfrak{C}_b)\ge\mathfrak{J}(Q,\mathfrak{C}_a)$), then the distance (in some space) between the partitions is at most $\sigma$ ($d(\mathfrak{C}_a,\mathfrak{C}_b)<\sigma$ for some distance function $d$ between partiitions).
As Ben-David  \cite{Ben-David:2015} recalls, this implies the uniqueness of optimal solution.
\item \emph{
$\beta$-Centre Stability} \cite{Awasthi:2012} meaning, for any centric clustering, that the distance of an element to its cluster centre is $\beta>1$ times smaller than the distance to any other cluster centre under optimal clustering. 
\item \emph{$(1+\beta)$ Weak Deletion Stability} \cite{Awasthi:2010} ($\beta>0$) meaning that given an optimal cost function value $OPT$ for $k$ centric clusters,  the cost function of a clustering obtained by deleting one of the cluster centres and assigning elements of that cluster to one of the remaining clusters should be bigger than  $(1+\beta)\cdot OPT$.
\end{itemize}

Let us first have a look at the $\sigma$-Separatedness. 
Let $\mathfrak{C_{G,k}}$ denote 
an optimal clustering into $k$ clusters  in the original space 
and 
$\mathfrak{C'_{G,k}}$ in the  projected space.
From properties of $k$-means we know that 
 $\mathfrak{J}(Q,\mathfrak{C_{G,k}})\le  \mathfrak{J}(Q,\mathfrak{C_{G,k-1}})$
and
 $\mathfrak{J}(Q',\mathfrak{C'_{G,k}})\le  \mathfrak{J}(Q',\mathfrak{C'_{G,k-1}})$.
 From theorem \ref{thm:Klvopotek_k_means_projected_global_minimum} we know that 
\begin{equation*} 
\frac{n}{n'}
 \mathfrak{J}(Q',\mathfrak{C'_{G,k}})  \le (1+\delta)\mathfrak{J}(Q,
\mathfrak{C_{G,k}})
\end{equation*}
and
\begin{equation*}  
\mathfrak{J}(Q,\mathfrak{C_{G,k-1}}) \le 
 \frac{n}{n'} (1-\delta)^{-1}\mathfrak{J}(Q',\mathfrak{C'_{G,k-1}})
\end{equation*}

$\sigma$-Separatedness implies that 
$$\sigma^2\ge 
\frac{\mathfrak{J}(Q,\mathfrak{C_{G,k}}) }
     {\mathfrak{J}(Q,\mathfrak{C_{G,k-1}}) }
\ge
\frac{ \frac{n}{n'} (1+\delta)^{-1} \mathfrak{J}(Q',\mathfrak{C_{G,k}}) }
     {\mathfrak{J}(Q,\mathfrak{C_{G',k-1}})  }
$$ $$\ge
\frac{\frac{n}{n'} (1+\delta)^{-1} \mathfrak{J}(Q',\mathfrak{C'_{G,k}})}
     {\frac{n}{n'} (1-\delta)^{-1} \mathfrak{J}(Q',\mathfrak{C'_{G,k-1}})}
=
\frac{  (1-\delta)  \mathfrak{J}(Q',\mathfrak{C'_{G,k}})}
     {  (1+\delta)  \mathfrak{J}(Q',\mathfrak{C'_{G,k-1}})}
$$ 

This implies 
$$\sigma^2\frac{1+\delta}{1-\delta}\ge 
\frac{  \mathfrak{J}(Q',\mathfrak{C'_{G,k}})}
     {  \mathfrak{J}(Q',\mathfrak{C'_{G,k-1}})}
$$ 

So we claim 
\begin{theorem}\label{thm:clusterability-sigmaSeparatedness}
Under the assumptions and notation of Theorem \ref{thm:Klvopotek_k_means_projected_cost},
If the data set $Q$ has the property of $\sigma$-Separatedness in the original space, 
then 
with probability at least $1-\epsilon$
it has the property of 
$\sigma\sqrt{\frac{1+\delta}{1-\delta}}$-Separatedness in the projected space. 
\end{theorem}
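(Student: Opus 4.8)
The plan is to assemble the chain of inequalities already prepared in the text into a clean deductive argument. First I would invoke Theorem \ref{thm:Klvopotek_k_means_projected_global_minimum} together with its symmetric counterpart (inequality (\ref{eq:global2})), which together yield, with probability at least $1-\epsilon$ under a single random draw of $\mathcal{C}_{n'}$, the two-sided comparison between the optimal $k$-means costs in the original and projected spaces for every fixed number of clusters. Crucially, all of these bounds descend from Theorem \ref{thm:Klvopotek_k_means_projected_cost} and ultimately Theorem \ref{thm:Klvopotek_Lemma}, and they all hold \emph{on the same event} of probability $\ge 1-\epsilon$; this is the point that must be stated explicitly so that the final probabilistic claim is legitimate, the union bound over the $\binom m2$ pairs having already been absorbed in the choice of $n'$.

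Next I would record the elementary monotonicity fact that the optimal $k$-means cost is non-increasing in $k$, i.e. $\mathfrak{J}(Q,\mathfrak{C_{G,k}})\le \mathfrak{J}(Q,\mathfrak{C_{G,k-1}})$ and likewise $\mathfrak{J}(Q',\mathfrak{C'_{G,k}})\le \mathfrak{J}(Q',\mathfrak{C'_{G,k-1}})$ in the projected space (here $\mathfrak{C'_{G,k-1}}$ must be the \emph{projected-space} optimum, not the image of the original-space optimum). Writing out the definition of $\sigma$-Separatedness as $\mathfrak{J}(Q,\mathfrak{C_{G,k}}) < \sigma^2\,\mathfrak{J}(Q,\mathfrak{C_{G,k-1}})$, I would then substitute the lower bound for $\mathfrak{J}(Q,\mathfrak{C_{G,k}})$ in terms of $\mathfrak{J}(Q',\mathfrak{C'_{G,k}})$ (via optimality of $\mathfrak{C'_{G,k}}$ in $Q'$) and the upper bound for $\mathfrak{J}(Q,\mathfrak{C_{G,k-1}})$ in terms of $\mathfrak{J}(Q',\mathfrak{C'_{G,k-1}})$. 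The common factor $n/n'$ cancels, leaving $\mathfrak{J}(Q',\mathfrak{C'_{G,k}}) < \sigma^2\frac{1+\delta}{1-\delta}\,\mathfrak{J}(Q',\mathfrak{C'_{G,k-1}})$, which is exactly $\sigma'^2$-Separatedness of $Q'$ with $\sigma'=\sigma\sqrt{(1+\delta)/(1-\delta)}$.

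The only real subtlety — and hence the ``main obstacle'' — is bookkeeping rather than mathematics: one must keep track of the direction of each comparison (using optimality of $\mathfrak{C'_{G,k}}$ for the forward step and of $\mathfrak{C_{G,k}}$ for the reverse step), since the displayed chain in the text silently switches between $\mathfrak{C_{G,k}}$ and $\mathfrak{C'_{G,k}}$, and one must confirm that no additional probability is lost beyond the single event already secured by Theorem \ref{thm:Klvopotek_Lemma}. Once this is pinned down the result is immediate. I would close with the remark that the statement is informative precisely when $\delta$ is small enough that $\sigma\sqrt{(1+\delta)/(1-\delta)}<1$, so that the clusterability property is transmitted to the projected space, merely with a mildly worsened separatedness constant.
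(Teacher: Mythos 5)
Your argument is correct and is essentially the paper's own proof: the paper likewise combines the monotonicity of the optimal $k$-means cost in $k$ with the two directions of Theorem \ref{thm:Klvopotek_k_means_projected_global_minimum} (including inequality (\ref{eq:global2})) to bound $\mathfrak{J}(Q,\mathfrak{C_{G,k}})$ from below by $\frac{n}{n'}(1+\delta)^{-1}\mathfrak{J}(Q',\mathfrak{C'_{G,k}})$ and $\mathfrak{J}(Q,\mathfrak{C_{G,k-1}})$ from above by $\frac{n}{n'}(1-\delta)^{-1}\mathfrak{J}(Q',\mathfrak{C'_{G,k-1}})$, cancel $n/n'$, and read off the new constant $\sigma\sqrt{(1+\delta)/(1-\delta)}$. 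Your explicit remark that all bounds must hold on the single event of probability $\ge 1-\epsilon$ secured by Theorem \ref{thm:Klvopotek_Lemma} is a point the paper leaves implicit, but it does not change the route.
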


The fact that this Separatedness increases is of course a defficiency, because clustring algorithms require as low Separatedness as possible  (because the clusters are then better separated).

Let us turn to the $(c,\sigma)$-Approximation-Stability.
We can reformulate it as follows:
if
 the distance (in some space) between the partitions is more than  $\sigma$ then 
 the cost function values of two partitions differ by at least   the factor $c>1$.
Consider now two partitions $C_1, C_2$, with distance over $\sigma$ in some abstract partition space, not related to the embedding spaces. 
Then in the original space the following must hold. 
$$\mathfrak{J}(Q,\mathfrak{C_1}) \ge c \cdot \mathfrak{J}(Q,
\mathfrak{C_2})$$
Under the projection we get 
$$(1-\delta)^{-1}\frac{n}{n'}\mathfrak{J}(Q',\mathfrak{C_1}) \ge c \cdot (1+\delta)^{-1} \frac{n}{n'} \mathfrak{J}(Q',
\mathfrak{C_2})$$
$$\mathfrak{J}(Q',\mathfrak{C_1}) \ge c \cdot \frac{1-\delta}{1+\delta}  \mathfrak{J}(Q',
\mathfrak{C_2})$$

This result means that
\begin{theorem}\label{thm:clusterability-cSigmaApproximation}
Under the assumptions and notation of Theorem \ref{thm:Klvopotek_k_means_projected_cost},
if the data set $Q$ has the property of $(c, \sigma)$-Approximation-
Stability in the original space, 
then 
with probability at least $1-\epsilon$
it has the property of 
$(c \cdot \frac{1-\delta}{1+\delta} ,\sigma$-Approximation Stability property in the projected space. 
\end{theorem}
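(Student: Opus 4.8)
The plan is to use the contrapositive reformulation of $(c,\sigma)$-Approximation-Stability noted just above the statement, together with the two-sided cost estimate of Theorem \ref{thm:Klvopotek_k_means_projected_cost}. First I would fix one realization of the random coordinate system $\mathcal{C}_{n'}$ on which the conclusion (\ref{eq:Klvopotek_k_means_projected_cost}) of Theorem \ref{thm:Klvopotek_k_means_projected_cost} holds; this is a single event of probability at least $1-\epsilon$, and, crucially, on that event inequality (\ref{eq:Klvopotek_k_means_projected_cost}) holds \emph{simultaneously for every} partition $\mathfrak{C}$ of $\mathfrak{Q}$, so no union bound over partitions is needed and the failure probability does not degrade. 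The rest of the argument is deterministic on this event.

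Second, I would take an arbitrary pair of partitions $\mathfrak{C}_a,\mathfrak{C}_b$ of $\mathfrak{Q}$ whose distance in the (abstract, embedding-independent) partition space exceeds $\sigma$. By the contrapositive form of Approximation-Stability in the original space, their costs cannot differ by at most the factor $c$, so, after relabelling if necessary, $\mathfrak{J}(Q,\mathfrak{C}_a)\ge c\,\mathfrak{J}(Q,\mathfrak{C}_b)$. Then I would transport this to the projected space using Theorem \ref{thm:Klvopotek_k_means_projected_cost} on $\mathfrak{C}_a$ (lower side) and on $\mathfrak{C}_b$ (upper side):
\[
\frac{n}{n'}\mathfrak{J}(Q',\mathfrak{C}_a)\ge(1-\delta)\mathfrak{J}(Q,\mathfrak{C}_a)\ge c(1-\delta)\mathfrak{J}(Q,\mathfrak{C}_b)\ge c\,\frac{1-\delta}{1+\delta}\,\frac{n}{n'}\mathfrak{J}(Q',\mathfrak{C}_b),
\]
and cancelling the common factor $n/n'$ gives $\mathfrak{J}(Q',\mathfrak{C}_a)\ge c\,\frac{1-\delta}{1+\delta}\,\mathfrak{J}(Q',\mathfrak{C}_b)$. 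Since $\mathfrak{C}_a,\mathfrak{C}_b$ were an arbitrary pair at partition-distance more than $\sigma$, reading this back through the definition establishes $(c\cdot\frac{1-\delta}{1+\delta},\sigma)$-Approximation-Stability in the projected space.

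The main obstacle here is conceptual rather than computational: one must be sure that ``the same $\sigma$'' is legitimate, i.e. that the inter-partition distance $d$ is a function of the partitions (combinatorial objects over the fixed index set $\mathfrak{Q}$) alone and is therefore not among the quantities the projection distorts. I would make this explicit — the projection acts on the point representatives, not on the partition space — and note that $\delta\in(0,\tfrac12)$ forces $\frac{1-\delta}{1+\delta}<1$, so the stability constant genuinely shrinks; once these observations are in place, the remainder is just the three applications of the already-proved inequality displayed above.
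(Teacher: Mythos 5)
Your proposal is correct and follows essentially the same route as the paper: pass to the contrapositive of $(c,\sigma)$-Approximation-Stability, then sandwich the two partition costs using the two sides of inequality (\ref{eq:Klvopotek_k_means_projected_cost}) to obtain the degraded factor $c\cdot\frac{1-\delta}{1+\delta}$. Your explicit remarks that the probability-$(1-\epsilon)$ event covers all partitions simultaneously and that the partition distance is untouched by the projection are points the paper leaves implicit, but they do not change the argument.
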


Let us now consider $s$-Multiplicative Perturbation Stability.
We claim that 

\begin{lemma}\label{lem:Klvopotek_double_perturbation}
If the data set $Q$ has the property of $\sqrt{s}$-Multiplicative Perturbation Robustness under the distance $\sqrt{d_1}$,
and the set $Q_p$ is its perturbation with distance $\sqrt{d_2}$ such that 
$\nu  d_1\le d_2\le\frac 1\nu  d_1$,
and $s=\nu \cdot s_p$, where $0<\nu ,s_p<1$,
then 
set $Q_p$ has the property of $\sqrt{s_p}$-Multiplicative Perturbation Robustness
\end{lemma}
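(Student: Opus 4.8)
The plan is to invoke the $\sqrt{s}$-Multiplicative Perturbation Robustness of $Q$ \emph{twice}: once to show that the optimal clustering does not change when we pass from $Q$ (with distance $\sqrt{d_1}$) to $Q_p$ (with distance $\sqrt{d_2}$), and a second time to transfer robustness from $Q$ to every further perturbation of $Q_p$.

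First I would check that $\sqrt{d_2}$ is itself an admissible $\sqrt{s}$-multiplicative perturbation of $\sqrt{d_1}$. Taking square roots in $\nu d_1\le d_2\le\frac1\nu d_1$ gives $\sqrt\nu\,\sqrt{d_1}\le\sqrt{d_2}\le\frac1{\sqrt\nu}\sqrt{d_1}$ pointwise; and since $s=\nu s_p$ with $s_p<1$ we have $\sqrt{s}<\sqrt\nu$, hence $\sqrt{s}\,\sqrt{d_1}<\sqrt{d_2}<\frac1{\sqrt{s}}\sqrt{d_1}$. Letting $\mathfrak{C}$ be an optimal clustering of $Q$ under $\sqrt{d_1}$, the $\sqrt{s}$-robustness hypothesis then forces $\mathfrak{C}$ to be optimal under $\sqrt{d_2}$ as well, i.e. $\mathfrak{C}$ is (an) optimal clustering for the perturbed set $Q_p$. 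Next, let $d'$ be any distance with $\sqrt{s_p}\,\sqrt{d_2}<d'<\frac1{\sqrt{s_p}}\sqrt{d_2}$ pointwise. Chaining this with $\sqrt\nu\,\sqrt{d_1}\le\sqrt{d_2}\le\frac1{\sqrt\nu}\sqrt{d_1}$ yields $\sqrt{\nu s_p}\,\sqrt{d_1}<d'<\frac1{\sqrt{\nu s_p}}\sqrt{d_1}$, that is $\sqrt{s}\,\sqrt{d_1}<d'<\frac1{\sqrt{s}}\sqrt{d_1}$, so $d'$ again lies in the $\sqrt{s}$-band around $\sqrt{d_1}$. Applying $\sqrt{s}$-robustness of $Q$ once more, $\mathfrak{C}$ is optimal under $d'$; since $d'$ was arbitrary in the $\sqrt{s_p}$-band around $\sqrt{d_2}$, this is exactly the assertion that $Q_p$ has $\sqrt{s_p}$-Multiplicative Perturbation Robustness.

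The delicate points I expect to have to spell out are two. First, the bookkeeping of the square-root reparametrisation — the hypothesis mixes squared quantities ($d_1,d_2,s,\nu$) with distance-level quantities ($\sqrt{d_1},\sqrt{d_2},\sqrt{s}$) — so one must verify that the multiplicative factors genuinely compose as $\sqrt{s_p}\cdot\sqrt\nu=\sqrt{s}$. Second, the strictness of the inequalities: the perturbation band of the definition is open, whereas the relation between $d_1$ and $d_2$ is stated non-strictly, so one must use $s<\nu$ (equivalently $s_p<1$) to obtain the strict containments needed before invoking robustness. Uniqueness of the optimal clustering is \emph{not} needed: it suffices to exhibit a single clustering $\mathfrak{C}$ that is simultaneously optimal for $\sqrt{d_1}$, for $\sqrt{d_2}$, and for every such $d'$, which the argument above does.
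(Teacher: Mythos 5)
Your proposal is correct and takes essentially the same route as the paper's own proof: both arguments first note that $Q_p$ lies in the $\sqrt{s}$-band around $Q$ (so the optimum is shared), then chain the factor bounds $s\,d_1=s_p\nu\,d_1\le s_p\,d_2\le d_3\le \frac{1}{s_p}d_2\le\frac{1}{s_p\nu}d_1=\frac{1}{s}d_1$ to show that any $\sqrt{s_p}$-perturbation of $Q_p$ is a $\sqrt{s}$-perturbation of $Q$, hence shares the same optimal clustering. Your treatment is in fact slightly more careful than the paper's on the open-versus-closed band issue (using $s<\nu$ to restore strict inequalities), but the underlying argument is identical.
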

\begin{proof}
Apparently $Q_p$  is a perturbation
of $Q$ such that both share same optimal clustering.
Let $Q_q$ be a perturbation of $Q_p$, with distance $\sqrt{d_3}$, such that 
$s_p d_2\le d_3 \le \frac{1}{s_p} d_2$.
Then 
$sd_1=s_p \nu  d_1 \le s_p d_2\le d_3 \le \frac{1}{s_p} d_2 \le \frac{1}{s_p\nu } d_1 =\frac{1}{s} d_1$
that is $Q_q$ is a perturbation of $Q$ such that both share same optimal clustering. 
So $Q_p$ and $Q_q$ share common optimal clustering, hence 
$Q_p$ has 
the property of $\sqrt{s_p}$-Multiplicative Perturbation Robustness
\end{proof}

We claim that 

\begin{lemma}\label{lem:Klvopotek_identical_global}
Under the assumptions and notation of Theorem \ref{thm:Klvopotek_k_means_projected_cost},
if the data set $Q$ has the property of $\sqrt{s}$-Multiplicative Perturbation Robustness
with $s<1-\delta$, and if $\mathfrak{C_G}$ is the global optimum of $k$-means in $Q$, 
then it is also the global optimum in $Q'$ 
with probability at least $1-\epsilon$
\end{lemma}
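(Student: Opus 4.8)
The plan is to exhibit the (rescaled) projected squared-distance function as a legitimate multiplicative perturbation of the original squared-distance function, and then invoke the robustness hypothesis directly.

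First I would recall the elementary identity already used in the proof of Lemma~\ref{lem:Klvopotek_Lemma_point_cluster_centre}, namely that for any partition $\mathfrak{C}$ the $k$-means cost can be written purely in terms of pairwise squared distances,
\begin{equation*}
\mathfrak{J}(Q,\mathfrak{C})=\sum_{j=1}^k \frac{1}{2|C_j|}\sum_{i,i'\in C_j}\|\mathbf{x}_i-\mathbf{x}_{i'}\|^2 ,
\end{equation*}
and likewise for $Q'$ with $\mathbf{x}'_i$ in place of $\mathbf{x}_i$. Consequently the optimal clustering of a point set is a function of the matrix of pairwise squared distances alone, and multiplying that matrix by a positive constant (here the constant $n/n'$) leaves the set of cost-minimising clusterings unchanged.

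Next I would set $d_1(i,i')=\|\mathbf{x}_i-\mathbf{x}_{i'}\|^2$ and $d_2(i,i')=\frac{n}{n'}\|\mathbf{x}'_i-\mathbf{x}'_{i'}\|^2$, so that $\sqrt{d_1}$ and $\sqrt{d_2}$ are (rescaled) Euclidean metrics on $\mathfrak{Q}$. By Theorem~\ref{thm:Klvopotek_Lemma}, on an event of probability at least $1-\epsilon$ we have $(1-\delta)d_1\le d_2\le(1+\delta)d_1$ for every pair of points. I would then check that this places $d_2$ strictly inside the $\sqrt{s}$-multiplicative perturbation window of $\sqrt{d_1}$: from $s<1-\delta$ we get $s\,d_1<d_2$ whenever $d_1>0$, and since $\delta\in(0,\tfrac12)$ gives $1-\delta<\frac{1}{1+\delta}$, the hypothesis $s<1-\delta$ also yields $s<\frac{1}{1+\delta}$, hence $1+\delta<\frac1s$ and therefore $d_2<\frac1s d_1$ (coincident points, where $d_1=d_2=0$, being left unmoved).

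Finally, since $\sqrt{d_2}$ is a valid $\sqrt{s}$-multiplicative perturbation of $\sqrt{d_1}$, the $\sqrt{s}$-Multiplicative Perturbation Robustness of $Q$ forces the optimal $k$-means clustering under $d_2$ to be again $\mathfrak{C_G}$. By the first paragraph the optimal clustering under $d_2$ is precisely the optimal $k$-means clustering of $Q'$, because $\mathfrak{J}(Q',\mathfrak{C})=\frac{n'}{n}\sum_{j}\frac{1}{2|C_j|}\sum_{i,i'\in C_j}d_2(i,i')$ differs from the $d_2$-cost only by the positive factor $n'/n$. Hence $\mathfrak{C_G}$ is the global $k$-means optimum in $Q'$, on the same probability-$(1-\epsilon)$ event that underlies Theorem~\ref{thm:Klvopotek_Lemma}. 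The only delicate point is the bookkeeping needed to place $d_2$ strictly inside the perturbation interval; once that two-sided estimate is in hand, the rest is a direct substitution.
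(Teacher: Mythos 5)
Your proposal is correct and follows essentially the same route as the paper: identify the rescaled projected squared distances as a multiplicative perturbation of the original ones lying strictly inside the $\sqrt{s}$-window (using $s<1-\delta$ and Theorem~\ref{thm:Klvopotek_Lemma}), then invoke the robustness hypothesis together with scale-invariance of the $k$-means optimum. The paper phrases this as a proof by contradiction and leaves the scale-invariance step as an assertion, whereas you justify it via the pairwise-distance form of the cost and handle the coincident-point boundary case explicitly, but the underlying argument is the same.
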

\begin{proof}
Assume the contrary that is that in $Q'$ some other  clustering $\mathfrak{C'_G}$ is the global optimum. 
Let us define the distance 
$\sqrt{d_1(i,j)}=\|\mathbf{x_i}-\mathbf{x_j}\|$ and
$\sqrt{d_2(i,j)}=\frac{n}{n'}\|\mathbf{x_i'}-\mathbf{x_j'}\|$.
The distance $\sqrt{d_2}$ is a realistic distance in the coordinate system $\mathcal{C}$ as we assume $n>n'$.
As the $k$-means optimum does not change under rescaling, 
so $\mathfrak{C'_G}$ is also an optimal solution for clustering task under $d_2$.
But 
$$s d_1(i,j)<(1-\delta) d_1(i,j) \le  d_2(i,j) 
\le (1+\delta) d_1(i,j) <(1-\delta)^{-1} d_1(i,j) < s^{-1}  d_1(i,j)$$
hence the distance $\sqrt{d_2}$ is a perturbation of $\sqrt{d_1}$ 
and hence $\mathfrak{C_G}$ should be optimal under $\sqrt{d_2}$ also. 
We get a contradiction. So the claim of the lemma must be true.
\end{proof}

This implies that  
\begin{theorem}\label{thm:clusterability-multPertRobust}
Under the assumptions and notation of Theorem \ref{thm:Klvopotek_k_means_projected_cost},
if the data set $Q$ has the property of $\sqrt{s}$-Multiplicative Perturbation Robustness
with factor $s< s_p\nu  \frac{(1-\delta)^2}{1+\delta}$ ($0<s_p,\nu <1$)  in the original space, 
then 
with probability at least $1-2\epsilon$
it has the property of $\sqrt{s_p}$-Multiplicative Perturbation Robustness in the projected space.
\end{theorem}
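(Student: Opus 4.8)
The plan is to read the random projection $Q\mapsto Q'$ as itself a multiplicative perturbation of the squared distances, to compose it with an \emph{arbitrary} $\sqrt{s_p}$-perturbation of $Q'$, and to check that the composite still lies inside the $\sqrt{s}$-robustness window of $Q$; Lemmas \ref{lem:Klvopotek_double_perturbation} and \ref{lem:Klvopotek_identical_global} then carry most of the load. First I would fix notation as in the proof of Lemma \ref{lem:Klvopotek_identical_global}: put $d_1(i,j)=\|\mathbf{x_i}-\mathbf{x_j}\|^2$ and $d_2(i,j)=\frac{n}{n'}\|\mathbf{x_i'}-\mathbf{x_j'}\|^2$. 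By Theorem \ref{thm:Klvopotek_Lemma}, with probability at least $1-\epsilon$ the inequality $(1-\delta)\,d_1\le d_2\le(1+\delta)\,d_1$ holds for all pairs; call this event $E$. Since $n>n'$, the map $\sqrt{d_2}$ is a bona fide (rescaled Euclidean) metric on the point set, and because a global rescaling of all distances leaves the set of $k$-means minimizers unchanged, the optimal $k$-means clusterings under $\sqrt{d_2}$ coincide with those of $Q'$ in its own coordinates. This is the bridge that lets us read off Multiplicative Perturbation Robustness of $Q'$ from the metric $\sqrt{d_2}$.

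Next I would transport the optimum and then chain the perturbations. Because $s< s_p\nu\frac{(1-\delta)^2}{1+\delta}$ with $0<s_p,\nu<1$, one has in particular $s<1-\delta$, so Lemma \ref{lem:Klvopotek_identical_global} applies: with probability at least $1-\epsilon$, the global $k$-means optimum $\mathfrak{C_G}$ of $Q$ is also the global $k$-means optimum of $Q'$. Now let $\sqrt{d_3}$ be an arbitrary metric on the same points that is an $\sqrt{s_p}$-perturbation of $\sqrt{d_2}$, i.e., $s_p\,d_2< d_3< \frac1{s_p}\,d_2$. On the event $E$ this chains with the JL bound to give $s_p(1-\delta)\,d_1< d_3< \frac{1+\delta}{s_p}\,d_1$; and the hypothesis on $s$ was chosen precisely so that $s<s_p(1-\delta)$ and $s<\frac{s_p}{1+\delta}$, whence $s\,d_1< d_3< \frac1s\,d_1$. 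Hence $\sqrt{d_3}$ is an admissible $\sqrt{s}$-perturbation of $\sqrt{d_1}$, so by the $\sqrt{s}$-Multiplicative Perturbation Robustness of $Q$ the optimal clustering under $\sqrt{d_3}$ equals $\mathfrak{C_G}$ — which we just identified with the optimum of $Q'$. As $\sqrt{d_3}$ ranged over all $\sqrt{s_p}$-perturbations of $Q'$, this is exactly $\sqrt{s_p}$-Multiplicative Perturbation Robustness of $Q'$. (Equivalently, one may feed $(\sqrt{d_1},\sqrt{d_2})$ and the residual parameter $s_p$ to Lemma \ref{lem:Klvopotek_double_perturbation}; the bookkeeping is identical.)

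For the probability: the argument rests on two probabilistic inputs, the JL distance bound of Theorem \ref{thm:Klvopotek_Lemma} and the conclusion of Lemma \ref{lem:Klvopotek_identical_global}, each failing with probability at most $\epsilon$, so a union bound delivers the claimed $1-2\epsilon$. I expect the real obstacle to be neither the probability estimate nor any long computation, but the careful handling of the \emph{asymmetry} of the JL inequality ($(1-\delta)$ on the left, $(1+\delta)$ on the right): one must verify that the \emph{composed} perturbation still fits inside the symmetric window $[s\,d_1,\frac1s\,d_1]$, and one must make sure the abstract distance-perturbation language of the clusterability definitions is legitimately applied to $\sqrt{d_2}$ — which is exactly what the rescaling/scale-invariance remark and the hypothesis $n>n'$ secure. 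Picking the single constant $\nu\frac{(1-\delta)^2}{1+\delta}$ that is dominated by every sub-bound arising in the chain is then routine.
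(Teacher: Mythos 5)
Your argument is correct, but it takes a genuinely different and more economical route than the paper. The paper realizes the $\sqrt{s_p}$-perturbation of $Q'$ by displaced points $\mathbf{y_i'}$, lifts them to preimages $\mathbf{y_i}$ in the original space, shows that the lifted set is a $\sqrt{s}$-perturbation of $Q$, and then needs a \emph{second} application of the probabilistic machinery (Lemma \ref{lem:Klvopotek_identical_global} applied to the pair $\{\mathbf{y_i}\},\{\mathbf{y_i'}\}$, via Lemma \ref{lem:Klvopotek_double_perturbation}) to transfer the optimum back down to $\sqrt{d_2'}$ --- this second, independent JL event for the $\mathbf{y}$-set is exactly where the paper's factor $1-2\epsilon$ comes from. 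You avoid the lift entirely: an arbitrary distance perturbation $\sqrt{d_3}$ of the projected metric is chained directly into the $\sqrt{s}$-window around $\sqrt{d_1}$ (your inequalities $s<s_p(1-\delta)$ and $s<s_p/(1+\delta)$ do follow from $s<s_p\nu\frac{(1-\delta)^2}{1+\delta}$), so robustness of $Q$ pins its optimum to $\mathfrak{C_G}$, which Lemma \ref{lem:Klvopotek_identical_global} has already identified with the optimum of $Q'$. This buys three things: (i) it conditions on a single JL event, so it actually proves the stronger bound $1-\epsilon$ (your union bound to $1-2\epsilon$ is valid but gives away probability the argument does not need); (ii) it treats arbitrary distance-function perturbations, as the definition of Multiplicative Perturbation Robustness literally demands, rather than only perturbations realizable by point displacements in $\mathbb{R}^{n'}$; (iii) it makes Lemma \ref{lem:Klvopotek_double_perturbation} and the slack parameter $\nu$ essentially superfluous --- in the paper $\nu$ reserves room for the second perturbation step, whereas in your chain only the weaker hypothesis $s\le s_p\frac{1-\delta}{1+\delta}\cdot\min(1,\,(1-\delta)(1+\delta))$-type bounds are ever used. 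The one point worth stating explicitly (which you do) is that the $k$-means optimizer is invariant under the global rescaling by $\sqrt{n/n'}$, so robustness of $Q'$ may legitimately be checked against the metric $\sqrt{d_2}$.
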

\begin{proof}
The Lemma \ref{lem:Klvopotek_identical_global} implies that the global 
optima of the original and projected spaces are identical. 
So assume that in the original space 
for the distance 
$\sqrt{d_1(i,j)}=\|\mathbf{x_i}-\mathbf{x_j}\|$
$\mathfrak{C_G}$ is the optimal clustering.
Then under projection 
$\sqrt{d_1'(i,j)}=\|\mathbf{x_i'}-\mathbf{x_j'}\|$ 
we have the same  optimal clustering. 

For a perturbation with factor $s_p$ in the projected space define the distance 
$\sqrt{d_2'(i,j)}=\|\mathbf{y_i'}-\mathbf{y_j'}\|$ where for any $i$ let $\mathbf{y_i'}$ be a perturbation of $\mathbf{x_i'}$.
We will be done if we can demonstrate that 
$\sqrt{d_2'}$ yields the same optimum in the projected space as $\sqrt{d_1'}$ does. 
For any $i$ let   $\mathbf{y_i}$ be some point in the original space such that 
   $\mathbf{y_i'}$ is its projection to the projected space. 
We will treat $\mathbf{y_i}$ as an image of $\mathbf{x_i}$ and will subsequently show that the set of these points $\mathbf{y_i}$ can be treated as a perturbation of  $\mathbf{x_i}$ with the factor $\sqrt{s}$. 

For each counterpart $\sqrt{d_2(i,j)}=\|\mathbf{y_i}-\mathbf{y_j}\|$ of $\sqrt{d_2'}$  in original space 
$(1+\delta)^{-1}\frac{n}{n'}d_2'(i,j)\le d_2(i,j)\le (1-\delta)^{-1}\frac{n}{n'}d_2'(i,j)$ 
holds. 
As 
$s_p d_1'(i,j)\le d_2'(i,j)\le (s_p)^{-1}d_1'(i,j)$ 
and
$(1-\delta) d_1 (i,j)\le \frac{n}{n'}d'_1(i,j)\le (1+\delta) d_1(i,j)$ we obtain
$$s  d_1 (i,j) < (1+\delta)^{-1}s_p(1-\delta)  d_1 (i,j)\le
(1+\delta)^{-1}s_p \frac{n}{n'}d'_1(i,j)
\le(1+\delta)^{-1} \frac{n}{n'}d'_2(i,j)
$$ $$\le d_2(i,j)
\le(1-\delta)^{-1} \frac{n}{n'}d'_2(i,j)
$$ $$\le(1-\delta)^{-1}s_p^{-1} \frac{n}{n'}d'_1(i,j)
\le(1-\delta)^{-1}s_p^{-1}(1+\delta)  d_1 (i,j) < \frac 1s  d_1 (i,j)   
$$

So $\sqrt{d_2}$ is  a perturbation of $\sqrt{d_1}$ with the factor $\sqrt{s}$.
$\sqrt{d_1}$  is $\sqrt{s}$-multiplicative perturbation robust, therefore    both have the same optimal solution $\mathfrak{C_G}$.
Furthermore $\sqrt{d_2}$ 
has the property of $\sqrt{\nu (1-\delta)}$-Multiplicative Robustness (see Lemma \ref{lem:Klvopotek_double_perturbation}).
Therefore its counterpart $\sqrt{d_2'}$ has the same optimum clustering $\mathfrak{C_G}$ as $\sqrt{d_2}$
(see Lemma \ref{lem:Klvopotek_identical_global}), hence as $\sqrt{d_1}$, hence as $\sqrt{d_1'}$. 

Recall that $\sqrt{d_2'}$ was selected as any perturbation of $\sqrt{d_1'}$ with factor $\sqrt{s_p}$.
And it turned out that it yields the same optimal solution as $\sqrt{d_1'}$. 
So with high probability (factor 2 is taken as we deal with two data sets, comprising points $\mathbf{x}_i$ and $\mathbf{y}_i$)
$\sqrt{d_1'}$ possesses 
$\sqrt{s_p}$-Multiplicative Perturbation Robustness in the projected space.
\end{proof}

We claim 
\begin{theorem}\label{thm:clusterability-betaCentreStability}
Under the assumptions and notation of Theorem \ref{thm:Klvopotek_k_means_projected_cost},
if the data set $Q$ has both the property of 
$\beta$-Centre Stability
  and  $\sqrt{s}$-Multiplicative Perturbation Robustness
with $s<1-\delta$ in the original space, 
then 
with probability at least $1-\epsilon$
it has the property of 
$\beta\sqrt{ \frac{1-\delta}{1+\delta}}$-Centre Stability in the projected space. 
\end{theorem}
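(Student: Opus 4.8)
The plan is to mimic the structure of the proof of Theorem~\ref{thm:clusterability-sigmaSeparatedness}: first use the Multiplicative Perturbation Robustness together with Lemma~\ref{lem:Klvopotek_identical_global} (whose hypothesis $s<1-\delta$ is exactly what we assume) to conclude that the global optimum clustering $\mathfrak{C_G}$ in the original space is also the global optimum clustering in the projected space. This pins down the \emph{same} partition as the reference centric clustering in both spaces, so that the $\beta$-Centre Stability condition stated for $\mathfrak{C_G}$ in the original space can be transported inequality-by-inequality to $\mathfrak{C_G}$ in the projected space.

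Next I would unfold the definition of $\beta$-Centre Stability: for every data point $\mathbf{x}_i$ with its own centre $\boldsymbol\mu_1=\boldsymbol\mu(\mathfrak{C_G}(i))$ and every other cluster centre $\boldsymbol\mu_2$ of $\mathfrak{C_G}$,
\begin{equation*}
\beta\,\|\mathbf{x}_i-\boldsymbol\mu_1\| < \|\mathbf{x}_i-\boldsymbol\mu_2\|,
\end{equation*}
equivalently $\beta^2\,\|\mathbf{x}_i-\boldsymbol\mu_1\|^2 < \|\mathbf{x}_i-\boldsymbol\mu_2\|^2$. The goal is the analogous statement with $\boldsymbol\mu_1',\boldsymbol\mu_2'$ (the centres of the projected clusters) and the factor $\beta\sqrt{(1-\delta)/(1+\delta)}$. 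For the numerator I would invoke Lemma~\ref{lem:Klvopotek_Lemma_point_cluster_centre}, which gives $\frac{n}{n'}\|\mathbf{x}_i'-\boldsymbol\mu_1'\|^2 \le (1+\delta)\|\mathbf{x}_i-\boldsymbol\mu_1\|^2$, so that $\|\mathbf{x}_i'-\boldsymbol\mu_1'\|^2$ is controlled from above. For the ``distance to the other centre'' $\|\mathbf{x}_i'-\boldsymbol\mu_2'\|$ I need a \emph{lower} bound; here $\mathbf{x}_i$ does not belong to the cluster of $\boldsymbol\mu_2$, so Lemma~\ref{lem:Klvopotek_Lemma_point_cluster_centre} does not directly apply to this pair. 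This is the step I expect to be the main obstacle: one either needs a version of the JL guarantee for a point against a foreign cluster centre, or one treats $\boldsymbol\mu_2$ as the centre of the singleton-augmented cluster $C_2\cup\{i\}$ and applies Lemma~\ref{lem:Klvopotek_Lemma_point_cluster_centre} to that enlarged cluster (at the cost of slightly perturbing $\boldsymbol\mu_2$), or one appeals to the fact that all pairwise squared distances between $Q$-points are $\frac{n}{n'}$-faithfully preserved (Theorem~\ref{thm:Klvopotek_Lemma}) and expands $\|\mathbf{x}_i-\boldsymbol\mu_2\|^2$ as an average of $\|\mathbf{x}_i-\mathbf{x}_j\|^2$ over $j\in C_2$ plus a $VAR(C_2)$ term, exactly as in the derivation preceding Lemma~\ref{lem:Klvopotek_Lemma_two_cluster_centres}.

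Once a two-sided estimate of the form $(1-\delta)\|\mathbf{x}_i-\boldsymbol\mu_2\|^2 \le \frac{n}{n'}\|\mathbf{x}_i'-\boldsymbol\mu_2'\|^2$ is secured on the event of probability at least $1-\epsilon$ (which is the single JL event already conditioned on throughout Section~\ref{sec:Klvopotek_projective_k_means}), the conclusion is a one-line chain:
\begin{equation*}
\beta^2\frac{1-\delta}{1+\delta}\,\frac{n}{n'}\|\mathbf{x}_i'-\boldsymbol\mu_1'\|^2
\le \beta^2(1-\delta)\|\mathbf{x}_i-\boldsymbol\mu_1\|^2
< (1-\delta)\|\mathbf{x}_i-\boldsymbol\mu_2\|^2
\le \frac{n}{n'}\|\mathbf{x}_i'-\boldsymbol\mu_2'\|^2,
\end{equation*}
so dividing by $\frac{n}{n'}$ and taking square roots yields $\beta\sqrt{(1-\delta)/(1+\delta)}\,\|\mathbf{x}_i'-\boldsymbol\mu_1'\| < \|\mathbf{x}_i'-\boldsymbol\mu_2'\|$ for every point and every foreign centre, which is precisely $\beta\sqrt{(1-\delta)/(1+\delta)}$-Centre Stability in the projected space. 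Since all the estimates hold simultaneously on the JL success event, the probability bound $1-\epsilon$ carries over, and the reason the Perturbation Robustness hypothesis is needed at all is only to guarantee (via Lemma~\ref{lem:Klvopotek_identical_global}) that $\mathfrak{C_G}$ remains the optimal centric clustering in the projected space, so that ``its'' centres are the right objects to which the stability inequality is applied.
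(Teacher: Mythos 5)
Your skeleton matches the paper's: Lemma~\ref{lem:Klvopotek_identical_global} pins down the common optimal clustering, Lemma~\ref{lem:Klvopotek_Lemma_point_cluster_centre} controls the distance to the own centre from above, and the final chain of inequalities is exactly the one the paper writes down. But you correctly identify the crux --- a \emph{lower} bound of the form $(1-\delta)\,\|\mathbf{x}_i-\boldsymbol\mu_2\|^2 \le \frac{n}{n'}\|\mathbf{x}_i'-\boldsymbol\mu_2'\|^2$ for a \emph{foreign} centre $\boldsymbol\mu_2=\boldsymbol\mu(C)$ with $i\notin C$ --- and then you stop short of establishing it, offering three candidate routes without carrying any of them out. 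That is a genuine gap: the theorem's constant $\beta\sqrt{(1-\delta)/(1+\delta)}$ depends on this bound holding with no loss.

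The paper closes the gap by completing precisely your second option, and the point you flag as a ``cost'' (perturbing $\boldsymbol\mu_2$ into $\boldsymbol\mu(C\cup\{i\})$) turns out to be free. Since $\boldsymbol\mu(C\cup\{i\})=\frac{|C|\,\boldsymbol\mu(C)+\mathbf{x}_i}{|C|+1}$, the three points $\mathbf{x}_i$, $\boldsymbol\mu(C)$, $\boldsymbol\mu(C\cup\{i\})$ are collinear with the exact ratio $\|\mathbf{x}_i-\boldsymbol\mu(C)\|=\frac{|C|+1}{|C|}\|\mathbf{x}_i-\boldsymbol\mu(C\cup\{i\})\|$, and because the projection is linear and $\boldsymbol\mu'$ is defined as the mean of the projected points, the \emph{same} identity with the \emph{same} ratio holds for $\mathbf{x}_i'$, $\boldsymbol\mu'(C)$, $\boldsymbol\mu'(C\cup\{i\})$. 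Applying Lemma~\ref{lem:Klvopotek_Lemma_point_cluster_centre} to the augmented cluster $C\cup\{i\}$ (to which $\mathbf{x}_i$ does belong) and multiplying both sides by $\bigl(\frac{|C|+1}{|C|}\bigr)^2$ transfers the $(1-\delta)$ lower bound exactly to the pair $(\mathbf{x}_i,\boldsymbol\mu(C))$. Note also that your third option (expanding $\|\mathbf{x}_i-\boldsymbol\mu(C)\|^2$ via pairwise distances and $VAR(C)$) would not recover the stated constant: the correct identity is $\frac{1}{|C|}\sum_{j\in C}\|\mathbf{x}_i-\mathbf{x}_j\|^2=\|\mathbf{x}_i-\boldsymbol\mu(C)\|^2+VAR(C)$, and bounding the sum from below by $(1-\delta)$ while the variance term may grow by $(1+\delta)$ leaves a residual $-2\delta\, VAR(C)$, i.e.\ a strictly weaker conclusion. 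So the collinearity argument is not just one option among three; it is the one that yields the theorem as stated.
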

\begin{proof}
The   $\sqrt{s}$-Multiplicative Perturbation Robustness ensures that both the original and the projected space share same optimal clustering $\mathfrak{C}$. 
\\
Consider a data point $\mathbf{x_i}$ and a cluster $C\in \mathfrak{C}$ not containing $i$.
Then $\mathbf{x_i}$, $\boldsymbol\mu(C)$ and  $\boldsymbol\mu(C\cup\{i\})$
are colinear.
So are $\mathbf{x_i'}$, $\boldsymbol\mu'(C)$ and  $\boldsymbol\mu'(C\cup\{i\})$, that is the respective (linear) projections. 
Furthermore 
$\frac{
\|\mathbf{x_i} -\boldsymbol\mu(C\cup\{i\})\|
}{
\| \boldsymbol\mu(C)-\boldsymbol\mu(C\cup\{i\})\|
}=\frac{|C|}{1}$,
hence 
$\|\mathbf{x_i}- \boldsymbol\mu(C) \|=
\frac{|C|+1}{|C|}
\|\mathbf{x_i} -\boldsymbol\mu(C\cup\{i\})\|
$.
Likewise 
$\frac{
\|\mathbf{x_i'} -\boldsymbol\mu'(C\cup\{i\})\|
}{
\| \boldsymbol\mu'(C)-\boldsymbol\mu'(C\cup\{i\})\|
}=\frac{|C|}{1}$.
Upon projection the distance  to own cluster centre can increase relatively by $\sqrt{1+\delta}$
 and to the $C\cup\{i\}$ centre 
can decrease by  $\sqrt{1-\delta}$, see Lemma \ref{lem:Klvopotek_Lemma_point_cluster_centre}.
That means 
$ \|\mathbf{x_i'}-\boldsymbol\mu'(\mathfrak{C}(i))\|^2 \le (1+\delta)  \frac{n'}{n} 
 \|\mathbf{x_i}-\boldsymbol\mu(\mathfrak{C}(i))\|^2
$ and 
$ (1-\delta)^{-1} \|\mathbf{x_i'}-\boldsymbol\mu'(C\cup\{i\})\|^2 \ge  \frac{n'}{n} 
 \|\mathbf{x_i}-\boldsymbol\mu(C\cup\{i\})\|^2
$.
Due to the aforementioned relations 
$ \|\mathbf{x_i'}-\boldsymbol\mu'(C )\|^2 \ge (1-\delta) \frac{n'}{n} 
 \|\mathbf{x_i}-\boldsymbol\mu(C )\|^2
$.
Due to $\beta$-Centre-Stability in the original space we had:
$ \beta^2  \|\mathbf{x_i}-\boldsymbol\mu(\mathfrak{C}(i))\|^2
<  \|\mathbf{x_i}-\boldsymbol\mu(C )\|^2
$.
Due to the aforementioned relations we have 
$$ \|\mathbf{x_i'}-\boldsymbol\mu'(C )\|^2 \ge (1-\delta) \frac{n'}{n} 
 \|\mathbf{x_i}-\boldsymbol\mu(C )\|^2
$$ $$
> \beta^2 (1-\delta) \frac{n'}{n}  \|\mathbf{x_i}-\boldsymbol\mu(\mathfrak{C}(i))\|^2
\ge \beta^2 \frac{1-\delta}{1+\delta} \|\mathbf{x_i'}-\boldsymbol\mu'(\mathfrak{C}(i))\|^2
$$
That is $\|\mathbf{x_i'}-\boldsymbol\mu'(C )\|
  > 
 \beta \sqrt{ \frac{1-\delta}{1+\delta}} \|\mathbf{x_i'}-\boldsymbol\mu'(\mathfrak{C}(i))\|$ 
Hence the data centre stability can drop to 
 $\beta\sqrt{ \frac{1-\delta}{1+\delta}} $.
\end{proof}

We claim 
\begin{theorem}\label{thm:clusterability-weakDeletionStability}
Under the assumptions and notation of Theorem \ref{thm:Klvopotek_k_means_projected_cost},
if the data set $Q$ has both the property of 
$(1+\beta)$ Weak Deletion Stability
  and  $\sqrt{s}$-Multiplicative Perturbation Robustness
with $s<1-\delta$ in the original space, 
then 
with probability at least $1-\epsilon$
it has the property of 
$(1+\beta)  \frac{ 1-\delta}{1+\delta}$ Weak Deletion Stability in the projected space. 
\end{theorem}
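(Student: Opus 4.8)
The plan is to mimic the earlier clusterability arguments of this section: first use $\sqrt{s}$-Multiplicative Perturbation Robustness with $s<1-\delta$ to identify the global $k$-means optima of the two spaces, and then push the deletion inequality through the cost-distortion bound of Theorem~\ref{thm:Klvopotek_k_means_projected_cost}. By Lemma~\ref{lem:Klvopotek_identical_global}, since $s<1-\delta$, the clustering $\mathfrak{C_G}$ attaining $OPT=\mathfrak{J}(Q,\mathfrak{C_G})$ is simultaneously the global $k$-means optimum in the projected space, so $OPT'=\mathfrak{J}(Q',\mathfrak{C_G})$. Now fix any cluster of $\mathfrak{C_G}$, delete its centre, reassign its members to one of the remaining $k-1$ clusters, and call the resulting partition $\mathfrak{C}_d$. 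There are only finitely many such $\mathfrak{C}_d$, and each is an ordinary partition into $k-1$ non-empty groups, so Theorem~\ref{thm:Klvopotek_k_means_projected_cost}, which holds for every partition, applies to $\mathfrak{C}_d$ verbatim.

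Then I would chain two inequalities. From the left half of (\ref{eq:Klvopotek_k_means_projected_cost}) applied to $\mathfrak{C}_d$ together with $(1+\beta)$ Weak Deletion Stability in the original space,
\[
\frac{n}{n'}\,\mathfrak{J}(Q',\mathfrak{C}_d)\ \ge\ (1-\delta)\,\mathfrak{J}(Q,\mathfrak{C}_d)\ >\ (1-\delta)(1+\beta)\,\mathfrak{J}(Q,\mathfrak{C_G}).
\]
From the right half of (\ref{eq:Klvopotek_k_means_projected_cost}) applied to $\mathfrak{C_G}$ one has $\mathfrak{J}(Q,\mathfrak{C_G})\ge \frac{n}{n'}(1+\delta)^{-1}\mathfrak{J}(Q',\mathfrak{C_G})$; substituting and cancelling the common factor $n/n'$ yields
\[
\mathfrak{J}(Q',\mathfrak{C}_d)\ >\ (1+\beta)\,\frac{1-\delta}{1+\delta}\,\mathfrak{J}(Q',\mathfrak{C_G})\ =\ (1+\beta)\,\frac{1-\delta}{1+\delta}\,OPT'.
\]
Since $\mathfrak{C}_d$ ranges over all single-centre deletions of the projected optimum, this is precisely $(1+\beta)\frac{1-\delta}{1+\delta}$ Weak Deletion Stability in the projected space.

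I expect the point to watch is not the algebra but the bookkeeping of the probabilistic event. Lemma~\ref{lem:Klvopotek_identical_global} and Theorem~\ref{thm:Klvopotek_k_means_projected_cost} are both statements about the \emph{one} sampled coordinate system $\mathcal{C}_{n'}$, and each holds on an event of probability at least $1-\epsilon$ that is furnished by the same all-pairs distance-distortion event of Theorem~\ref{thm:Klvopotek_Lemma}; hence all the displayed inequalities hold simultaneously with probability at least $1-\epsilon$, and no union bound inflating $\epsilon$ (as in Theorem~\ref{thm:clusterability-multPertRobust}) is required. One should also be explicit that ``the cost of the deletion clustering'' is meant as the $k$-means cost of the \emph{partition} $\mathfrak{C}_d$, i.e. the sum of squared distances to its $k-1$ recomputed centroids --- this is exactly the quantity Theorem~\ref{thm:Klvopotek_k_means_projected_cost} controls, which is what makes the argument go through.
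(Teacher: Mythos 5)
Your proof is correct and follows essentially the same route as the paper: identify the shared global optimum via Lemma \ref{lem:Klvopotek_identical_global} (the paper invokes the perturbation-robustness hypothesis for exactly this purpose), then chain the left half of (\ref{eq:Klvopotek_k_means_projected_cost}) on the deletion partition with the right half on the optimum to obtain the factor $(1+\beta)\frac{1-\delta}{1+\delta}$. The paper's own proof is this same three-step inequality chain, only with different names for the partitions, so nothing further is needed.
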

\begin{proof}
The   $\sqrt{s}$-Multiplicative Perturbation Robustness ensures that both original and the projected space share same optimal clustering. 
Let this optimal clustering be called $\mathfrak{C_o}$. 
By $\mathfrak{C}$ denote any clustering obtained from 
$\mathfrak{C_o}$ by deletion of one cluster centre and assigning cluster elements to one of the remaining clusters. 
By the assumption of (1+$\beta$)-Weak Deletion stability
$(1+\beta) \mathfrak{J}(Q,\mathfrak{C_o}) \le \mathfrak{J}(Q,\mathfrak{C}) $.
\\
Theorem  \ref{thm:Klvopotek_k_means_projected_cost} 
implies that 
$(1-\delta) \frac{n'}{n} \mathfrak{J}(Q,\mathfrak{C}) \le 
\mathfrak{J}(Q',\mathfrak{C})$
and $ (1+\delta)^{-1}\mathfrak{J}(Q',\mathfrak{C_o})
  \le  \frac{n'}{n}  \mathfrak{J}(Q,\mathfrak{C_o})$. 
\\
Therefore 
$\mathfrak{J}(Q',\mathfrak{C})
\ge (1-\delta) \frac{n'}{n}\mathfrak{J}(Q,\mathfrak{C})
\ge (1+\beta)  (1-\delta) \frac{n'}{n} \mathfrak{J}(Q,\mathfrak{C_o}) 
\ge (1+\beta)  (1-\delta)   (1+\delta)^{-1} \mathfrak{J}(Q',\mathfrak{C_o}) $
which implies the claim. 
\end{proof}

%--------------exp data---------------------------

%-------------------TAB------------------- 
\begin{table} 
\caption{Dependence of reduced dimensionality $n'$ on sample size $m$. Other parameters fixed at  $\epsilon$=0.01 $\delta$=0.05 $n$=5e+05.}
\label{tab:samplesize}
\begin{center}  
\begin{tabular}{|r|r|r|r|} 
\hline  
$m$ & $n'$ explicit &$n'$ implicit & explicit/implicit \\ 
\hline  
  10 &  15226 &  14209 &  1.07 \\
   20 &  17518 &  16389 &  1.07 \\
   50 &  20547 &  19191 &  1.07 \\
   100 &  22839 &  21269 &  1.07 \\
   200 &  25131 &  23323 &  1.08 \\
   500 &  28160 &  26016 &  1.08 \\
   1000 &  30452 &  28030 &  1.09 \\
   2000 &  32744 &  30027 &  1.09 \\
   5000 &  35773 &  32648 &  1.1 \\
   10000 &  38065 &  34609 &  1.1 \\
   20000 &  40357 &  36554 &  1.1 \\
   50000 &  43386 &  39097 &  1.11 \\
   1e+05 &  45678 &  41017 &  1.11 \\
   2e+05 &  47970 &  42910 &  1.12 \\
   5e+05 &  50999 &  45392 &  1.12 \\
   1e+06 &  53291 &  47250 &  1.13 \\
   2e+06 &  55582 &  49099 &  1.13 \\
   5e+06 &  58612 &  51515 &  1.14 \\
   1e+08 &  68516 &  59243 &  1.16 \\
   2e+07 &  63195 &  55127 &  1.15 \\
   5e+07 &  66225 &  57480 &  1.15 \\
   1e+08 &  68516 &  59243 &  1.16 \\
\hline  
\end{tabular}  
\end{center}  
\end{table}

%------------------FIG-------------------- 
\begin{figure} 
\centering
\includegraphics[width=0.8\textwidth]{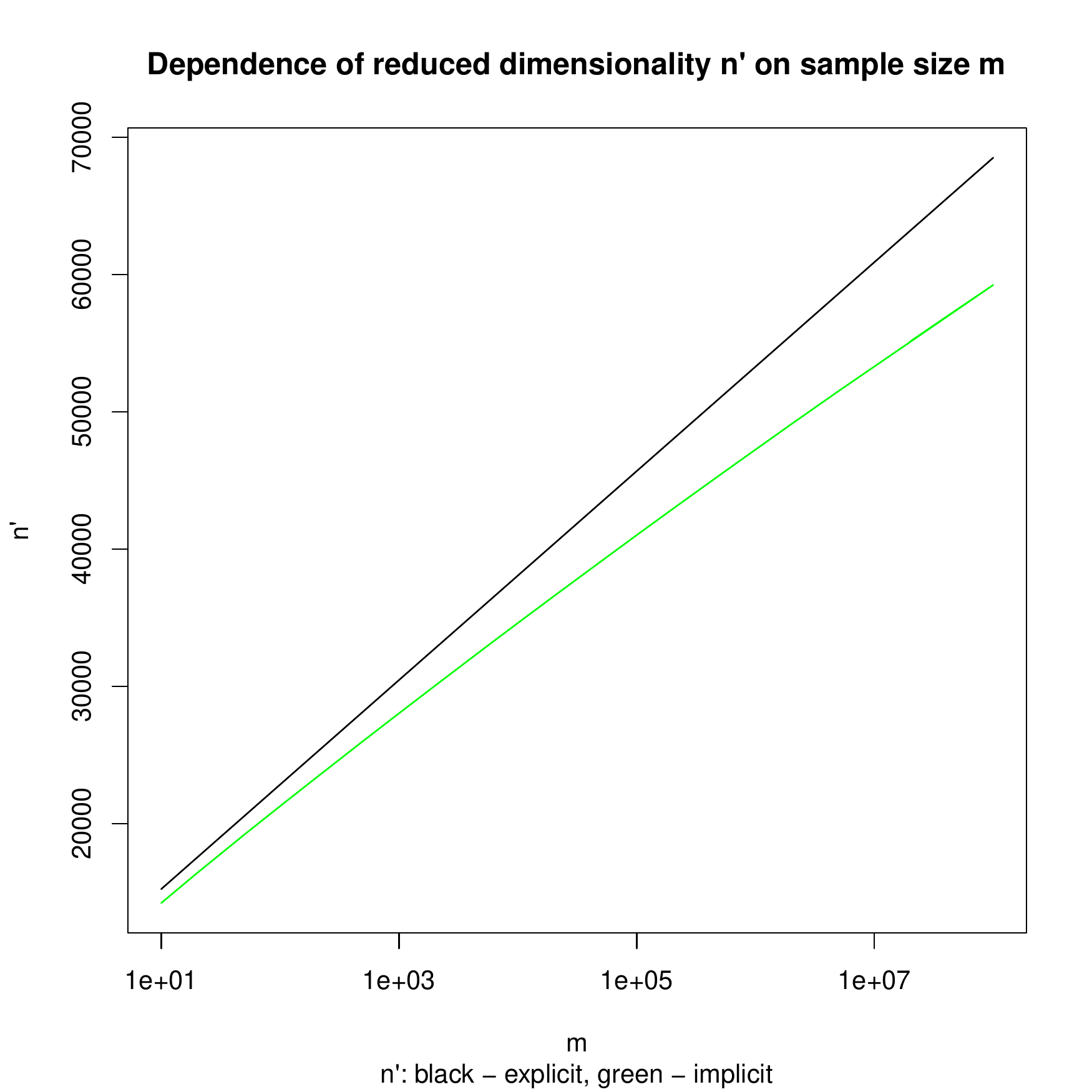} %
\caption{Dependence of reduced dimensionality $n'$ on sample size $m$. Other parameters fixed at  $\epsilon$=0.01 $\delta$=0.05 $n$=5e+05}
\label{fig:samplesize}
\end{figure}

%-------------------TAB------------------- 
\begin{table} 
\caption{Dependence of reduced dimensionality $n'$ on failure prob. $\epsilon$. Other parameters fixed at  $m$=2e+06 $\delta$=0.05 $n$=5e+05.}
\label{tab:epsilon}
\begin{center}  
\begin{tabular}{|r|r|r|r|} 
\hline  
$\epsilon$ & $n'$ explicit &$n'$ implicit & explicit/implicit \\ 
\hline  
  0.1 &  51776 &  46020 &  1.13 \\
   0.05 &  52922 &  46955 &  1.13 \\
   0.02 &  54437 &  48180 &  1.13 \\
   0.01 &  55582 &  49099 &  1.13 \\
   0.005 &  56728 &  50014 &  1.13 \\
   0.002 &  58243 &  51221 &  1.14 \\
   0.001 &  59389 &  52134 &  1.14 \\
\hline  
\end{tabular}  
\end{center}  
\end{table}

%------------------FIG-------------------- 
\begin{figure} 
\centering
\includegraphics[width=0.8\textwidth]{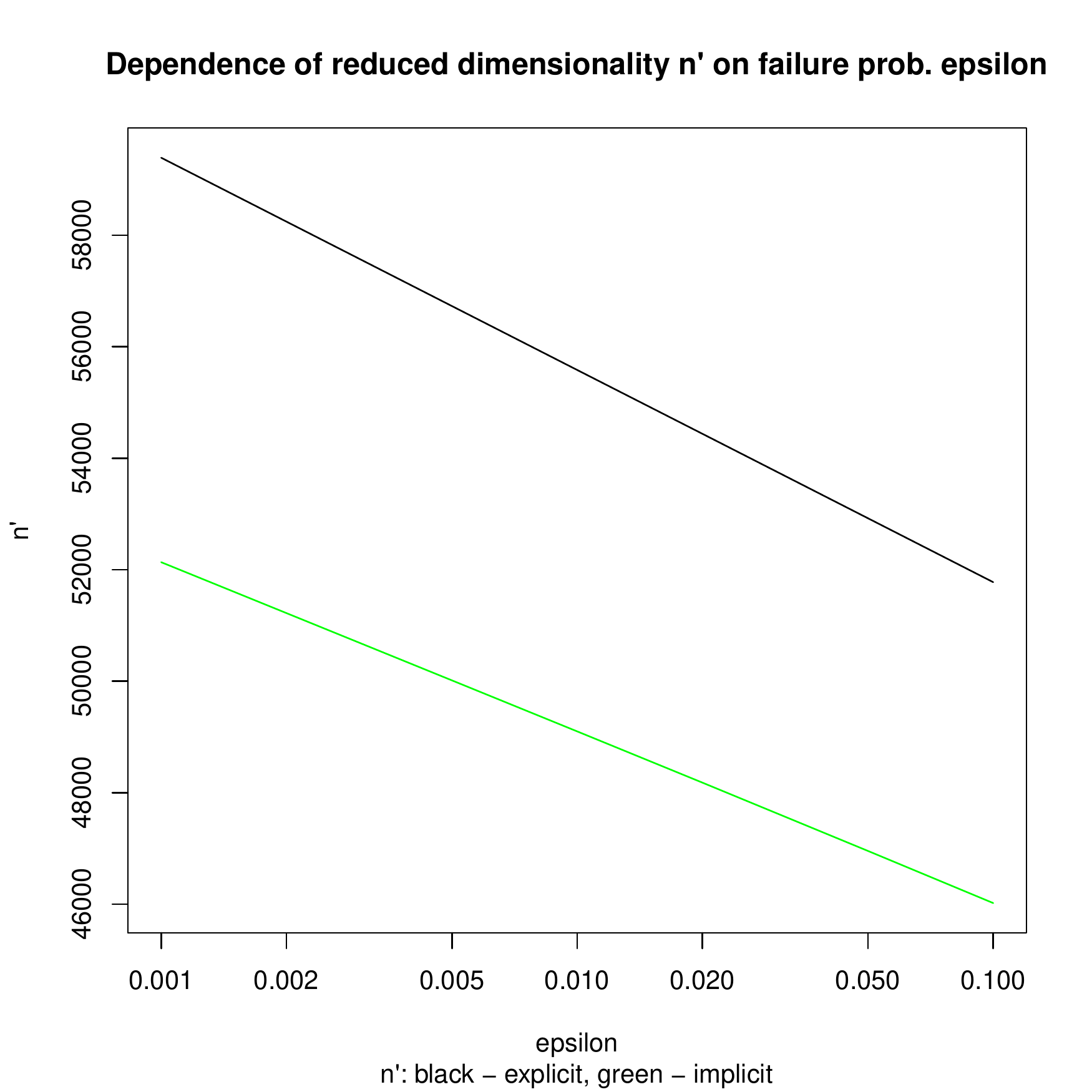} %
\caption{Dependence of reduced dimensionality $n'$ on failure prob. $\epsilon$. Other parameters fixed at  $m$=2e+06 $\delta$=0.05 $n$=5e+05}
\label{fig:epsilon}
\end{figure}

%-------------------TAB------------------- 
\begin{table} 
\caption{Dependence of reduced dimensionality $n'$ on error range $\delta$. Other parameters fixed at  $m$=2e+06 $\epsilon$=0.01 $n$=5e+05.}
\label{tab:delta}
\begin{center}  
\begin{tabular}{|r|r|r|r|} 
\hline  
$\delta$ & $n'$ explicit &$n'$ implicit & explicit/implicit \\ 
\hline  
  0.5 &  712 &  697 &  1.02 \\
   0.4 &  1059 &  1032 &  1.03 \\
   0.3 &  1787 &  1745 &  1.02 \\
   0.2 &  3804 &  3692 &  1.03 \\
   0.1 &  14339 &  13640 &  1.05 \\
   0.09 &  17593 &  16631 &  1.06 \\
   0.08 &  22128 &  20742 &  1.07 \\
   0.07 &  28721 &  26604 &  1.08 \\
   0.06 &  38846 &  35329 &  1.1 \\
   0.05 &  55582 &  49099 &  1.13 \\
   0.04 &  86291 &  72387 &  1.19 \\
   0.03 &  152415 &  115298 &  1.32 \\
   0.02 &  340701 &  201059 &  1.69 \\
   0.01 &  1353858 &  1353859 &  1 \\
\hline  
\end{tabular}  
\end{center}  
\end{table}

%------------------FIG-------------------- 
\begin{figure} 
\centering
\includegraphics[width=0.8\textwidth]{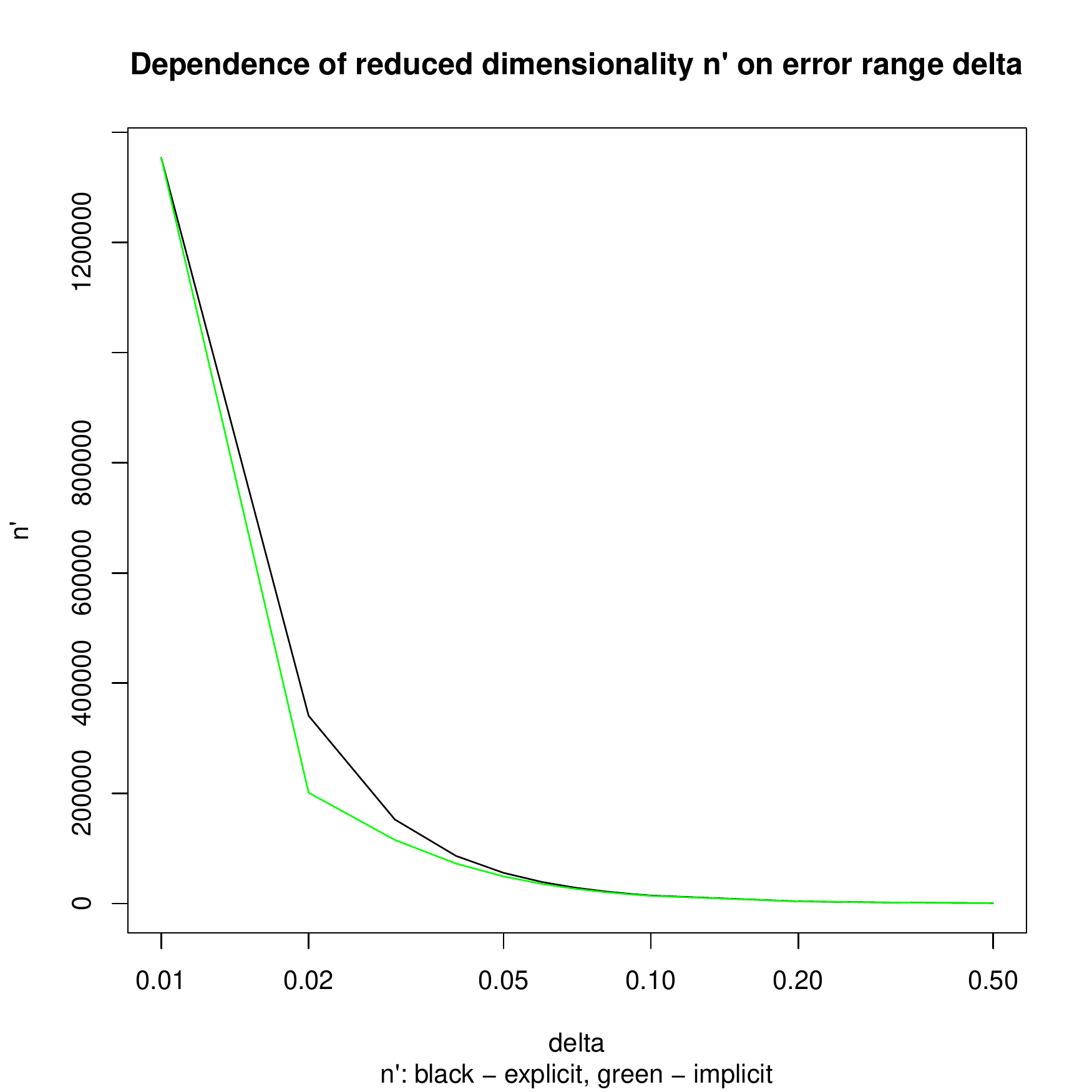} %
\caption{Dependence of reduced dimensionality $n'$ on error range $\delta$. Other parameters fixed at  $m$=2e+06 $\epsilon$=0.01 $n$=5e+05}
\label{fig:delta}
\end{figure}

%-------------------TAB------------------- 
\begin{table} 
\caption{Dependence of reduced dimensionality $n'$ on original dimensionality $n$. Other parameters fixed at  $m$=2e+06 $\epsilon$=0.01 $\delta$=0.05.}
\label{tab:orign}
\begin{center}  
\begin{tabular}{|r|r|r|r|} 
\hline  
$n$ & $n'$ explicit &$n'$ implicit & explicit/implicit \\ 
\hline  
  4e+05 &  55582 &  47891 &  1.16 \\
   5e+05 &  55582 &  49099 &  1.13 \\
   6e+05 &  55582 &  49933 &  1.11 \\
   7e+05 &  55582 &  50551 &  1.1 \\
   8e+05 &  55582 &  51025 &  1.09 \\
   9e+05 &  55582 &  51399 &  1.08 \\
   1e+06 &  55582 &  51703 &  1.08 \\
\hline  
\end{tabular}  
\end{center}  
\end{table}

%------------------FIG-------------------- 
\begin{figure} 
\centering
\includegraphics[width=0.8\textwidth]{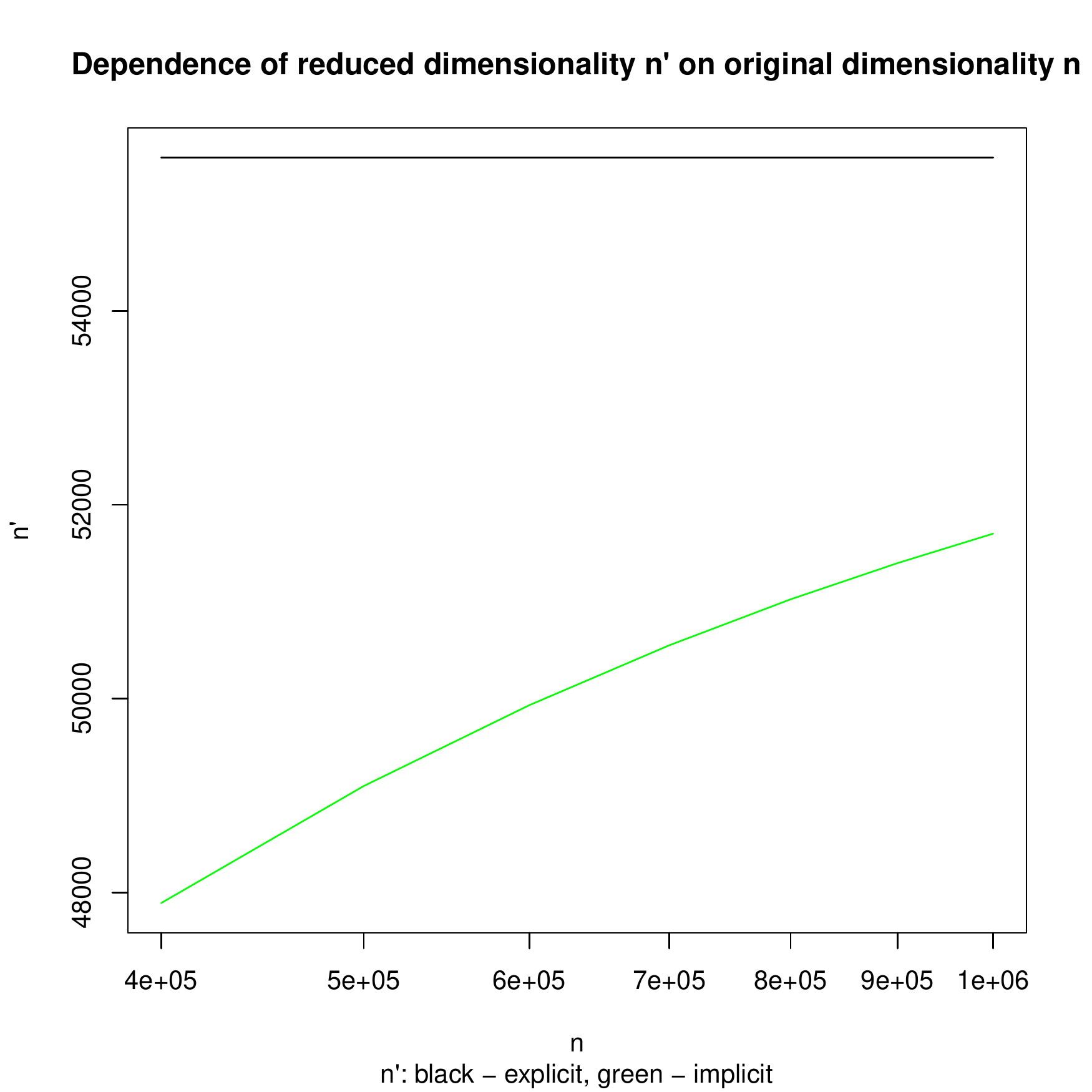} %
\caption{Dependence of reduced dimensionality $n'$ on original dimensionality $n$. Other parameters fixed at  $m$=2e+06 $\epsilon$=0.01 $\delta$=0.05}
\label{fig:orign}
\end{figure} %  R_nprimecomputations.txt

%------------------FIG-------------------- 
\begin{figure} 
\centering
\includegraphics[width=0.8\textwidth]{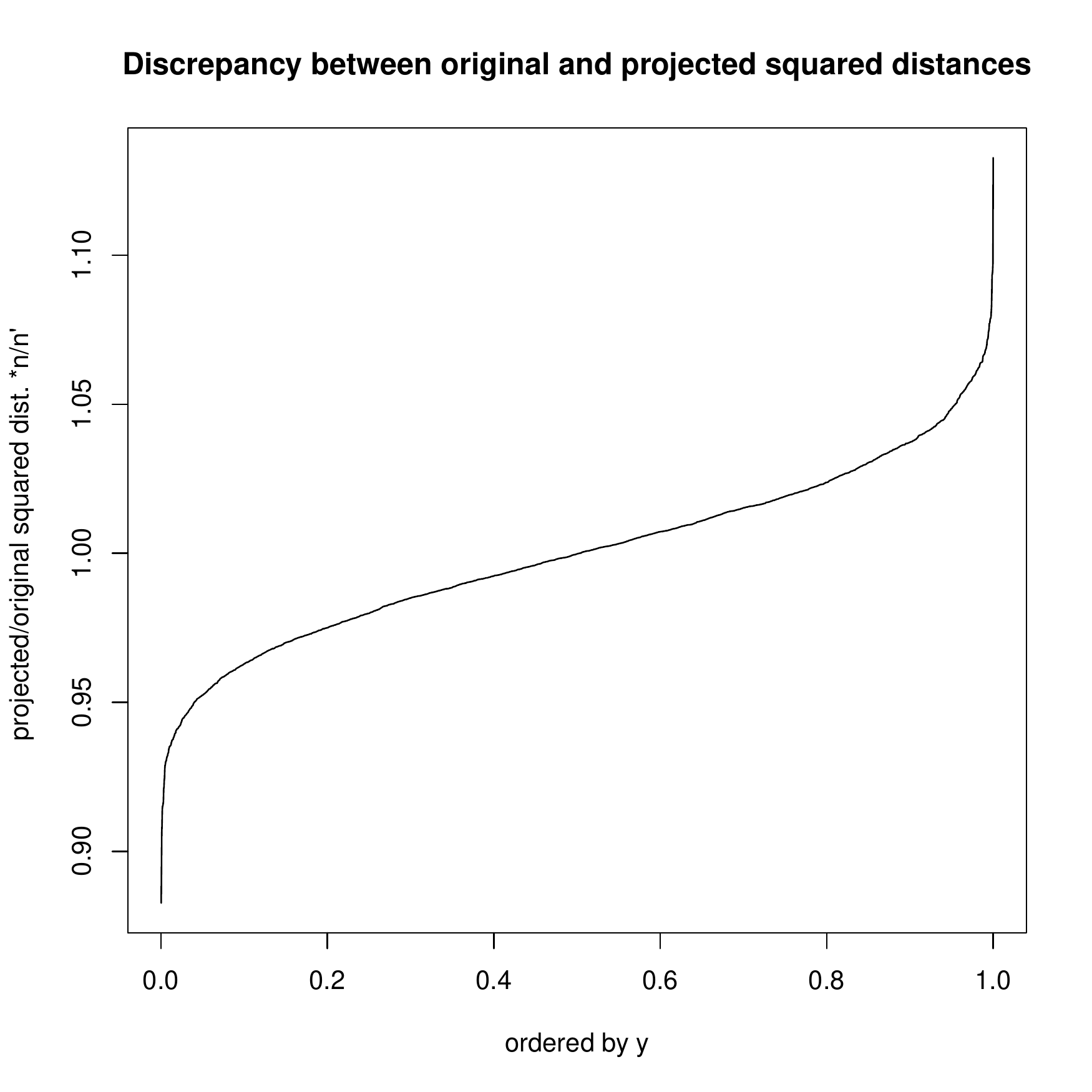} %
\caption{Discrepancy between   projected and original squared distances between points in the sample expressed as their quotient adjusted by $n/n'$. Parameters fixed at  m= 5000  $\epsilon$= 0.1  $\delta$= 0.2  $n$= 5000  $n'$= 2188}
\label{fig:distortion}
\end{figure} % R_distortiondistribution.txt  

%-------------------------------

\section{Numerical Experiments on Some Aspects of Our Approach}\label{sec:examples}

Note that we have two formulas for computing the reduced space dimensionality  $n'$,
the formula (\ref{eq:nprime_implicit}) and (\ref{eq:nprime_computation}).  
The latter does not engage the original dimensionality $n$,
while it is explicit in $n'$. The value of $n'$ in the former depends on $n$, however $n'$ can be only computed iteratively.

Let us investigate the differences between $n'$ computation in both cases. 
Let us check the impact of the following parameters:  
$n$ - the original dimensionality (see table \ref{tab:orign} and figure \ref{fig:orign}),
$\delta$ - the limitation of deviation of the distances between data points in the original and the reduced space  (see table \ref{tab:delta} and figure \ref{fig:delta}),
$m$ - the sample size  (see table \ref{tab:samplesize} and figure \ref{fig:samplesize}), as well as 
$\epsilon$ - the maximum failure probability of the "JL" transformation   (see table \ref{tab:epsilon} and figure \ref{fig:epsilon}). 
Note that in all figures the X-axis is on log scale. 

As visible in figure \ref{fig:orign} the value of $n'$ from the explicit formula does not depend on the original dimensionality $n$.
The value computed from the implicit formula approaches the explicit value quite quickly with the growing dimensionality $n$.

On the other hand, the implicit $n'$ departs from the explicit one with growing sample size $m$, as visible in fig. \ref{fig:samplesize}. 
Both grow with increasing $m$. 

In fig. \ref{fig:epsilon} we see that when we increase the acceptable failure rate $\epsilon$, the requested dimensionality $n'$ drops, whereby the implicit one approaches the explicit one. 

Fig. \ref{fig:delta} shows that the requested dimensionality drops quite quickly with increased relative error range $\delta$ till a kind of saturation is achieved. 
At extreme ends of $\delta$ implicit and explicit $n'$ formulas converge to one another. 

The behaviour of explicit $n'$ is not surprising, as it is visible  directly from the formula (\ref{eq:nprime_computation}).
The important insight here is however the required dimensionality of the projected data, of hundreds of thousands for realistic $\epsilon, \delta$.  
So the random projection via the Johnson-Lindenstrauss Lemma is not yet another dimensionality reduction technique. 
It is suitable for cases where techniques like PCA are not feasible computationally. 

The behaviour of implicit $n'$ for the case of increasing original dimensionality $n$ is as expected - the explicit $n'$ reflects the "in the limit" behaviour of the implicit formulation.
The convergence for extreme values of $\delta$ is intriguing.
The discrepancy for $\epsilon$ and the divergence for growing $m$ indicate 
that there is still space for better explicit formulas on $n'$. 
Especially it is worth investigating for increasing $m$ as the processing becomes more expensive in the original space when $m$ is increasing. 

In order to give an impression how effective the random projection is,
see fig. \ref{fig:distortion}.
It illustrates the distribution of discrepancies between squared distances 
in the projected and in the original spaces. 
The discrepancies are expressed as 
  $$\frac{ \|f(\mathbf{u})-f(\mathbf{v})\|^2}{ \| \mathbf{u}-\mathbf{v}\|^2 }$$
One can see that they correspond quite well to the imposed constraints. 

%--------------exp data---------------------------

%------------------FIG-------------------- 
\begin{figure} 
\centering
\includegraphics[width=0.8\textwidth]{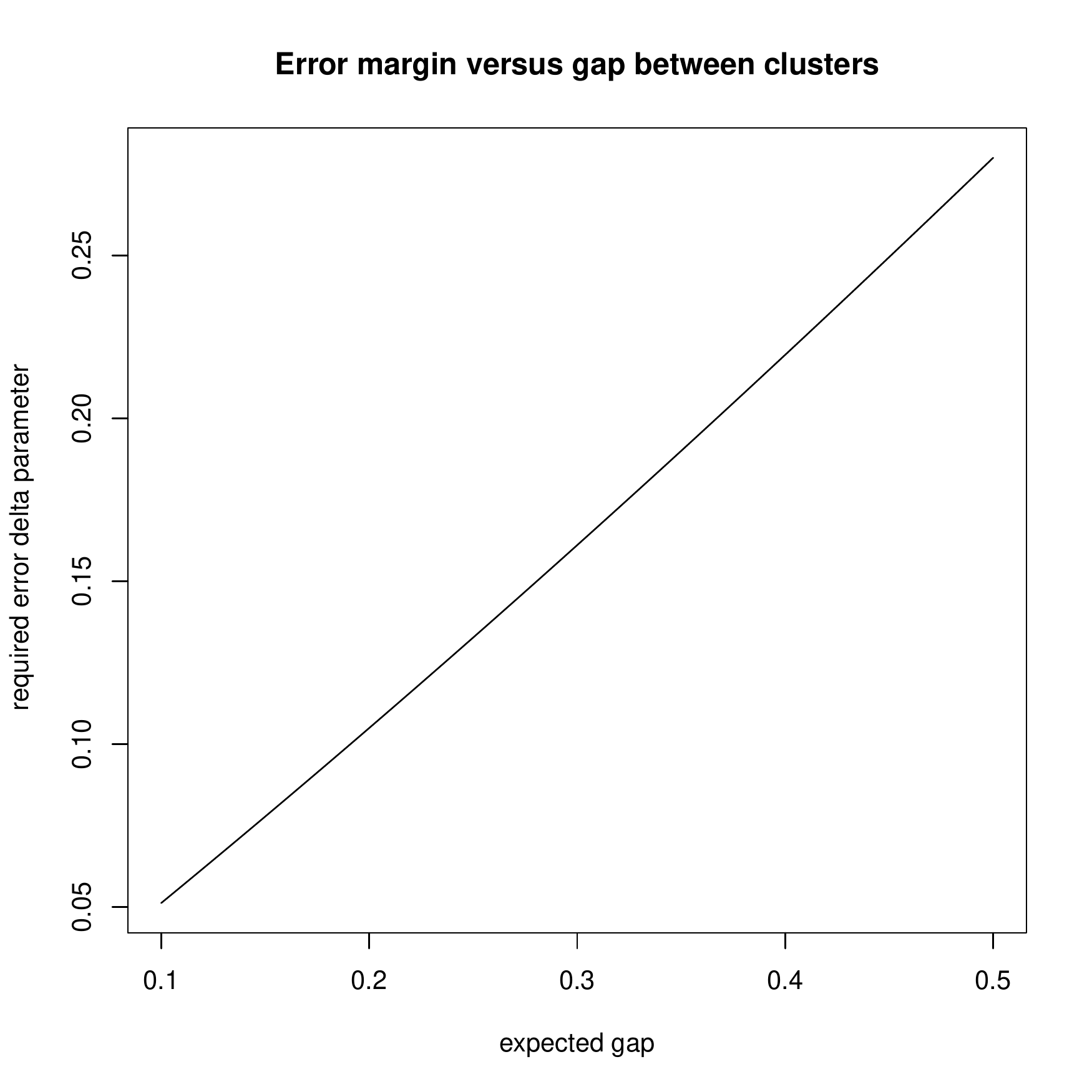} %
\caption{Permissible error range $\delta$ under various assumed gaps between the clusters.}
\label{fig:gap_delta}
\end{figure} % R_delta_gap.TXT
%-------------------------------

As the application for $k$-means clustering, we see in fig. 
 \ref{fig:gap_delta} that the bigger the relative gap between clusters, the larger the error value $\delta$ is permitted, if class membership shall not be distorted by the projection.

%--------------exp data---------------------------

%-------------------TAB------------------- 
\begin{table} 
\caption{Comparison of effort needed for $k$-means under our 
dimensionality reduction approach and that of 
Dasgupta and Gupta \cite{Dasgupta:2003}, depending on   sample size $m$ . Other parameters fixed at  $\epsilon$= 0.01  $\delta$= 0.05  $n$= 5e+05}
\label{tab:DGcmpsamplesize}
\begin{center}  
\begin{tabular}{|r|r|r|r|r|r|} 
\hline  
$m$ & $n'$ explicit &$n'$ implicit & Gupta $n'$ & Their Repetitions &Our $n'$ to their $n'$\\ 
\hline  
  10 &  15226 &  14209 &  3879 &  44 &  3.7 \\
   20 &  17518 &  16389 &  5046 &  90 &  3.3 \\
   50 &  20547 &  19191 &  6589 &  228 &  3 \\
   100 &  22839 &  21269 &  7757 &  459 &  2.8 \\
   200 &  25131 &  23323 &  8924 &  919 &  2.7 \\
   500 &  28160 &  26016 &  10467 &  2301 &  2.5 \\
   1000 &  30452 &  28030 &  11635 &  4603 &  2.5 \\
   2000 &  32744 &  30027 &  12802 &  9209 &  2.4 \\
   5000 &  35773 &  32648 &  14345 &  23024 &  2.3 \\
   10000 &  38065 &  34609 &  15513 &  46050 &  2.3 \\
   20000 &  40357 &  36554 &  16680 &  92102 &  2.2 \\
   50000 &  43386 &  39097 &  18223 &  230257 &  2.2 \\
   1e+05 &  45678 &  41017 &  19391 &  460515 &  2.2 \\
   2e+05 &  47970 &  42910 &  20558 &  921032 &  2.1 \\
   5e+05 &  50999 &  45392 &  22101 &  2302583 &  2.1 \\
   1e+06 &  53291 &  47250 &  23269 &  4605168 &  2.1 \\
   2e+06 &  55582 &  49099 &  24436 &  9210339 &  2.1 \\
   5e+06 &  58612 &  51515 &  25979 &  23025849 &  2 \\
   1e+08 &  68516 &  59243 &  31025 &  460517014 &  2 \\
   2e+07 &  63195 &  55127 &  28314 &  92103402 &  2 \\
   5e+07 &  66225 &  57480 &  29857 &  230258508 &  2 \\
   1e+08 &  68516 &  59243 &  31025 &  460517014 &  2 \\
\hline  
\end{tabular}  
\end{center}  
\end{table}

%-------------------TAB------------------- 
\begin{table} 
\caption{Comparison of effort needed for $k$-means under our 
dimensionality reduction approach and that of 
Dasgupta and Gupta \cite{Dasgupta:2003}, depending on   failure prob. $\epsilon$ . Other parameters fixed at  m= 2e+06  $\delta$= 0.05  $n$= 5e+05}
\label{tab:DGcmpepsilon}
\begin{center}  
\begin{tabular}{|r|r|r|r|r|r|} 
\hline  
$\epsilon$ & $n'$ explicit &$n'$ implicit & Gupta $n'$ & Their Repetitions &Our $n'$ to their $n'$\\ 
\hline  
  0.1 &  51776 &  46020 &  24436 &  4605170 &  1.9 \\
   0.05 &  52922 &  46955 &  24436 &  5991464 &  2 \\
   0.02 &  54437 &  48180 &  24436 &  7824045 &  2 \\
   0.01 &  55582 &  49099 &  24436 &  9210339 &  2.1 \\
   0.005 &  56728 &  50014 &  24436 &  10596633 &  2.1 \\
   0.002 &  58243 &  51221 &  24436 &  12429214 &  2.1 \\
   0.001 &  59389 &  52134 &  24436 &  13815508 &  2.2 \\
\hline  
\end{tabular}  
\end{center}  
\end{table}

%-------------------TAB------------------- 
\begin{table} 
\caption{Comparison of effort needed for $k$-means under our 
dimensionality reduction approach and that of 
Dasgupta and Gupta \cite{Dasgupta:2003}, depending on   error range $\delta$ . Other parameters fixed at  m= 2e+06  $\epsilon$= 0.01  $n$= 5e+05}
\label{tab:DGcmpdelta}
\begin{center}  
\begin{tabular}{|r|r|r|r|r|r|} 
\hline  
$\delta$ & $n'$ explicit &$n'$ implicit & Gupta $n'$ & Their Repetitions &Our $n'$ to their $n'$\\ 
\hline  
  0.5 &  712 &  697 &  465 &  9210339 &  1.5 \\
   0.4 &  1059 &  1032 &  605 &  9210339 &  1.8 \\
   0.3 &  1787 &  1745 &  922 &  9210339 &  1.9 \\
   0.2 &  3804 &  3692 &  1814 &  9210339 &  2.1 \\
   0.1 &  14339 &  13640 &  6449 &  9210339 &  2.2 \\
   0.09 &  17593 &  16631 &  7874 &  9210339 &  2.2 \\
   0.08 &  22128 &  20742 &  9857 &  9210339 &  2.2 \\
   0.07 &  28721 &  26604 &  12736 &  9210339 &  2.1 \\
   0.06 &  38846 &  35329 &  17150 &  9210339 &  2.1 \\
   0.05 &  55582 &  49099 &  24436 &  9210339 &  2.1 \\
   0.04 &  86291 &  72387 &  37783 &  9210339 &  2 \\
   0.03 &  152415 &  115298 &  66478 &  9210339 &  1.8 \\
   0.02 &  340701 &  201059 &  148048 &  9210339 &  1.4 \\
   0.01 &  1353858 &  1353859 &  586209 &  9210339 &  2.4 \\
\hline  
\end{tabular}  
\end{center}  
\end{table}

%-------------------TAB------------------- 
\begin{table} 
\caption{Comparison of effort needed for $k$-means under our 
dimensionality reduction approach and that of 
Dasgupta and Gupta \cite{Dasgupta:2003}, depending on   original dimensionality $n$ . Other parameters fixed at  m= 2e+06  $\epsilon$= 0.01  $\delta$= 0.05}
\label{tab:DGcmporign}
\begin{center}  
\begin{tabular}{|r|r|r|r|r|r|} 
\hline  
$n$ & $n'$ explicit &$n'$ implicit & Gupta $n'$ & Their Repetitions &Our $n'$ to their $n'$\\ 
\hline  
  4e+05 &  55582 &  47891 &  24436 &  9210339 &  2 \\
   5e+05 &  55582 &  49099 &  24436 &  9210339 &  2.1 \\
   6e+05 &  55582 &  49933 &  24436 &  9210339 &  2.1 \\
   7e+05 &  55582 &  50551 &  24436 &  9210339 &  2.1 \\
   8e+05 &  55582 &  51025 &  24436 &  9210339 &  2.1 \\
   9e+05 &  55582 &  51399 &  24436 &  9210339 &  2.2 \\
   1e+06 &  55582 &  51703 &  24436 &  9210339 &  2.2 \\
\hline  
\end{tabular}  
\end{center}  
\end{table} 
 % R_nprimecomputations_contraRepetitions.txt

%-------------------------------

\section{Previous work}\label{sec:previousWork}

Note that if we would set $\epsilon$ (close) to 1, 
and expand by Taylor method the $\ln$ function in denominator
of the inequality (\ref{eq:nprime_computation}) 
 to up to three terms
then we get the value of $n'$ from equation (2.1) from the paper \cite{Dasgupta:2003}: 
$$n'\ge 4\frac{\ln m}{\delta^2-\delta^3}$$

 Note, however, that setting $\epsilon$ to a value close to 1 does not make sense as we want to keep rare the event that the data does not fit the interval we are imposing.

Though one may be tempted to view our results as formally similar to those of Dasgupta and Gupta, there is one major difference. 
Let us first recall that the original proof of Johnson and Lindenstrauss~\cite{Johnson:1982} is probabilistic, showing that projecting the
$m$ -point subset onto a random subspace of
$O
(\ln
m
/\epsilon^2
)$ dimensions only changes
the  (squared) distances between points by at most $1-\delta$   with positive probability.
Dasgupta and Gupta showed that this probability is at least $1/m$, which is not much indeed. 
In order to get failure probability $\epsilon$ below say 0.05\%,
one needs to repeat the random projection and checking of distances 
$r$ times, with such $r$ that 
$\epsilon > (1-\frac 1m)^r$. 
In case of $m=1,000$ this means 
over $r=2,995$  repetitions, 
and with $m=1,000,000$ - over $r=2,995,000$  repetitions,

In this paper 
we have shown that this success probability can be raised to $1-\epsilon$ for an $\epsilon$ given in advance.
Hereby the increase of target dimensionality is small enough compared to Dasgupta and Gupta formula, 
that our random projection method is orders of magnitude more efficient. 
A detailed comparison is contained in the tables
\ref{tab:DGcmpsamplesize},
\ref{tab:DGcmpepsilon},
\ref{tab:DGcmpdelta},
\ref{tab:DGcmporign}.
We present in these tables $n'$ computed using our formulas with those proposed by Dasgupta and Gupta
as well as we present the required number of repetition of projection onto sampled subspaces in order to obtain a faithful distance discrepancies with reasonable probability. 
Dasgupta and Gupta generally  obtain several times lower number of dimensions.
However, as stated in the introduction, the number of repeated samplings annihilates this advantage and in fact a much higher burden when clustering is to be expected.

Note that the choice of $n'$ has been estimated by \cite{Achlioptas:2003} 
$$n'\ge (4+2\gamma)\frac{\ln m}{\delta^2-\delta^3}$$
 where $\gamma$ is some positive number. They propose a projection based on two or three discrete values randomly assigned  instead of ones from normal distribution. With the quantity $\gamma$ they control the probability that a single element of the set $Q$ leaves the predefined interval $\pm \delta$. 
They do not bother about controlling the probability that none of the elements leaves the interval of interest. Rather, they derive expected values of various moments. 

 Larsen and Nelson \cite{Larsen:2016} concentrate on finding the highest value of $n'$ for which Johnson-Lindenstrauss Lemma does not hold demonstrating that the value they found is the tightest even for non-linear mappings $f$. 
Though not directly related to our research, they discuss the other side of the coin, that is the dimensionality below which at least one point of the data set has to violate the constraints.

\section{Conclusions}\label{sec:conclusions}

In this paper we investigated a novel aspect of the well known and widely explored and exploited Johnson-Lindenstrauss lemma on the possibility of dimensionality reduction by projection onto a random subspace. 

The original formulation means in practice that we have to check whether or not we have found a proper transformation $f$ leading to error bounds within required range for all pairs of points, and if necessary (and it is theoretically necessary very frequently), to repeat the random projection process over and over again. 

We have shown here that it is possible to determine in advance the choice of dimensionality in the random projection process as to assure with desired certainty that none of the points of the data set violates restrictions on error bounds. 
This new formulation can be of importance for many data mining applications, like clustering, where the distortion of distances influences the results in a subtle way (e.g. $k$-means clustering).

Via some numerical examples we have pointed at the real application areas  of this kind of projections,   that is 
problems with high number of dimensions, starting with dozens of thousands  and hundreds of thousands of dimensions. 

Additionally, our reformulation of the JL Lemma permits to preserve some well-known clusterability properties at the projection.

\bibliographystyle{plain}
\bibliography{ProbabilisticJohnsonLindenstrauss_bib}
\end{document}